\keywords{graph transformation, termination, pullback pushout}
\colorlet{dblue}{blue!40!black}
\newcommand{\II}{\catname{I}}
\newcommand{\scala}[1]{\mintinline{scala}{#1}}
\newcommand{\java}[1]{\mintinline{java}{#1}}
\newenvironment{code}{\captionsetup{type=listing}}{}
\newcommand{\graphIndex}{%
	\,\begin{tikzpicture}[baseline=-.6ex]
		\node (x) [circle,inner sep=0,outer sep=1mm,minimum size=0.7mm,fill=black] {};
		\node (y) [circle,inner sep=0,outer sep=1mm,minimum size=0.7mm,fill=black, right of=x, xshift=-3mm] {};
		\draw [->] ([yshift=.7mm]x.east) to ([yshift=.7mm]y.west);
		\draw [->] ([yshift=-.7mm]x.east) to ([yshift=-.7mm]y.west);
	\end{tikzpicture}\,%
}
\newcommand{\functorcat}[2]{[#1, #2]}
\newcommand{\dpoLL}[1]{DPO$^{{#1}}$}
\theoremstyle{definition}
\theoremstyle{definition}\newtheorem{notation}[thm]{Notation}
\newcommand{\pbpostrong}{PBPO$^{+}$\xspace}
\newcommand{\pbpopluspdf}{\texorpdfstring{\pbpostrong{}}{PBPO+}}
\newcommand{\catname}[1]{{\normalfont\textbf{#1}}\xspace}
\newcommand{\FinSet}{\catname{FinSet}}
\newcommand{\Set}{\catname{Set}}
\newcommand{\Graph}{\catname{Graph}}
\newcommand{\FinGraph}{\catname{FinGraph}}
\newcommand{\labels}{\mathcal{L}}
\newcommand{\CC}{\catname{C}}
\newcommand{\AR}{\mathcal{A}}
\newcommand{\ARset}[2]{\AR(#1, #2)}
\newcommand{\sys}{\mathcal{T}}
\newcommand{\Iso}{\mathrm{iso}}
\newcommand{\Mono}{\mathrm{mono}}
\newcommand{\Reg}{\mathrm{rm}}
\newcommand{\Hom}{\mathrm{Hom}}
\newcommand{\ClassC}[2]{{#1(#2)}}
\newcommand{\HomC}[1]{\ClassC{\Hom}{#1}}
\newcommand{\MonoC}[1]{\ClassC{\Mono}{#1}}
\newcommand{\IsoC}[1]{\ClassC{\Iso}{#1}}
\newcommand{\RegC}[1]{\ClassC{\Reg}{#1}}
\newcommand{\ARC}[1]{\ClassC{\AR}{#1}}
\newcommand{\id}[1]{\mathrm{1}_{#1}}
\newcommand{\inverse}[1]{{{#1}^{{-1}}}}
\newcommand{\obj}[1]{\mathrm{Ob}(#1)}
\newcommand{\homset}[2]{\Hom(#1,#2)}
\newcommand{\homsetmono}[2]{\Mono(#1,#2)}
\let\emptyset\varnothing
\newcommand{\card}[1]{|#1|}
\newcommand{\mono}{\rightarrowtail}
\newcommand{\epi}{\twoheadrightarrow}
\newcommand{\leftmono}{\leftarrowtail}
\newcommand{\iso}{\cong}
\newcommand{\leftto}{\leftarrow}
\newcommand{\dom}{\mathit{dom}}
\tikzset{default/.style={
  thick,
  every node/.style={circle},
  level distance=12mm, 
  inner sep=.5mm}}
\tikzset{smallCircle/.style={circle,fill=black,inner sep=0mm,outer sep=1mm,minimum size=1mm}}
\tikzset{paint/.style={very thick,draw=#1!50!black,fill=#1,opacity=.4}}
\tikzset{paintopaque/.style={very thick,draw=#1!50!black!60,fill=#1!60}}
\tikzset{loop/.style={out=-30+#1,in=30+#1,distance=2.7em,pos=0.5}}
\tikzset{thinloop/.style={out=-20+#1,in=20+#1,distance=2.7em,pos=0.5}}
\tikzset{emptystep/.style={-,dotted,line cap=round,dash pattern=on 0 off 3.00000}}
\tikzset{b/.style={anchor=north,at=(#1.south)}}
\tikzset{br/.style={anchor=north west,at=(#1.south east)}}
\tikzset{bl/.style={anchor=north east,at=(#1.south west)}}
\tikzset{bw/.style={anchor=north west,at=(#1.south west)}}
\tikzset{be/.style={anchor=north east,at=(#1.south east)}}
\tikzset{a/.style={anchor=south,at=(#1.north)}}
\tikzset{ar/.style={anchor=south west,at=(#1.north east)}}
\tikzset{al/.style={anchor=south east,at=(#1.north west)}}
\tikzset{aw/.style={anchor=south west,at=(#1.north west)}}
\tikzset{ae/.style={anchor=south east,at=(#1.north east)}}
\tikzset{r/.style={anchor=west,at=(#1.east)}}
\tikzset{l/.style={anchor=east,at=(#1.west)}}
\tikzset{rn/.style={anchor=north west,at=(#1.north east)}}
\tikzset{eta/.style={very thick,->,cblue!90!black}}
\tikzset{beta/.style={very thick,->,corange!80!white!90!black}}
\tikzset{devcirc/.style={circle,draw,fill=white,inner sep=0,minimum size=4\pgflinewidth}}
\tikzset{dev/.style={postaction={decorate},decoration={
  markings,
  mark=at position .5 with \node [devcirc] {};}}
}
\tikzset{medium tree/.style={
    level 1/.style={sibling distance=17mm},
    level 2/.style={sibling distance=9mm},
    level 3/.style={sibling distance=5mm},
    level 4/.style={sibling distance=4mm},
  }}
\tikzset{startAt/.style={inner sep=0mm,r=#1,xshift=-1.2mm,yshift=.8mm}}
\tikzset{graphNode/.style={circle,draw=black,inner sep=.5mm,outer sep=.5mm}}
\tikzset{morph/.style={thick,-{>[width=1.7mm]}}}
\tikzset{mono/.style={thick,{>[width=1.7mm]}-{>[width=1.7mm]}}}
\tikzset{edge/.style={line width=0.2mm,-{Triangle[width=1.4mm]}}}
\newcommand{\arrowTriangle}[2]{
  \pgfpathmoveto{\pgfpoint{-0.01#1+#2}{.6#1}}
  \pgfpathlineto{\pgfpoint{#1+#2}{0}}
  \pgfpathlineto{\pgfpoint{-0.01#1+#2}{-.6#1}}
  \pgfusepathqfill
}
\newdimen\prearrowsize
\newdimen\arrowsize
\newdimen\temparrowsize
\newcommand{\arrowscale}{5}
\newcommand{\setarrowsize}{
  \arrowsize=0.000000001pt
  \prearrowsize=\arrowscale\pgflinewidth
  \normalizearrowsize
}
\newcommand{\normalizearrowsize}{
  \ifdim\prearrowsize>2mm
    \addtolength{\arrowsize}{2mm}
    \addtolength{\prearrowsize}{-2mm}
    \temparrowsize=0.5\prearrowsize
    \prearrowsize=\temparrowsize
  \else
  \fi

  \addtolength{\arrowsize}{\prearrowsize}
}
  \arrowTriangle{\arrowsize}{-0.1\arrowsize}
  \arrowTriangle{\arrowsize}{-0.1\arrowsize}
  \arrowTriangle{\arrowsize}{-0.1\arrowsize+.8\arrowsize}
  \arrowTriangle{\arrowsize}{-0.1\arrowsize}
  \arrowTriangle{\arrowsize}{-0.1\arrowsize+.8\arrowsize}
  \arrowTriangle{\arrowsize}{-0.1\arrowsize+1.6\arrowsize}
\tikzstyle{gyellow}=[draw=black!80,top color=white!50,bottom color=black!20]
\tikzstyle{gblue}=[draw=blue!50,top color=white,bottom color=blue!60]
\tikzstyle{gred}=[draw=red!50,top color=white,bottom color=red!60]
\tikzstyle{ggreen}=[draw=blue!80!green!90!black,top color=white,bottom color=blue!80!green!60]
\tikzstyle{roundNode}=[gyellow,thick,circle,minimum size=4mm,inner sep=0.5mm]
\definecolor{cblue}{rgb}{0,0.4,0.7}
\definecolor{clighterblue}{rgb}{0,0.6,1.0}
\colorlet{cred}{red}
\colorlet{cgreen}{green!80!black}
\colorlet{corange}{orange!70!red}
\colorlet{cpureorange}{orange}
\colorlet{cpurple}{clighterblue!50!cred}
\colorlet{clightblue}{clighterblue!50!cblue!40}
\colorlet{clightred}{cred!40}
\colorlet{clightgreen}{cgreen!80!cblue!40}
\colorlet{clightyellow}{corange!40!yellow!50}
\colorlet{clightorange}{cred!50!orange!40}
\colorlet{clightpurple}{clighterblue!50!cred!50}
\colorlet{cdarkred}{cred!70!black}
\colorlet{cdarkgreen}{cgreen!60!black}
\colorlet{cdarkblue}{cblue!60!black}
\colorlet{chighlight}{orange!50!yellow!60}
\tikzset{pgnode/.style={smallCircle,fill=white,draw=black,minimum size=1.3mm,outer sep=0.5mm}}
\tikzset{pgnodecolor/.style={pgnode,minimum size=4mm,scale=0.9}}
\tikzset{pgnodebig/.style={roundNode,gyellow,outer sep=1mm}}
\tikzset{pgrelation/.style={ultra thick,cblue!80!black,decorate,decoration={snake,amplitude=.4mm,segment length=4mm}}}
\tikzset{exi/.style={densely dotted}}
\tikzset{decweak/.style={-,green!50!black,very thick,opacity=0.7}}
\tikzset{decstrict/.style={->,orange,very thick,opacity=0.7}}
\tikzset{graphNode/.style={circle,draw=black,inner sep=.5mm,outer sep=1mm}}
\tikzset{sloop/.style={looseness=7}}
\tikzset{interconnect/.style={dotted,cred}}
\tikzset{match/.style={cdarkgreen,very thick}}
\tikzset{jigsaw/.style={circle,draw=black,minimum size=2mm,fill=white,minimum size=3.5mm}}
\colorlet{myblue}{blue!80!black}
\colorlet{mygreen}{cdarkgreen}
\colorlet{myred}{cred}
\colorlet{myorange}{corange}
\colorlet{mypurple}{blue!40!cred}
\tikzset{smalljigsaw/.style={rectangle,rounded corners=3mm,inner sep=1mm}}
\tikzset{epat/.style={thick}}
\tikzset{eset/.style={densely dotted}}
\tikzset{npattern/.style={rectangle,rounded corners=2mm,draw=black,inner sep=0.5mm,outer sep=.5mm,minimum size=4.5mm}}
\tikzset{nset/.style={npattern,draw=black,fill=white,densely dotted}}
\tikzset{label/.style={scale=0.85,inner sep=0,outer sep=0.5mm}}
\tikzset{short/.style={node distance=10mm}}
\newcommand{\graphbox}[8]{
  \begin{scope}[xshift=#2,yshift=#3]
    \draw [rounded corners=2mm] (0,0) rectangle (#4,-#5);
    \node at (0,0mm) [anchor=north west,inner sep=1mm] {#1};
    \begin{scope}[xshift=#4/2+#6,yshift=#7] 
    #8
    \end{scope}
  \end{scope}
}
\newcommand{\graphboxx}[8]{
  \begin{scope}[xshift=#2,yshift=#3]
    \draw [rounded corners=2mm] (0,0) rectangle (#4,-#5);
    \node (l) at (0,0mm) [anchor=north west,inner sep=1mm] {$t_{#1}$};
    \begin{scope}[xshift=#4/2+#6,yshift=#7] 
    #8
    \end{scope}
  \end{scope}
}
\newcommand{\graphboxy}[8]{
  \begin{scope}[xshift=#2,yshift=#3]
    \draw [rounded corners=2mm] (0,0) rectangle (#4,-#5);
    \node (l) at (0,0mm) [anchor=north west,inner sep=1mm] {$t_{#1}$};
    \begin{scope}[xshift=#4/2+#6,yshift=#7] 
    #8
    \end{scope}
  \end{scope}
}
\newcommand{\vertex}[2]{%
  \begin{tikzpicture}[baseline=-1ex]%
    \node [rectangle,rounded corners=2mm,inner sep=0.5mm,fill=#2] {$#1$};%
  \end{tikzpicture}%
}
\newcommand{\ssrc}{\mathit{s}}
\newcommand{\stgt}{\mathit{t}}
\tikzset{styleNode/.style={graphNode,rectangle,rounded corners=2mm,fill=white,inner sep=1.5mm,outer sep=0}}
\tikzset{styleEdge/.style={->,shorten <= 2mm,shorten >= 2mm}}
\tikzset{styleEdgeNode/.style={rectangle,rounded corners=2mm,fill=white,draw=none,outer sep=0,inner sep=1mm,pos=0.45}}
\newcommand{\lbl}{\ell}
\newcommand{\lble}{\lbl^E}
\newcommand{\nat}{\mathbb{N}}
\newcommand{\MM}{\mathcal{M}}
\newcommand{\regmono}{\hookrightarrow}
\newcommand{\pb}[2]{\langle#1\mid#2\rangle}
\providecommand{\leftsquigarrow}{%
  \mathrel{\mathpalette\reflect@squig\relax}%
}
\newcommand{\reflect@squig}[2]{%
  \reflectbox{$\m@th#1\rightsquigarrow$}%
}
\tikzset{AR/.style={line join=round,
decorate, decoration={
    zigzag,
    segment length=4,
    amplitude=.9,post=lineto,
    post length=2pt
}}}
\newcommand{\mysidepicture}[5]{
    \nointerlineskip\noindent
    \begin{tikzpicture}[overlay]%
      \node at (\textwidth,0) [anchor=north east,rectangle,inner sep=0,outer sep=0,xshift=#2,yshift=#3] {#4};%
    \end{tikzpicture}%
    \begin{adjustwidth}{0cm}{#1}%
    #5
    \end{adjustwidth}%
}
\newcommand{\pbpostrongsteprule}[3]{
	{#1} \Rightarrow_{\mathrm{PBPO}^{+}}^{#3} {#2}
}
\colorlet{linkcolor}{red!60!black}
\begin{document}

\author[R.~Overbeek]{Roy Overbeek\lmcsorcid{0000-0003-0569-0947}}
\author[J.~Endrullis]{J\"org Endrullis\lmcsorcid{0000-0002-2554-8270}}

\address{Vrije Universiteit Amsterdam, De Boelelaan 1105, 1081 HV Amsterdam}

\email{r.overbeek@vu.nl, j.endrullis@vu.nl}

\title[Termination of Graph Transformation Using Weighted Subgraph Counting]{Termination of Graph Transformation Systems\texorpdfstring{\\}{} Using Weighted Subgraph Counting\rsuper*}
\titlecomment{{\lsuper*} This paper extends and improves the identically titled paper~\cite{overbeek2023termination} published at ICGT2023.}

\begin{abstract}
    We introduce a termination method for the algebraic graph transformation framework \pbpostrong{}, in which we weigh objects by summing a class of weighted morphisms targeting them.
    The method is well-defined in rm-adhesive quasitoposes (which include toposes and therefore many graph categories of interest), and is applicable to non-linear rules.
    The method is also defined for other frameworks, including SqPO and left-linear DPO, because we have previously shown that they are naturally encodable into \pbpostrong{} in the quasitopos setting. We have implemented our method, and the implementation includes a REPL that can be used for guiding relative termination proofs.
\end{abstract}

\maketitle

\section{Introduction}
\label{section:introduction}

Many fields of study related to computation have mature termination theories. See, for example, the corpus for (first-order) term rewriting systems~\cite[Chapter 6]{terese} (for a more recent (but less systematic) discussion, see, e.g., \cite{yamada2022tuple}).

For the study of graph transformation, by contrast, not many termination methods exist, and the ones that do exist are usually defined for rather specific notions of graphs. Although the techniques themselves can be interesting, the latter observation fits somewhat uneasily with the general philosophy of the algebraic graph transformation tradition, in which graph transformations are defined and studied in a graph-agnostic manner.

In this chapter, we introduce a termination method for \pbpostrong{}. We weigh objects $G$ by summing a class of weighted elements (i.e., morphisms of the form $T \to G$), and construct a decreasing measure. Our method enjoys generality across two dimensions:
\begin{enumerate}
	\item The method is formulated completely in categorical terms, and is well-defined in (locally finite) rm-adhesive quasitoposes.
	\item The method is also defined for SqPO~\cite{corradini2006sesqui}, AGREE~\cite{corradini2020algebraic}, PBPO~\cite{corradini2019pbpo}, and left-linear DPO~\cite{ehrig1973graph}. This is because we have recently shown that, in the quasitopos setting, each rule of these formalisms can be straightforwardly encoded as a \pbpostrong{} rule that generates the same rewrite relation~\cite[Theorem 73]{overbeek2023quasitoposes}.
\end{enumerate}

To the best of our knowledge, this is the first termination method applicable in such a broad setting; and the first method that is automatically defined for a variety of well-known algebraic graph transformation frameworks. In addition, the termination method can be applied to non-linear (duplicating) rules.

This paper is structured as follows. We summarize some basic categorical and termination preliminaries (Section~\ref{section:preliminaries}). We then provide some useful background on quasitoposes, and required background on \pbpostrong{} (Section~\ref{section:background}). Next, we explain and prove our termination method (Section~\ref{section:decreasingness}). After, we illustrate our method with a variety of examples (Section~\ref{section:examples}), and compare our approach to related work (Section~\ref{section:related:work}). We then present an implementation of our method, which includes a REPL (read-eval print loop) that can be used for guiding relative termination proofs (Section~\ref{section:implementation}). We close with some concluding remarks and pointers for future work (Section~\ref{section:conclusion}).

\begin{rem}
	This paper extends and improves the identically titled conference paper~\cite{overbeek2023termination}. The most noteworthy differences are as follows:
	\begin{enumerate}
		\item All previously elided proofs, Example~\ref{example:plump:string}, and Example~\ref{ex:plump:jungle:eval:hypergraph} are now included. Previously they were available on arXiv~\cite{overbeek2023termination-arxiv} only due to space restrictions.
		\item The background survey on quasitoposes (Section~\ref{sec:background:quasitoposes}) is newly included, in order to make the paper more self-contained, and to improve the motivation.
		\item Instead of requiring the existence of (essentially unique) $\AR$-factorizations~\cite[Definition 9]{overbeek2023termination}, we use a simpler and weaker condition (Definition~\ref{def:morphism:preserves:factorization}), which we then relate to standard factorization systems (Proposition~\ref{prop:preserving:a:factorization:and:factorization:systems}).
		\item Section~\ref{section:implementation} contains completely new material (and the implementation described therein is also new).
	\end{enumerate}
	The new additions include previously unpublished material taken from the PhD thesis of the first author~\cite{overbeek2024phdthesis}.
\end{rem}

\section{Preliminaries}
\label{section:preliminaries}

\newcommand{\ARar}{\rightsquigarrow}
\newcommand{\leftARar}{\leftsquigarrow}

The preliminaries for this paper include basic categorical and graph notions (Section~\ref{section:general:notions}), and a basic understanding of termination (Section~\ref{section:termination}).  

\subsection{Basic Notions}
\label{section:general:notions}

We assume familiarity with basic categorical notions such as (regular) monomorphisms, pullbacks and pushouts~\cite{barr1990category,pierce1991basic}. We write $\mono$ for monos; and $\HomC{\CC}$, $\MonoC{\CC}$, $\RegC{\CC}$ and $\IsoC{\CC}$ for the classes of morphisms, monomorphisms, regular monomorphisms and isomorphisms in $\CC$, respectively.

\begin{notation}[Nonstandard Notation]
	\label{notation:nonstandard:notation}
	Given a class of morphisms $\ARC{\CC}$, we write $\ARset{A}{B}$ to denote the collection of $\AR$-morphisms from $A$ to $B$, leaving $\CC$ implicit. For sets of objects $S$, we overload $\ARset{S}{A}$ to denote $\bigcup_{X \in S} \ARset{X}{A}$. If $\ARC{\CC}$ is a generic class in lemmas, we use $\ARar$ to denote $\AR$-morphisms.
	
	For cospans $A \stackrel{f}{\to} C \stackrel{g}{\leftarrow} D$, we write $\pb{f}{g}$ to denote the arrow $B \to D$ obtained by pulling $f$ back along $g$.
\end{notation}

\begin{defi}[$\AR$-Local Finiteness]
	Let $\ARC{\CC}$ be a class of morphisms. A category $\CC$ is \emph{$\AR$-locally finite} if $\ARset{A}{B}$ is finite for all $A,B \in \obj{\CC}$.
\end{defi}

\begin{lem}[{Pullback Lemma~\cite[Proposition 2.5.9]{borceux1994handbook1}}]
	\label{lemma:pullback:lemma}
	Assume the right square of
	\begin{center}
		\begin{tikzcd}[row sep=3mm, column sep=5mm,ampersand replacement=\&]
			A \arrow[r] \arrow[d] \& B \arrow[d] \arrow[r]                        \& C \arrow[d] \\
			D \arrow[r]           \& E \arrow[r] \arrow[ru, "\mathrm{PB}", phantom] \& F          
		\end{tikzcd}%
	\end{center}
	is a pullback and the left square commutes. Then the outer rectangle (obtained by composing the horizontal morphisms) is a pullback iff the left square is a pullback. \qed
\end{lem}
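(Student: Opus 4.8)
The plan is a pure diagram chase using only the universal property of pullbacks; no special structure of the ambient category is needed, so I would prove the lemma at the level of an arbitrary category. Write the top row as $A \to B \to C$ and the bottom row as $D \to E \to F$, connected by three vertical maps, and abbreviate the composite horizontal arrows as $A \to C$ and $D \to F$. The key hypotheses are that the right square is a pullback and the left square commutes; I will use the commutativity of the left square as the glue between the two halves, as it is exactly what separates this lemma from a naive pasting claim.

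For the direction assuming the left square is also a pullback, I would verify the universal property of the outer rectangle directly. Given a competing cone $x\colon X \to D$ and $y\colon X \to C$ with $(D\to F)\circ x = (C \to F)\circ y$, I would first push $x$ down into $E$ and pair $(D\to E)\circ x$ with $y$; the compatibility condition shows this is a cone over the right square, so it factors uniquely through a map $v\colon X \to B$. Pairing $v$ with the original $x$ then gives a cone over the \emph{left} square (the needed equation $(D\to E)\circ x = (B\to E)\circ v$ is the factorization property just obtained), which factors uniquely through some $u\colon X \to A$. Checking that $u$ reproduces $x$ and $y$ is immediate, and uniqueness follows by reversing the construction: any mediator into $A$ induces via $A \to B$ a mediator into $B$ that is forced by uniqueness in the right pullback, after which uniqueness in the left pullback pins down the map into $A$.

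For the converse, assuming the outer rectangle is a pullback, I would start from a cone $x\colon X \to D$, $v\colon X \to B$ over the left square and compose $v$ with $B \to C$ to build a cone over the outer rectangle; here I would use commutativity of the right square to verify the compatibility equation $(D\to F)\circ x = (C \to F)\circ((B\to C)\circ v)$. This produces a unique mediator $u\colon X \to A$ reproducing the $D$- and $C$-components.

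The main obstacle, and the only genuinely non-routine point, lies in this converse: the mediator $u$ is only guaranteed to recover the $D$- and $C$-components, so I must still show that $(A \to B)\circ u$ equals the given $v$. I would establish this by checking that these two maps $X \to B$ agree after postcomposing with both $B \to C$ (immediate from the mediator property) and $B \to E$ (using commutativity of the left square together with the original cone condition $(D\to E)\circ x = (B\to E)\circ v$), and then invoking uniqueness in the right square, \emph{which is a pullback by hypothesis}. Uniqueness of the mediator then follows at once from uniqueness in the outer pullback. Thus the delicate step is not constructing the mediators but recovering the middle $B$-component in the converse direction, where the right pullback must be used as a cancellation principle.
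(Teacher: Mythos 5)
Your proof is correct and complete: both directions are handled properly, and you rightly isolate the one non-trivial point, namely recovering the $B$-component of the mediator in the converse direction by testing against both pullback projections of the right square and invoking its uniqueness property. The paper itself offers no proof of this lemma --- it simply cites Borceux (Proposition 2.5.9) and marks the statement with \qed --- and your argument is the standard diagram chase found there, so there is nothing to reconcile.
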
%

\newcommand{\rightinv}[1]{{#1^\leftarrow}}

\begin{defi}[Split Epimorphism]
	An epimorphism $e : A \epi B$ is \emph{split} if it has a right inverse, i.e., if there exists an $f : B \to A$ such that $e \circ f = \id{B}$.
	
	For split epimorphisms $e$, we let $\rightinv{e}$ denote an arbitrary right inverse of $e$.
\end{defi}

Observe that right inverses need not be unique. In \Set, for instance, the constant function $k : \{ 0, 1 \} \to \{ 0 \}$ has two right inverses.

\begin{propC}[{\cite[Prop.\ 7.59]{adamek2009joy}}]
	\label{prop:split:epi:facts}
	If $e$ is a split epi, then $\rightinv{e} \in \RegC{\CC}$.
	\qed
\end{propC}

Our method is defined fully in categorical terms.
For examples, and to guide intuition, we will use the category of edge-labeled multigraphs.

\begin{defi}[Graph Notions]
	\label{def:graph}
	Let a finite label set $\labels$ be fixed. An (edge-labeled) \emph{(multi)graph} $G$ consists of a set of vertices $V$, a set of edges $E$, source and target functions $\ssrc,\stgt : E \to V$, and an edge label function $\lble : E \to \labels$.
	A graph is \emph{unlabeled} if $\labels$ is a singleton.
	
	A \emph{homomorphism} between graphs $G$ and $G'$ is a pair of maps
	$
	\phi = (\phi_V : V_G \to V_{G'}, \phi_E : E_G \to E_{G'})
	$
	satisfying $(s_{G'}, t_{G'}) \circ \phi_E = \phi_V \circ (s_G, t_G)$ and $\lble_{G'} \circ \phi_E = \lble_{G}$.
\end{defi}

\begin{defiC}[{\cite{ehrig2006fundamentals}}]
	\label{definition:category:graph}
	The category $\Graph$ has graphs as objects, parameterized over some global (and usually implicit) label set $\labels$, and homomorphisms as arrows. The subcategory $\FinGraph$ restricts to graphs with finite $V$ and $E$.
\end{defiC}

\newcommand{\step}[5]{{#1} \mathrel{\Rightarrow_{#5}^{{#3}, {#4}}} {#2}}
\newcommand{\dpostep}[4]{\step{#1}{#2}{#3}{#4}{\mathrm{DPO}}}

Although our termination method is defined for \pbpostrong{}~(Section~\ref{sec:background:pbpoplus}), we will state a key result~(Theorem~\ref{thm:pbpostrong:models:others}) that involves (left-linear) DPO.

\begin{defi}[DPO Rewriting~\cite{ehrig1973graph}]
	\label{def:dpo}
	In its most general formulation, a \emph{DPO rewrite rule} $\rho$ is a span $L \stackrel{l}{\leftto} K \stackrel{r}{\to} R$, and a diagram 
	\begin{center}
		\begin{tikzcd}
			L \arrow[d, "m" description] & K \arrow[l, "l" description] \arrow[r, "r" description] \arrow[d] \arrow[ld, "\mathrm{PO}", phantom] \arrow[rd, "\mathrm{PO}", phantom] & R \arrow[d] \\
			G_L                                & G_K \arrow[l] \arrow[r]                                                                                                                       & G_R        
		\end{tikzcd}%
	\end{center}
	defines a \emph{DPO rewrite step} $\dpostep{G_L}{G_R}{\rho}{m}$, i.e., a step from $G_L$ to $G_R$ using rule $\rho$ and match $m : L \to G_L$. 
	
	In some versions of DPO, morphism $l$ is required to belong to some class of monomorphisms $\MM$, to ensure that pushout complements exist uniquely. We will refer to this restricted version of DPO as \emph{left-linear DPO}, and denote it by \dpoLL{\MM}. Additionally, morphism $m$ is often required to be in $\MM$ because it increases expressivity~\cite{habel2001doublerevisited}. 
\end{defi}

\subsection{Termination}
\label{section:termination}

The topic of termination dates back at least to Turing, and is studied in many different settings. For a systematic overview for term rewriting systems (not yet existent for graph transformation systems), see~\cite[Chapter 6]{terese} (for a more recent discussion, see, e.g., \cite{yamada2022tuple}). Plump has shown that termination of graph rewriting is undecidable~\cite{plump98termination}. 

\begin{defi}
	\label{def:sequence}
	Define $I_n = \{ m \in \nat \mid m < n \}$ for $n \in \nat \cup \{\omega\}$, where $\omega$ extends the natural ordering of $\nat$ with $k < \omega$ for all $k \in \nat$.
	
	Let $R \subseteq A \times A$ be given.  An \emph{$R$-sequence (of length $n$)} is a function $f: I_n \to A$ such that for all $i < n$, $(f(i), f(i+1)) \in R$. The pair $(f(i), f(i+1))$ is said to be the \emph{$(i+1)$-th step} in this sequence. The sequence is \emph{infinite} if $n = \omega$.
\end{defi}

\begin{defi}
	A binary relation $R$ is \emph{terminating} if there does not exist an infinite $R$-sequence.
\end{defi}

\begin{defiC}[{\cite{bachmair1986commutation,klop1987term}}]
	Let $R, S \subseteq A \times A$ be binary relations.
	Then $R$ is \emph{terminating relative to} $S$ if every infinite $R \cup S$-sequence contains a finite number of $R$ steps.
\end{defiC}

\newcommand{\tileset}[1]{\mathbb{T}}
\newcommand{\TT}{\mathbb{T}}
\newcommand{\ww}{\mathbf{wt}}
\newcommand{\wwTAR}[2]{\ww_{#1}^{#2}}

For our purposes, it suffices to measure objects as natural numbers     (instead of a general well-founded order).

\begin{defi}[Measure]
	A \emph{measure} is a function $\ww : A \to \mathbb{N}$. 
	The measure $\ww$ is \emph{decreasing} for a binary relation $R \subseteq A \times A$ if for all $(x, y) \in R$, $\ww(x) > \ww(y)$, and it is \emph{non-increasing} for $R$ if for all $(x, y) \in R$, $\ww(x) \ge \ww(y)$.
\end{defi}

\begin{propC}[{\cite{bachmair1986commutation,klop1987term}}]
	\label{prop:measure}
	Let $R, S \subseteq A \times A$ be binary relations.
	Assume that there exists a measure $\ww$ that is decreasing for $R$ and non-increasing for~$S$.
	Then $R$ is terminating relative to $S$.
	Consequently $R \cup S$ is terminating iff $S$ is. \qed
\end{propC}

In a framework agnostic setting, a rule $\rho$ is a mathematical object that induces a binary relation ${\Rightarrow_\rho} \subseteq A \times A$.
We say that a rule (or a system of rules) is terminating, decreasing or non-increasing if the induced rewrite relations have the respective property (and analogously for relative termination). Note that also Proposition~\ref{prop:measure} can then be applied to systems of rules in place of relations.

\section{Background}
\label{section:background}

\newcommand{\stepRule}[3]{{#1} \Rightarrow_{#3} {#2}}
\newcommand{\stepRuleAlpha}[4]{{#1} \Rightarrow_{#3}^{#4} {#2}}

\subsection{Quasitoposes}
\label{sec:background:quasitoposes}

Our termination method makes a number of assumptions about the underlying category and about certain morphism classes. One natural and relevant setting in which these assumptions are satisfied is an rm-adhesive quasitopos. For motivational reasons, we therefore provide a brief background on quasitoposes. 

\begin{defi}[{Quasitopos~\cite{wyler1991lecture, adamek2009joy,johnstone2002sketches}}]
	A category $\CC$ is a \emph{quasitopos} if it has all finite limits and colimits, it is locally cartesian closed, and it has a regular-subobject classifier.
\end{defi}

The definition of a quasitopos itself need not be understood: 	
we are interested in specific properties of quasitoposes, and in examples.
Quasitoposes can also be understood in terms of the more familiar toposes.

\begin{prop}[{\cite[Proposition 19.6]{wyler1991lecture}, \cite[Proposition 28.6(2)]{adamek2009joy}}]
	\label{prop:a:topos:is:a:quasitopos:iff}
	$\CC$ is a topos iff $\CC$ is a quasitopos and $\RegC{\CC} = \MonoC{\CC}$. \qed
\end{prop}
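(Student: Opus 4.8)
The plan is to prove both implications by unpacking the defining clauses of a topos and of a quasitopos and matching them up, the only genuine content being a few classical topos-theoretic theorems. For the forward direction, assume $\CC$ is an elementary topos. Finite limits hold by definition; that $\CC$ is moreover finitely cocomplete and locally cartesian closed are two standard structural theorems of topos theory, which I would cite rather than reprove. It then remains to relate the subobject classifier to a regular-subobject classifier. The key observation is that in a topos every monomorphism is regular: given $m : A \mono B$ with characteristic map $\chi_m$, the mono $m$ is the equalizer of $\chi_m$ and the composite $B \to 1 \stackrel{\top}{\to} \Omega$. Hence $\RegC{\CC} = \MonoC{\CC}$, and the subobject classifier is a fortiori a regular-subobject classifier, so every clause defining a quasitopos is satisfied.

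For the converse, assume $\CC$ is a quasitopos with $\RegC{\CC} = \MonoC{\CC}$. Finite limits are part of the hypothesis. Cartesian closure follows from local cartesian closure together with the existence of a terminal object: the slice $\CC / 1 \iso \CC$ is cartesian closed, hence so is $\CC$. Finally, the regular-subobject classifier classifies exactly the regular monomorphisms, but by hypothesis these exhaust $\MonoC{\CC}$, so the classifying object and its morphism $\top$ constitute a bona fide subobject classifier. Thus $\CC$ has finite limits, is cartesian closed, and has a subobject classifier, i.e.\ is a topos.

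The routine parts are the equalizer computation and the slice identity $\CC / 1 \iso \CC$; the real weight of the argument lies in the forward direction, where finite cocompleteness and local cartesian closure of a topos are the nontrivial classical ingredients. The conceptual crux, which organizes the whole proof, is that the sole gap between a topos and a quasitopos is the gap between $\MonoC{\CC}$ and $\RegC{\CC}$: once these coincide, ``classifies regular monos'' and ``classifies all monos'' become the same condition, and the two notions collapse into one.
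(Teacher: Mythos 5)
The paper gives no proof of this proposition: it is stated as a cited result (Wyler, Prop.~19.6; Ad\'amek--Herrlich--Strecker, Prop.~28.6(2)) with an immediate \verb|\qed|, so there is no in-paper argument to compare yours against. Your sketch is the standard argument and is correct. The forward direction correctly isolates the two genuinely nontrivial classical inputs --- finite cocompleteness of a topos (Par\'e's theorem) and local cartesian closure --- as citable, and the equalizer computation showing every mono is regular (namely that $m$ equalizes $\chi_m$ and $\top \circ {!_B}$, with the factorization of any equalizing $f$ coming from the classifying pullback) is exactly right. The converse is also sound: with a terminal object present, $\CC \iso \CC/1$ turns local cartesian closure into cartesian closure, and once $\RegC{\CC} = \MonoC{\CC}$ the universal property of the regular-subobject classifier is literally the universal property of a subobject classifier. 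Your closing remark that the entire gap between the two notions is the gap between $\MonoC{\CC}$ and $\RegC{\CC}$ is precisely the point the paper wants the reader to take away (cf.\ its Proposition~\ref{prop:quasitopos:properties:summary} and Figure~\ref{figure:toposes:and:adhesivity:properties}), so your proof supplies a justification the paper deliberately outsources.
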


In the graph transformation literature there is a variety of adhesivity properties, which depend on the notion of a Van Kampen square.

\begin{defi}[Van Kampen Square~\cite{lack2004adhesive}]
	\label{def:van:kampen}
	A pushout square is \emph{Van Kampen} (\emph{VK}) if, whenever it lies at the bottom of a commutative cube 
	\begin{center}
		\begin{tikzcd}[row sep={20,between origins},column sep={20,between origins},ampersand replacement=\&,nodes={rectangle,inner sep=0.5mm,outer sep=0}]
			\& F  \ar[rr] \ar[dd] \ar[dl] \& \&  G \ar[dd] \ar[dl] \\
			E \ar[rr, crossing over] \ar[dd] \& \& H \\
			\& B \ar[rr] \ar[dl] \& \&  C \ar[dl] \\
			A \ar[rr] \&\& D \ar[from=uu,crossing over]
		\end{tikzcd}
	\end{center}
	of which the back faces $FBAE$ and $FBCG$ are pullbacks, this implies that 
	\[ 
	\text{the top face is a pushout $\iff$ the front faces $EHDA$ and $GHDC$ are pullbacks.}
	\]
\end{defi}

\begin{defi}[{Rm-Adhesive Category~\cite{lack2005adhesive}}]
	A category is $\emph{rm-adhesive}$ (a.k.a.\ \emph{quasiadhesive}) if pushouts along regular monomorphisms exist and are VK.
\end{defi}

Figure~\ref{figure:toposes:and:adhesivity:properties} situates quasitoposes among toposes and various adhesivity properties. Observe in particular that not all quasitoposes are rm-adhesive. A counterexample is found in the category of simple graphs~\cite[Corollary 20]{johnstone2007quasitoposes}.

The following properties of quasitoposes are relevant for this paper (either as requirements for the termination method or for exposition). Extended (and overlapping) summaries relevant for graph transformation are provided by Behr et al.~\cite[Corollary 5.15]{behr2023fundamentals} and by the authors~\cite[Proposition 36]{overbeek2023quasitoposes}.

\begin{prop}[Relevant Quasitopos Properties]
	\label{prop:quasitopos:properties:summary}
	A quasitopos $\CC$
	\begin{enumerate}
		\item has (by definition) all pushouts and pullbacks;
		\item satisfies
		\begin{enumerate}
			\item ${g,f \in \RegC{\CC}} \implies {g \circ f \in \RegC{\CC}}$; and
			\item $	{{g\circ f} \in \RegC{\CC}} \implies {f \in \RegC{\CC}}$;
		\end{enumerate}
		for all morphisms $g : B \to C$ and $f : A \to B$~\cite[Proposition 7.62(2), Corollary 28.6(2-3)]{adamek2009joy};
		\item pushouts along regular monomorphisms are pullbacks~\cite[Lemma A2.6.2]{johnstone2002sketches};
		\item regular monomorphisms are stable under pushout~\cite[Lemma A.2.6.2]{johnstone2002sketches};
		\item has essentially unique (epi, regular mono) factorizations for every morphism~\cite[Proposition 28.10]{adamek2009joy}, i.e., every morphism $f : A \to B$ factors uniquely as $A \stackrel{e}{\epi} C \stackrel{m}{\regmono} B$ (up to isomorphism in $C$), where $e$ is epic and $m$ regular monic;
		\item has essentially unique (regular epi, mono) factorizations for every morphism~\cite[Proposition 11]{johnstone2007quasitoposes}, i.e., every morphism $f : A \to B$ factors uniquely as $A \stackrel{e}{\epi} C \stackrel{m}{\mono} B$ (up to isomorphism in $C$), where $e$ is regular epic and $m$ is monic; and
		\item is stable under slicing, i.e., every slice category $\CC/X$ (for $X \in \obj{\CC}$) is a quasitopos~\cite[Theorem 19.4]{wyler1991lecture}.
	\end{enumerate}
\end{prop}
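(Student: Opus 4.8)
The plan is to treat this as a compilation of standard quasitopos facts and to discharge each clause either directly from the definition or from its cited source. Item~(1) needs no separate argument: a quasitopos has all finite limits and colimits by definition, and pullbacks and pushouts are respectively a finite limit and a finite colimit. For the remaining items the work is one of fidelity rather than invention, so for each clause I would verify that our formulation matches the cited statement, translating where the sources phrase things in terms of \emph{strong} monomorphisms or \emph{regular subobjects} rather than in terms of our class $\RegC{\CC}$.

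For the composition and cancellation laws~(2), the cleanest route is to use the fact that in a quasitopos the regular monomorphisms coincide with the strong monomorphisms (this coincidence is exactly what underlies the $(\text{epi}, \RegC{\CC})$ factorization of item~(5)). Closure under composition~(2a) is then immediate, since strong monos are closed under composition; and the cancellation law~(2b)---that $g \circ f \in \RegC{\CC}$ forces $f \in \RegC{\CC}$ with no hypothesis on $g$---follows from a diagonal-filler argument: given an epi $e$ and a commuting square over $f$, one lifts instead against the strong mono $g \circ f$ to obtain a diagonal $d$, and then uses that $e$ is epic and that $g \circ f$ (hence $f$) is monic to conclude that $d$ is in fact the required filler for the $f$-square. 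Failing such a self-contained argument, both parts are available from~\cite[Corollary 28.6(2-3)]{adamek2009joy}. Items~(3) and~(4) I would invoke together as a black box from~\cite[Lemma A2.6.2]{johnstone2002sketches}; items~(5) and~(6) are the two essentially-unique factorizations from~\cite[Proposition 28.10]{adamek2009joy} and~\cite[Proposition 11]{johnstone2007quasitoposes}, for which essential uniqueness is already the uniqueness clause contained in the respective cited statements; and item~(7), slicing stability, is read off directly from~\cite[Theorem 19.4]{wyler1991lecture}.

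The main obstacle here is conceptual rather than computational: because each clause is separately documented, there is no single hard lemma to establish, and the real care lies in the translation between the differing but equivalent characterizations of $\RegC{\CC}$ used across~\cite{adamek2009joy,johnstone2002sketches,johnstone2007quasitoposes,wyler1991lecture} (equalizers versus strong monos versus regular subobjects). The one genuinely quasitopos-specific ingredient is the cancellation law~(2b): it is false in general categories, so it cannot be reduced to abstract factorization-system reasoning and must rest essentially on the identification of regular with strong monomorphisms in the ambient quasitopos.
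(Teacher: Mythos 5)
Your proposal matches the paper's treatment: the proposition carries no proof beyond the inline citations, with item (1) following from the definition of a quasitopos as having all finite limits and colimits, exactly as you argue. Your optional self-contained diagonal-filler argument for the cancellation law (2b) — lifting against the strong mono $g \circ f$ and then cancelling the epi $e$ to recover the filler for the $f$-square — is sound and is a correct bonus beyond what the paper records, but the overall approach is the same.
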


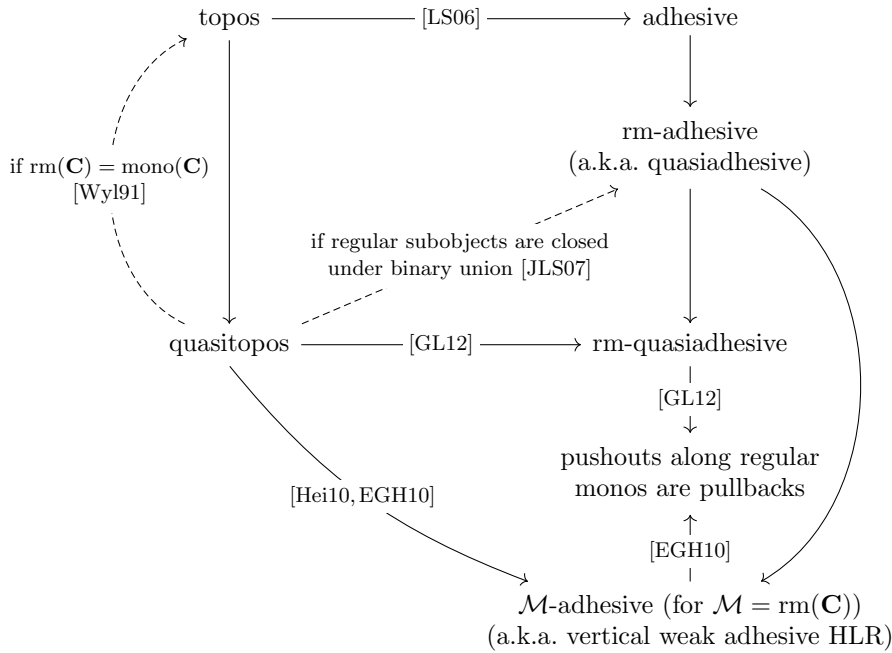
\begin{figure}
	\begin{center}
		\hspace*{4.5mm}%
		\scalebox{0.9}{
			\begin{tikzcd}[ampersand replacement=\&, row sep=10mm, column sep=12mm]
				\text{topos} \arrow[rr, "\text{\footnotesize \cite{lack2006toposes}}" description] \arrow[ddd] \& \& {\text{adhesive}} \arrow[d] \& \& \\
				\&  \& {\begin{tabular}{c}rm-adhesive \\ (a.k.a.\ quasiadhesive)\end{tabular}} \arrow[dd] \arrow[dddd, bend left=60] \&  \& \\
				\\
				{\text{quasitopos}} \ar[uuu, bend left=65, "\text{\footnotesize \begin{tabular}{c}if $\RegC{\CC} = \MonoC{\CC}$ \\ \cite{wyler1991lecture}\end{tabular}}" description, dashed] \arrow[rr, "\text{\footnotesize{\cite{garner2012axioms}}}" description] \arrow[ddrr, bend right=15,"\text{\footnotesize \cite{heindel2010hereditary, ehrig2010categorical}}" description,xshift=-3mm] \arrow[rruu, "\text{\footnotesize \begin{tabular}{c}if regular subobjects are closed \\ under binary union~\cite{johnstone2007quasitoposes}\end{tabular}}" description, dashed, xshift=3.5mm] \& \& {\text{rm-quasiadhesive}} \ar[d, "\text{\footnotesize{\cite{garner2012axioms}}}" description] \\ 
				\&  \& {\begin{tabular}{c}pushouts along regular \\ monos are pullbacks\end{tabular}} \\
				\& \& {{\begin{tabular}{c}$\mathcal{M}$-adhesive (for $\MM = \mathrm{rm}(\CC)$) \\ (a.k.a.\ vertical weak adhesive HLR)\end{tabular}}} \ar[u, "\text{\footnotesize \cite{ehrig2010categorical}}" description, pos=0.45]
			\end{tikzcd}
		}
	\end{center}
	\caption{Implications between (quasi)toposes and adhesivity properties.}
	\label{figure:toposes:and:adhesivity:properties}
\end{figure}

Many structures of interest are (quasi)toposes. We give some motivating examples of rm-adhesive quasitoposes.

\newcommand{\op}[1]{{#1}^\mathrm{op}}

\begin{exa}[(Co)presheaf Toposes]
	If $\CC$ is small, then the functor category $\functorcat{\CC}{\Set}$ is a topos~\cite[Theorem 26.2]{wyler1991lecture}, and hence an rm-adhesive quasitopos. Such categories are known as copresheaf toposes, $\CC$-sets~\cite{brown2023computationalJLAMP}, or graph structures~\cite{lowe1993algebraic} (and $\functorcat{\op{\CC}}{\Set}$ is known as a presheaf category).
	Many structures that are of interest to the graph transformation community can be defined in this manner. For example, if the label set $\labels$ is a singleton (i.e., graphs are unlabeled), then $\Graph \iso \functorcat{\graphIndex}{\Set}$).
\end{exa}

The following proposition assures us that such toposes are closed under finite restrictions. We are not aware of a similar principle for quasitoposes.

\begin{prop}
	\label{prop:finite:presheaf}
	If $\catname{I}$ is finite and $\CC \cong [\catname{I}, \FinSet]$, then $\CC$ is a $\Hom$-locally finite topos, and so an rm-adhesive quasitopos that is  $\AR$-locally finite for any $\ARC{\CC}$. 
\end{prop}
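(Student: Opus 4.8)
The plan is to establish the three assertions in turn --- that $\CC$ is $\Hom$-locally finite, that it is an (elementary) topos, and then to read off that it is an rm-adhesive quasitopos and $\AR$-locally finite --- after first noting that all of these properties are invariant under equivalence, so that I may assume outright that $\CC = \functorcat{\II}{\FinSet}$.

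First I would prove $\Hom$-local finiteness directly by counting. A morphism $F \to G$ in $\functorcat{\II}{\FinSet}$ is a natural transformation, determined by its family of components $(\eta_i : F(i) \to G(i))_{i \in \obj{\II}}$. Since $\II$ is finite there are finitely many objects $i$, and for each the sets $F(i), G(i)$ are finite, so there are exactly $\card{G(i)}^{\card{F(i)}}$ functions $F(i) \to G(i)$. Hence the collection of candidate families is finite, of size $\prod_{i} \card{G(i)}^{\card{F(i)}}$, and $\homset{F}{G}$, being the subset of those families that satisfy the naturality squares, is finite. This yields $\Hom$-local finiteness; $\AR$-local finiteness for an arbitrary class $\AR$ is then immediate, since $\ARset{A}{B} \subseteq \homset{A}{B}$ and a subset of a finite set is finite.

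Next I would show that $\functorcat{\II}{\FinSet}$ is a topos, the idea being that the entire topos structure of the presheaf topos $\functorcat{\II}{\Set}$ restricts to the full subcategory of pointwise-finite functors. Limits and colimits in $\functorcat{\II}{\Set}$ are computed pointwise, and a finite (co)limit of finite sets is again finite (a subset of a finite product, respectively a quotient of a finite coproduct), so $\functorcat{\II}{\FinSet}$ is closed under finite limits and colimits and these are computed as in $\functorcat{\II}{\Set}$. For cartesian closure, the exponential in $\functorcat{\II}{\Set}$ is given by $(G^F)(i) = \mathrm{Nat}(\Hom_{\II}(i,-) \times F, G)$; as $\II$ is finite the representable $\Hom_{\II}(i,-)$ is pointwise finite, so when $F, G$ are pointwise finite there are only finitely many such natural transformations, whence $G^F$ is again pointwise finite. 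Finally, the subobject classifier of $\functorcat{\II}{\Set}$ has $\Omega(i)$ equal to the set of subfunctors (cosieves) of $\Hom_{\II}(i,-)$; finiteness of $\II$ makes $\Hom_{\II}(i,-)$ pointwise finite, hence possessed of only finitely many subfunctors, so $\Omega$ is pointwise finite. Because subobjects of a pointwise-finite functor are themselves pointwise finite, $\Omega$ together with its true arrow classifies subobjects in $\functorcat{\II}{\FinSet}$ exactly as it does in $\functorcat{\II}{\Set}$. Thus $\functorcat{\II}{\FinSet}$ has finite limits, is cartesian closed, and has a subobject classifier, i.e.\ is an elementary topos.

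The remaining implications I would cite rather than prove: a topos is in particular a quasitopos (it is a quasitopos with $\RegC{\CC} = \MonoC{\CC}$ by Proposition~\ref{prop:a:topos:is:a:quasitopos:iff}), and toposes are adhesive (Figure~\ref{figure:toposes:and:adhesivity:properties}, \cite{lack2006toposes}), hence rm-adhesive since every regular mono is a mono; so $\CC$ is an rm-adhesive quasitopos, completing the argument. I expect the topos step to be the main obstacle, and within it the verification that the subobject classifier and the exponentials remain pointwise finite. This is exactly where the hypothesis that $\II$ is finite is indispensable, and it is also the reason one cannot simply invoke the presheaf-topos example: $\functorcat{\II}{\FinSet}$ is merely an elementary topos, not a Grothendieck (presheaf) topos, since it fails to be cocomplete (already $\FinSet$ lacks infinite coproducts).
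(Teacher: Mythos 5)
Your proposal is correct and follows essentially the same route as the paper: $\Hom$-local finiteness by directly counting component families of natural transformations, topos-hood of $\functorcat{\catname{I}}{\FinSet}$, and then reading off the rm-adhesive quasitopos conclusion from Figure~\ref{figure:toposes:and:adhesivity:properties}. The only difference is that the paper dispatches the topos claim with a citation to \cite[Example 5.2.7]{borceux1994handbook}, whereas you verify by hand that the finite (co)limits, exponentials, and subobject classifier of the copresheaf topos restrict to pointwise-finite functors.
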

\begin{proof}
	$\CC$ is a topos~\cite[Example 5.2.7]{borceux1994handbook}, and it is locally finite because $\FinSet$ is $\Hom$-locally finite. And any topos is an rm-adhesive quasitopos (Figure~\ref{figure:toposes:and:adhesivity:properties}).
\end{proof}

\begin{exa}[Typed Graphs]
	Given the category of unlabeled graphs \Graph, a category of graphs typed over $X \in \obj{\Graph}$ can be constructed through a slice construction $\Graph/X$, which yields a quasitopos (Proposition~\ref{prop:quasitopos:properties:summary}). Typed graphs are subsumed by $\CC$-sets~\cite[Section 2.3]{brown2023computationalJLAMP}.
\end{exa}

The following is an example of an rm-adhesive quasitopos that is not a topos.

\begin{exa}[Fuzzy Presheaves]
	\label{ex:fuzzy:presheaves}
	If $(\labels, \leq)$ is a complete Heyting algebra, then the category of $(\labels, \leq)$-fuzzy presheaves is an rm-adhesive quasitopos~\cite[Theorems~15 and 41]{rosset2023fuzzy}. This includes the category of fuzzy graphs over a complete Heyting algebra $(\labels, \leq)$~\cite[Definition 76]{overbeek2023quasitoposes}. We have proposed fuzzy graphs as a mechanism for relabeling labeled graphs~\cite[Section 6]{overbeek2023quasitoposes}. In such a category, $\MonoC{\CC}$ contains all injective graph homomorphisms such that labels are non-decreasing ($\leq$) w.r.t.\ the given order, and $\RegC{\CC}$ contains only the injective graph homomorphisms that preserve labels ($=$).
\end{exa}

\subsection{\pbpopluspdf{}}
\label{sec:background:pbpoplus}

\pbpostrong{} is short for \emph{Pullback-Pushout with strong matching}. It is obtained by strengthening the matching mechanism of PBPO~\cite{corradini2019pbpo} by Corradini et al.

We provide the necessary definitions and results on \pbpostrong{}. See Section~\ref{section:examples} for many examples of rules. For a gentler introduction to \pbpostrong{}, with examples of rewrite steps, see the tutorial~\cite{overbeek2023tutorial} or the PhD thesis of the first author~\cite{overbeek2024phdthesis}.

\begin{defi}[{\pbpostrong{} Rewriting~\cite{corradini2019pbpo, overbeek2023quasitoposes}}]\label{def:pbpostrong:rewrite:step}%
	A \emph{\pbpostrong rule} $\rho$  is a diagram as shown on the left of:
	\begin{center}
		\begin{tikzcd}[ampersand replacement=\&,column sep=48,row sep=25,nodes={rectangle,inner sep=1mm,outer sep=0}]
			L \arrow[d, "t_L" description] \& K \arrow[d, "t_K" description] \arrow[l, "l" description] \arrow[r, "r" description] \arrow[ld, "\mathrm{PB}", phantom] \& R \\
			L'                                   \& K' \arrow[l, "l'" description]                                                                                                \&  
		\end{tikzcd}     
		\hspace{1cm}
		\begin{tikzcd}[ampersand replacement=\&,column sep=48,row sep=25, nodes={rectangle,inner sep=1mm,outer sep=0}]
			\&                                     \& K \arrow[d, "!u" description, densely dotted] \arrow[r, "r" description] \arrow[rd, "\mathrm{PO}", phantom] \& R \arrow[d, "w" description] \\
			L \arrow[r, "m" description] \arrow[d, equals] \arrow[rd, "\mathrm{PB}", phantom] \& G_L \arrow[d, "\alpha" description] \& G_K \arrow[l, "g_L" description] \arrow[r, "g_R" description] \arrow[d, "u'" description] \arrow[ld, "\mathrm{PB}", phantom]                            \& G_R                          \\
			L \arrow[r, "t_L" description]                                                                 \& L'                                  \& K' \arrow[from=uu, "t_K" description, bend left=38, pos=0.7, crossing over] \arrow[l, "l'" description]                                                                                                                          \&                             
		\end{tikzcd}
	\end{center}
	$L$ is the \emph{lhs pattern} of the rule, $L'$ its \emph{(context) type} and $t_L$ the \emph{(context) typing} of $L$. Likewise for the \emph{interface} $K$. $R$ is the \emph{rhs pattern} or \emph{replacement for $L$}.    
	
	Rule $\rho$, \emph{match morphism} $m : L \to G_L$ and \emph{adherence morphism} $\alpha : G_L \to L'$ induce a \emph{rewrite step} $\stepRuleAlpha{G_L}{G_R}{\rho}{\alpha}$ on arbitrary objects $G_L$ and $G_R$ if the properties indicated by the commuting diagram
	on the right
	hold, where $u : K \to G_K$ is the unique morphism satisfying $t_K = {u' \circ u}$~\cite[Lemma 15]{overbeek2023quasitoposes}.
\end{defi}

Lemma~\ref{lemma:computing:a:pbpoplus:step} suggests how to compute a rewrite step: a search for $u$ is in fact not required.

\begin{asm}
	\label{assumption:rule:po}
	We assume that for any rule $\rho$ analyzed for termination, the pushout of $t_K$ along $r$ exists.
\end{asm}

\begin{rem}
	\label{remark:pbpoplus:diagrams}
	If Assumption~\ref{assumption:rule:po} holds for a rule $\rho$,
	then any $\rho$ step defines (up to isomorphism) a diagram 
	\begin{center}
		\begin{tikzcd}[ampersand replacement=\&,column sep=60,row sep=20,nodes={rectangle,inner sep=1mm,outer sep=0},
			execute at end picture={
				\node at ($(\tikzcdmatrixname-1-1)!.5!(\tikzcdmatrixname-2-2)$) [anchor=center] {PB};
				\node at ($(\tikzcdmatrixname-2-1)!.5!(\tikzcdmatrixname-3-2)$) [anchor=center] {PB};
				\node at ($(\tikzcdmatrixname-1-2)!.5!(\tikzcdmatrixname-2-3)$) [anchor=center] {PO};
				\node at ($(\tikzcdmatrixname-2-2)!.5!(\tikzcdmatrixname-3-3)$) [anchor=center] {PO};
			}]
			\bm{L} \arrow[d, "m" description]  \& \bm{K} \arrow[d, "u" description] \arrow[r, "\bm{r}" description, thick] \arrow[l, "\bm{l}" description, thick] \arrow[ld, "", phantom] \arrow[rd, "", phantom] \& \bm{R} \arrow[d, "w" description] \arrow[dd, "\bm{t_R}" description, bend left=55, pos=0.7, thick] \\
			G_L \arrow[d, "\alpha" description]                                   \& G_K \arrow[d, "u'" description] \arrow[l, "g_L" description] \arrow[r, "g_R" description] \arrow[ld, "", phantom] \arrow[rd, "", phantom]                                   \& G_R \arrow[d, "w'" description]                                       \\
			\bm{L'}  \arrow[from=uu, "\bm{t_L}" description, bend left=55, crossing over, pos=0.7, thick]                                                                  \& \bm{K'} \arrow[l, "\bm{l'}" description, thick] \arrow[r, "\bm{r'}" description, thick] \arrow[from=uu, "\bm{t_K}" description, bend left=55, crossing over, pos=0.7, thick]                                                                                                                               \& \bm{R'}                                                            
		\end{tikzcd}
	\end{center}
	where the bold diagram is $\rho$ ($t_L \circ l = l' \circ t_K$ a pullback and $t_R \circ r = r' \circ t_K$ a pushout)~\cite[Section 3]{overbeek2023quasitoposes}. Our method uses the extra pushout to analyze how rewritten objects (the middle span) relate to the context types (the bottom span).
\end{rem}

\begin{rem}
	 This assumption is not restrictive. All colimits (and thus all pushouts) exist in quasitoposes, our main setting. Moreover, pushouts are guaranteed to exist along $\MM$-morphisms in $\MM$-adhesive categories, and morphism $t_K$ is usually in $\MM$.
\end{rem}

\newcommand{\framework}[1]{\mathcal{#1}}
\newcommand{\FF}{\framework{F}}
\newcommand{\GG}{\framework{G}}
\renewcommand{\models}{\prec}

\begin{thmC}[{\cite[Theorem 73]{overbeek2023quasitoposes}}]
	\label{thm:pbpostrong:models:others}
	Let $\CC$ be a quasitopos, and let matches $m \in \RegC{\CC}$.
	For rewriting formalisms $\FF$ and $\GG$, let $\FF \prec \GG$ express that in $\CC$, for any $\FF$ rule $\rho$, there exists a $\GG$ rule $\tau$ such that ${\Rightarrow^\rho_\FF} = {\Rightarrow^\tau_\GG}$. We have:
	\pushQED{\qed}
	\[
	\begin{tikzpicture}[default,nodes={rectangle,inner sep=3mm},baseline=(l.base)]
		\node (t) {\pbpostrong};
		\node (l) at (t.west) [anchor=east,yshift=-3.5mm]  
		{$\text{SqPO} \models \text{AGREE}$};
		\node (r) at (t.east) [anchor=west,xshift=-2mm,yshift=-3.5mm]  
		{$\text{\dpoLL{\RegC{\CC}}}$};
		\node (b) at ($(t)+(0,-7mm)$) {PBPO};
		\node at ($(l.north east)!0.5!(t.south west)$) [rotate=35] {$\models$};
		\node at ($(r.north west)!0.5!(t.south east) + (-.5mm,0mm)$) [rotate=180-35] {$\models$};
		\node at ($(t)!0.5!(b)$) [rotate=90] {$\models$};
	\end{tikzpicture} \qedhere
	\]
	\popQED
\end{thmC}

Observe that $\models$ is transitive. As the constructive proofs in~\cite{overbeek2023quasitoposes} show, the procedures to encode the mentioned formalisms into \pbpostrong{} are straightforward.\footnote{More precise statements and proofs can be found in the first author's PhD thesis~\cite[Section 4.5]{overbeek2024phdthesis}.} We moreover conjecture $\text{SPO} \prec \text{\pbpostrong{}}$~\cite[Remark 26]{overbeek2023quasitoposes}.

\section{Decreasingness by Counting Weighted Elements}
\label{section:decreasingness}

\newcommand{\edgeGraph}{
	\,\begin{tikzpicture}[baseline=-.6ex]
		\node (x) [circle,inner sep=0,outer sep=1mm,minimum size=0.7mm,fill=black] {};
		\node (y) [circle,inner sep=0,outer sep=1mm,minimum size=0.7mm,fill=black, right of=x, xshift=-3mm] {};
		\draw [->] (x) to (y);
	\end{tikzpicture}\,
}

\newcommand{\nodeGraph}{
	\,\begin{tikzpicture}[baseline=-.6ex]
		\node (x) [circle,inner sep=0,outer sep=1mm,minimum size=0.7mm,fill=black] {};
	\end{tikzpicture}\,
}

\newcommand{\aloop}[1]{%
	\begin{tikzpicture}[baseline=-.6ex]
		\useasboundingbox (-0.7ex,0) rectangle ++(4.5ex,1ex);
		\node (x) [circle,inner sep=0,outer sep=1mm,minimum size=0.7mm,fill=black] {};
		\draw [->] (x) to[loop=0,distance=1em] node [right,label] {$#1$} (x);
	\end{tikzpicture}
}

\newcommand{\aloopUnlabeled}{%
	\begin{tikzpicture}[baseline=-.6ex]
		\useasboundingbox (-0.7ex,0) rectangle ++(3.5ex,1ex);
		\node (x) [circle,inner sep=0,outer sep=1mm,minimum size=0.7mm,fill=black] {};
		\draw [->] (x) to[loop=0,distance=1em] (x);
	\end{tikzpicture}
}

We start with an explanation of the general idea behind our termination approach. Given a set of rules $\sys$, we seek to construct a measure $\ww$ such that for all \pbpostrong{} steps $\stepRule{G_L}{G_R}{\rho}$ generated by a rule $\rho \in \sys$, $\ww(G_L) > \ww(G_R)$. Then $\ww$ is a decreasing measure for the rewrite relation generated by $\sys$, such that $\sys$ is terminating.
We construct such a measure $\ww$ by weighing objects as follows.

\begin{defi}[Weight Functions]
	Given a set of objects $\TT$, \emph{weight function} $\ww : \TT \to \mathbb{N}$,  and class of morphisms $\ARC{\CC}$, we define the \emph{tiling weight function}
	\[
	\wwTAR{\TT}{\AR}(X) \qquad = \qquad \sum_{t \in \ARset{\TT}{X}} \ww(\dom(t))
	\]
	for objects $X \in \obj{\CC}$. In this context, we refer to the objects of $\TT$ as \emph{tiles}. 
\end{defi}

\begin{asm}
	We assume that $\TT$ is finite and $\CC$ is $\AR$-locally finite, such that $\wwTAR{\TT}{\AR}$ is well defined.
\end{asm}

\begin{exa}
	\label{exa:weight:functions}
	Let $\CC = \FinGraph$ with singleton label set $\labels$, and $G$ an arbitrary graph. Some basic examples of tile sets and parameters are as follows.
	\begin{itemize}
		\item Let $\nodeGraph$ represent the graph consisting of a single node. If $\TT = \{ \nodeGraph \}$, $\ww(\nodeGraph) = 1$, and $\ARC{\CC} \in \{ \HomC{\CC}, \MonoC{\CC}, \RegC{\CC}\}$, then $\wwTAR{\TT}{\AR}(G) = \card{V_G}$.
		\item Let $\edgeGraph$ represent the graph consisting of a single edge with distinct endpoints. If $\TT = \{ \edgeGraph \}$, $\ww(\edgeGraph) = 1$ and $\ARC{\CC} = \HomC{\CC}$, then $\wwTAR{\TT}{\AR}(G) = \card{E_G}$. If instead $\ARC{\CC} = \MonoC{\CC}$, then $\wwTAR{\TT}{\AR}(G)$ counts the number of subgraph occurrences isomorphic to $\edgeGraph$ in $G$ (loops are not counted). (See also Example~\ref{example:edge:graph:decreasing:system}.)
		\item If $\TT = \{ \nodeGraph, \edgeGraph \}$, $\ww(\nodeGraph) = 2$, $\ww(\edgeGraph) = 1$ and $\ARC{\CC} = \HomC{\CC}$, then $\wwTAR{\TT}{\AR}(G) = 2 \cdot \card{V_G} + \card{E_G}$. 
	\end{itemize}
\end{exa}

Our goal is to use $\wwTAR{\TT}{\AR}(\cdot)$ as a decreasing measure.
This gives rise to two main challenges: finding a suitable $\TT$ (if it exists), and determining whether $\wwTAR{\TT}{\AR}(\cdot)$ is decreasing. In this paper, we focus exclusively on the second problem, and show that the matter can be decided through a finite rule analysis. 

Certain assumptions on $\ARC{\CC}$ will be needed. To prevent clutter and to help intuition, we state them now, valid for the remainder of this paper. In the individual proofs, we clarify which assumptions on $\ARC{\CC}$ are used.

\begin{asm}
	\label{assumption:ar}
	We assume the following about $\ARC{\CC}$:
	\begin{itemize}
		\item $\RegC{\CC} \subseteq \AR(\CC)$; and
		\item $\ARC{\CC}$ is stable under:
		\begin{itemize}
			\item pullback;
			\item composition ($g,f \in \ARC{\CC} \implies g \circ f \in \ARC{\CC}$); and
			\item decomposition ($g \circ f \in \ARC{\CC}  \implies f \in \ARC{\CC}$).\footnote{More precisely, the notion used here is stability under decomposition for a right class $\MM$ of morphisms of some factorization system $(\mathcal{E}, \MM)$.  Stability under decomposition for the left class would mean ${{g \circ f} \in \mathcal{E}} \implies {g \in \mathcal{E}}$. This would hold, for instance, for $\mathcal{E}$ the class of epimorphisms. We will only use stability under decomposition in the sense defined here, and so we do not make the distinction in the definition.}
		\end{itemize}
	\end{itemize}
\end{asm}

Note that $\IsoC{\CC} \subseteq \ARC{\CC}$, because $\IsoC{\CC} \subseteq \RegC{\CC}$ (see, e.g., \cite[Proposition~4.29]{overbeek2024phdthesis}).

\begin{prop}
	\label{prop:assumption:regular:monos:stable:class:quasitopos}
	In any category, the classes $\HomC{\CC}$ and $\MonoC{\CC}$ satisfy Assumption~\ref{assumption:ar}. Likewise for $\RegC{\CC}$ if $\CC$ is a quasitopos.
\end{prop}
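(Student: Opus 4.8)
The plan is to check, for each of the three classes in turn, the four requirements bundled in Assumption~\ref{assumption:ar}: the inclusion $\RegC{\CC} \subseteq \ARC{\CC}$ and stability under pullback, under composition, and under decomposition. I would dispatch the trivial and purely elementary cases first, leaving the only substantive point---pullback-stability of regular monos---for last.

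For $\HomC{\CC}$ every condition is immediate: since \emph{every} morphism lies in $\HomC{\CC}$, the inclusion holds and $\HomC{\CC}$ is vacuously closed under pullback, composition, and decomposition.

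For $\MonoC{\CC}$ I would appeal only to facts valid in an arbitrary category. Regular monomorphisms are in particular monic, giving $\RegC{\CC} \subseteq \MonoC{\CC}$; monos are stable under pullback; the composite of two monos is monic; and if $g \circ f$ is monic then so is $f$ (the standard left-cancellation argument). These are exactly the four requirements, with no special hypotheses on $\CC$.

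For $\RegC{\CC}$ with $\CC$ a quasitopos, the inclusion $\RegC{\CC} \subseteq \RegC{\CC}$ is trivial, while stability under composition and under decomposition are precisely items (2a) and (2b) of Proposition~\ref{prop:quasitopos:properties:summary}. The remaining, and genuinely nontrivial, condition is stability under pullback, since regular monos are \emph{not} pullback-stable in a general category; this is the main obstacle. Here I would use the regular-subobject classifier $\Omega$. Writing $\top$ for the generic regular subobject it classifies, any regular mono $m$ into $X$ is, up to isomorphism, the pullback of $\top$ along a characteristic arrow $\chi_m : X \to \Omega$. Given $f : Y \to X$, the Pullback Lemma (Lemma~\ref{lemma:pullback:lemma}) then identifies the pullback of $m$ along $f$ with the pullback of $\top$ along $\chi_m \circ f$, which is again a regular mono; hence $\RegC{\CC}$ is stable under pullback. (Alternatively, one may invoke that in a quasitopos the regular monos coincide with the strong monos, which are pullback-stable in any category.) Collecting these verifications across the three classes completes the argument.
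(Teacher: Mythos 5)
Your handling of $\HomC{\CC}$ and $\MonoC{\CC}$ matches the paper's, and your conclusions for $\RegC{\CC}$ are all correct, but you have inverted which closure properties of regular monos are the ``hard'' ones. Pullback-stability of regular monomorphisms holds in \emph{any} category with the relevant pullbacks --- the pullback of the equalizer of $(f,g)$ along $h$ is the equalizer of $(f \circ h, g \circ h)$, a one-line universal-property check --- and this is exactly the general fact the paper cites (Borceux, Proposition 4.3.8(2)). So your claim that ``regular monos are \emph{not} pullback-stable in a general category'' is false, and the quasitopos hypothesis is not needed for that condition. The properties that genuinely require the quasitopos assumption are the ones you dismissed as routine: closure of $\RegC{\CC}$ under composition and decomposition, which fail in general categories and are supplied by Proposition~\ref{prop:quasitopos:properties:summary}(2a,b). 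Your regular-subobject-classifier argument (and the strong-mono alternative) is valid in a quasitopos, so the proof as a whole does go through; it is just heavier machinery than necessary for a fact that is already true with no hypotheses, and the mislabelling of the ``main obstacle'' is an error you should correct even though it does not affect the conclusion.
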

\begin{proof}
	The statement is vacuously true in any category for $\HomC{\CC}$.
	
	The properties for $\MonoC{\CC}$ are well known. Some references are \cite[Proposition~3.16]{awodey2010category} for $\RegC{\CC} \subseteq \MonoC{\CC}$, \cite[Proposition~2.5.2]{borceux1994handbook1} for pullback stability, and \cite[Exercise 2.8.4(a,b)]{awodey2010category} for stability under composition and decomposition.
	
	Class $\RegC{\CC}$ is stable under pullback in any category~\cite[Proposition~4.3.8(2)]{borceux1994handbook1}. See the provided summary (Proposition~\ref{prop:quasitopos:properties:summary}) for the remaining properties if $\CC$ is a quasitopos.
\end{proof}

Now suppose that a rule $\rho$ generates a rewrite step diagram. This defines a factorization $t_R = {R \stackrel{w}{\to} G_R \stackrel{w'}{\to} R'}$  (Remark~\ref{remark:pbpoplus:diagrams}). Any tiling of $G_R$ can be partitioned into two using the following definition.

\newcommand{\isosub}[2]{{#1}_{\cong}^{#2}}
\newcommand{\nisosub}[2]{{#1}_{\not\cong}^{#2}}

\begin{defi}
	\label{def:iso:partitions}
	For arrows $f : A \to B$ and sets $S$ of arrows with codomain $B$ we define the partitioning
	$S = \isosub{S}{f} \uplus \nisosub{S}{f}$ where $\isosub{S}{f} = \{g \in S \mid \pb{f}{g} \in \IsoC{\CC} \}$ and
	$\nisosub{S}{f} = \{ g \in S \mid \pb{f}{g} \not\in \IsoC{\CC} \}$.
\end{defi}

Intuitively,
\[ 
{\isosub{\ARset{\TT}{G_R}}{w}} = {\{ t \in \ARset{\TT}{G_R} \mid \pb{w}{t} \in \IsoC{\CC} \}}
\]
contains all tilings that lie isomorphically in the pattern $w(R)$, and 
\[
{\nisosub{\ARset{\TT}{G_R}}{w}} = {\{ t \in \ARset{\TT}{G_R} \mid \pb{w}{t} \not\in \IsoC{\CC} \}}
\]
the remaining tilings, which overlap partially or fully with the context. The remainder of this section is structured as follows.

We will start by centrally identifying some key assumptions and properties that we need in order to reason on the level of the rule (Section \ref{sec:key:requirements}). 

We then prove that there exists a domain-preserving bijection between $\isosub{\ARset{\TT}{G_R}}{w}$ and $\isosub{\ARset{\TT}{R'}}{t_R}$, allowing us to determine $\ww(\isosub{\ARset{\TT}{G_R}}{w})$ on the level of the rule (Section~\ref{sec:bijection:pattern:G_R:and:R'}).

Determining $\ww(\nisosub{\ARset{\TT}{G_R}}{w})$ on the level of the rule is in general impossible, because usually $G_R$ can have an arbitrary size. Instead, we give precise conditions, formulated on the level of the rule, that ensure that
there exists a domain-preserving injection $\xi : \nisosub{\ARset{\TT}{G_R}}{w} \mono \ARset{\TT}{G_L}$ across the rewrite step diagram, such that $\ww(\xi \circ \nisosub{\ARset{\TT}{G_R}}{w}) = \ww(\nisosub{\ARset{\TT}{G_R}}{w})$ (Section~\ref{sec:sliding:context}). Such injections often exist in the usual categories of interest, in which the context of $G_R$ is roughly inherited from the left.

\newcommand{\tilingL}{\Delta}

The two results are then combined as follows.
If we additionally find a tiling $\tilingL \subseteq \ARset{\TT}{L}$ such that for the given match $m : L \to G_L$, 
\begin{enumerate}
	\item ${m \circ \tilingL} \; \subseteq \; {\ARset{\TT}{G_L}}$;
	\item $\ww(m \circ \tilingL) \; > \; \ww(\isosub{\ARset{\TT}{G_R}}{w})$; and
	\item ${(m \circ \tilingL)} \cap (\xi \circ \nisosub{\ARset{\TT}{G_R}}{w}) \; = \; \emptyset$;
\end{enumerate}
then
\begin{align*}
	\wwTAR{\TT}{\AR}(G_L) &\geq \ww(m \circ \tilingL) + \ww(\xi \circ \nisosub{\ARset{\TT}{G_R}}{w}) \\
	&> \ww(\isosub{\ARset{\TT}{G_R}}{w}) + \ww(\nisosub{\ARset{\TT}{G_R}}{w}) \\
	&= \wwTAR{\TT}{\AR}(G_R)
\end{align*}
and we will have successfully proven that $\wwTAR{\TT}{\AR}(\cdot)$ is a decreasing measure. This is the main result of this section (Section~\ref{sec:termination:theorem}).

\subsection{Relating Rule and Step}
\label{sec:key:requirements}

\newcommand{\pbpoAdhesive}{\pbpostrong{}-Adhesive}
\newcommand{\pbpoadhesive}{\pbpostrong{}-adhesive}
\newcommand{\pbpoadhesivity}{\pbpostrong{}-adhesivity}

In order to reason about steps on the level of rules, the following variant of adhesivity is needed. It does not yet occur in the literature.
\medskip

\begin{defi}[\pbpoAdhesive]
	A pushout square ${r' \circ t_K} = {t_R \circ r}$ is \emph{\pbpoadhesive} if, whenever it lies at the bottom of a commutative cube 
	\begin{center}
		\begin{tikzcd}[column sep=16, row sep=16,ampersand replacement=\&,nodes={rectangle,inner sep=0.5mm,outer sep=0}]
			\& K \arrow[ld, "u" description] \arrow[rrd, "r" description, pos=0.3] \arrow[dd, equals] \&                                  \&                                                               \\
			G_K \arrow[dd, "u'" description] \&                                                                                            \&                                  \& R \arrow[ld, "w" description] \arrow[dd, equals] \\
			\& K \arrow[ld, "t_K" description] \arrow[rrd, "r" description, pos=0.3]                               \& G_R \arrow[from=llu, "g_R" description, crossing over, pos=0.25] \&                                                               \\
			K' \arrow[rrd, "r'" description]                               \&                                                                                            \&                                  \& R \arrow[ld, "t_R" description]                               \\
			\&                                                                                            \& R'  \arrow[from=uu, "w'" description, crossing over]                               \&                                                              
		\end{tikzcd}
	\end{center}
	where the top face is a pushout and the back faces are pullbacks, we have that the front faces are pullbacks.
\end{defi}

\begin{cor}
	\label{cor:rm:adhesive:pbpo:adhesive}
	If $\CC$ is rm-adhesive, pushouts $r' \circ t_K = {t_R \circ r}$ with $t_K \in \RegC{\CC}$ are \pbpoadhesive. \qed
\end{cor}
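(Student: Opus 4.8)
The plan is to observe that the commutative cube in the definition of \pbpoadhesivity{} is, after a relabeling of vertices, an instance of the cube used to define Van Kampen squares (Definition~\ref{def:van:kampen}), and that its bottom face is a pushout along the regular monomorphism $t_K$. The conclusion then drops out of rm-adhesivity without any further work.

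Concretely, I would set up the correspondence between the two cubes by assigning, for the top face, $F = K$, $E = G_K$, $G = R$, $H = G_R$, and for the bottom face, $B = K$, $A = K'$, $C = R$, $D = R'$. Under this identification the bottom face is exactly the given pushout $r' \circ t_K = t_R \circ r$; the back vertical edges $F \to B$ and $G \to C$ are the identities $\id{K}$ and $\id{R}$; and the front vertical edges $E \to A$ and $H \to D$ are $u'$ and $w'$. One then checks that the back faces $FBAE$ and $FBCG$ of the VK cube are exactly the back faces of the \pbpoadhesivity{} cube, and that the front faces $EHDA$ and $GHDC$ are its front faces.

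With the correspondence in place, the argument is immediate. Since $t_K \in \RegC{\CC}$ and $\CC$ is rm-adhesive, the bottom face --- being a pushout along the regular mono $t_K$ --- is a VK square. The hypotheses of \pbpoadhesivity{} supply precisely what the VK property consumes: the top face is a pushout and both back faces are pullbacks (the back face $FBCG$ is in fact automatically a pullback, since two of its opposite edges are the identities $\id{K}$ and $\id{R}$). Hence the forward direction of the VK biconditional --- top face a pushout implies the front faces are pullbacks --- yields that $EHDA$ and $GHDC$ are pullbacks, which is exactly the assertion that the pushout is \pbpoadhesive{}.

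I expect the only delicate point to be purely organizational: carefully matching the two cube orientations and verifying that the two identity vertical edges cause no mismatch between the hypotheses. There is no genuine mathematical obstacle, because \pbpoadhesivity{} is simply the special case of the VK condition in which two opposite vertical edges of the cube are required to be identities; rm-adhesivity therefore subsumes it directly.
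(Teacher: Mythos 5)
Your proposal is correct and matches the intended argument: the paper states the corollary without proof precisely because the \pbpostrong{}-adhesivity cube is, under the vertex identification you describe, an instance of the Van Kampen cube over a pushout along the regular mono $t_K$, so rm-adhesivity applies directly. Your observation that the back face with the two identity edges is automatically a pullback is the only detail worth spelling out, and you handle it correctly.
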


\begin{rem}
	\label{remark:pbpo:adhesive:question}
	Not all quasitoposes are \pbpoadhesive{}: the counterexample by Johnstone et al.~\cite[Fig.\ 1]{johnstone2007quasitoposes}, which shows that the category of simple graphs is not rm-adhesive, is also a counterexample for \pbpoadhesivity{}. An open question is whether there are categories (of interest to the graph transformation community) that are \pbpoadhesive{}, but not rm-adhesive.
\end{rem}

The following equalities will prove crucial. Recall Notation~\ref{notation:nonstandard:notation}.

\begin{lem}
	\label{lemma:VK:step}
	Assume $\CC$ has pullbacks. Let a rewrite step for a \pbpostrong{} rule $\rho$ be given. If square $r' \circ t_K = t_R \circ r$ is \pbpoadhesive, then for any $\rho$-rewrite step and any $t : T \to G_R$
	\begin{enumerate}
		\item\label{lemma:VK:step:along:t} $\pb{g_R}{t} = \pb{r'}{w' \circ t}$;
		\item\label{lemma:VK:step:along:g_R} $u' \circ \pb{t}{g_R} = \pb{w' \circ t}{r'}$; 
		\item\label{lemma:VK:step:along:t_R} $\pb{w}{t} = \pb{t_R}{w' \circ t}$; and
		\item\label{lemma:VK:step:AR} $\pb{t}{w} =  \pb{w' \circ t}{t_R}$.
	\end{enumerate}
\end{lem}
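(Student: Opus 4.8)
The plan is to reduce all four identities to the Pullback Lemma (Lemma~\ref{lemma:pullback:lemma}). The engine is the observation that two squares of the step diagram from Remark~\ref{remark:pbpoplus:diagrams} are in fact \emph{pullbacks}: the square $S_1$ on $G_K, G_R, K', R'$ with edges $g_R, u', w', r'$ (which appears only as the pushout $w' \circ g_R = r' \circ u'$), and the degenerate square $S_2$ on $R, G_R, R, R'$ with edges $w, \id{R}, w', t_R$ (which commutes since $w' \circ w = t_R$). Intuitively, each identity transfers the pullback of a generic $t : T \to G_R$ between the middle span and the context span, and $S_1, S_2$ are precisely the squares licensing this transfer. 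So the first task is to upgrade $S_1, S_2$ to pullbacks, and the second is routine pasting.

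For the first task I would assemble the step data into the commutative cube of the definition of \pbpoadhesivity, placing the given pushout $r' \circ t_K = t_R \circ r$ at the bottom and the step pushout $w \circ r = g_R \circ u$ on top, with vertical pairs $u, u'$ (satisfying $u' \circ u = t_K$) and $w, w'$ (satisfying $w' \circ w = t_R$), and with both back edges being $\id{K}$ and $\id{R}$. The hypotheses of \pbpoadhesivity then require the two back faces to be pullbacks. The back face over $K \xrightarrow{r} R$ has its two rungs equal to $\id{K}$ and $\id{R}$, hence is trivially a pullback. The remaining back face is the square on $K, G_K, K, K'$ with edges $u, \id{K}, u', t_K$, and I would show it is a pullback by a Pullback-Lemma pasting that uses the \pbpostrong{} strong-matching pullback $\alpha \circ m = t_L$ together with the lower-left pullback exhibiting $G_K$ as the pullback of $\alpha$ and $l'$: pasting these shows that pulling $u'$ back along $t_K$ yields the pullback $K$ whose projection to $G_K$ is $u$, which is exactly the pullback claim. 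Applying \pbpoadhesivity then gives that the two front faces, namely $S_1$ and $S_2$, are pullbacks. (When $t_K \in \RegC{\CC}$ this is automatic in the rm-adhesive case by Corollary~\ref{cor:rm:adhesive:pbpo:adhesive}.)

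For the second task, each identity is read off from a single instance of Lemma~\ref{lemma:pullback:lemma}. For items~\ref{lemma:VK:step:along:t} and~\ref{lemma:VK:step:along:g_R}, I would paste the pullback of $g_R$ along $t$ to the left of $S_1$, the two squares sharing the edge $g_R : G_K \to G_R$: the right square $S_1$ is a pullback and the left square is a pullback by construction, so the outer rectangle is the pullback of $r'$ along $w' \circ t$. Its projection onto $T = \dom(t)$ is simultaneously $\pb{g_R}{t}$ and $\pb{r'}{w' \circ t}$, giving item~\ref{lemma:VK:step:along:t}; its projection onto $K' = \dom(r')$ is $u'$ post-composed with the $G_K$-projection, giving $u' \circ \pb{t}{g_R} = \pb{w' \circ t}{r'}$, i.e.\ item~\ref{lemma:VK:step:along:g_R}. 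Identically, for items~\ref{lemma:VK:step:along:t_R} and~\ref{lemma:VK:step:AR} I would paste the pullback of $w$ along $t$ to the left of $S_2$ (sharing the edge $w$); since the top edge of $S_2$ is $\id{R}$, the outer rectangle is the pullback of $t_R$ along $w' \circ t$ with no extra composition, and its two projections yield $\pb{w}{t} = \pb{t_R}{w' \circ t}$ and $\pb{t}{w} = \pb{w' \circ t}{t_R}$.

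The main obstacle is the back-face verification in the second paragraph: showing that the square on $K, G_K, K, K'$ is a pullback is exactly where strong matching is indispensable, and it is the one step that requires a genuine argument rather than bookkeeping. The only remaining care is to orient each pasted rectangle so that the induced projections line up with the arrows named by the $\pb{\cdot}{\cdot}$ notation.
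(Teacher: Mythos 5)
Your proposal is correct and follows essentially the same route as the paper: place the step data on the cube of the \pbpoadhesivity{} definition, conclude that the front faces ($w' \circ g_R = r' \circ u'$ and $w' \circ w = t_R \circ \id{R}$) are pullbacks, and then obtain all four identities by pasting with the Pullback Lemma. The only difference is that the paper simply cites \cite[Lemma 15]{overbeek2023quasitoposes} for the fact that the back faces are pullbacks in any category, whereas you reconstruct that argument from the strong-matching pullback $\alpha \circ m = t_L$ and the pullback defining $G_K$ — which is exactly the content of the cited lemma.
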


\mysidepicture{4.8cm}{0.5ex}{15ex}{%
	\adjustbox{scale=1}{%
		\begin{tikzcd}[column sep=18, row sep=15,ampersand replacement=\&,nodes={rectangle,inner sep=0.5mm,outer sep=0}]
			\& 
			\&                                     \&                                                                           \\
			S \arrow[rrd, "\pb{g_R}{t}" description, pos=0.3] \arrow[dd, "\pb{t}{g_R}" description] \&                                                                                                  \&                                     \& Y \arrow[ld, "\pb{w}{t}" description] \arrow[dd, "\pb{t}{w}" description] \\
			\& K \arrow[ld, "u" description] \arrow[rrd, "r" description, pos=0.3] \arrow[dd, equals] \& T \&                                                                           \\
			G_K \arrow[dd, "u'" description]                 \&                                                                                                  \&                                     \& R \arrow[ld, "w" description] \arrow[dd, equals]             \\
			\& K \arrow[ld, "t_K" description] \arrow[rrd, "r" description, pos=0.3]                               \& G_R \arrow[from=llu, "g_R" description, crossing over, pos=0.25] \arrow[from=uu, "t" description, crossing over]   \&                                                                           \\
			K' \arrow[rrd, "r'" description, pos=0.25]                                               \&                                                                                                  \&                                     \& R \arrow[ld, "t_R" description]                                           \\
			\&                                                                                                  \& R' \arrow[from=uu, "w'" description, crossing over]                                  \&                                                                          
		\end{tikzcd}
	}
}{%
	\begin{proof}
		In the diagram on the right,
		the bottom face of the bottom cube is \pbpoadhesive{} by assumption, its top face is a pushout, and its back faces are pullbacks in any category~\cite[Lemma 15]{overbeek2023quasitoposes}. Hence its front faces are pullbacks by \pbpoadhesivity. Then all claims follow by composing pullback squares, using the pullback lemma. 
	\end{proof}
}

\begin{rem}
	Because every $t \in \ARset{T}{G_R}$ defines an arrow ${w' \circ t} \in \homset{T}{R'}$, we can overapproximate $\ARset{T}{G_R}$ using $\homset{T}{R'}$.
	The equalities of Lemma~\ref{lemma:VK:step} will then be used as follows.
	\begin{enumerate}
		\item We will slide morphisms $t \in \nisosub{\ARset{\TT}{G_R}}{w}$ to the left. If $\pb{g_R}{t}$ is invertible, then $g_L \circ \pb{t}{g_R} \circ \rightinv{\pb{g_R}{t}} : T \to G_L$ is an arrow towards the left. Lemma~\ref{lemma:VK:step}.\ref{lemma:VK:step:along:t} implies that invertibility of $\pb{g_R}{t}$ can be verified on the level of the rule.
		\item Although we cannot deduce $\pb{t}{g_R}$, Lemma~\ref{lemma:VK:step}.\ref{lemma:VK:step:along:g_R} implies that we can at least deduce how it is mapped into $K'$.
		\item Lemma~\ref{lemma:VK:step}.\ref{lemma:VK:step:along:t_R} implies that it suffices to restrict the overapproximation of $\nisosub{\ARset{\TT}{G_R}}{w}$ to $\nisosub{\homset{\TT}{R'}}{t_R}$.
		\item If $t \in \ARC{\CC}$, then $\pb{t}{w} \in \ARC{\CC}$ by the pullback stability assumption. Thus, Lemma~\ref{lemma:VK:step}.\ref{lemma:VK:step:AR} implies that it suffices to restrict the overapproximation even further to $\{ f \in \nisosub{\homset{\TT}{R'}}{t_R} \mid \pb{f}{t_R} \in \ARC{\CC} \}$.
	\end{enumerate}
\end{rem}

\subsection{Determining \texorpdfstring{$\ww(\isosub{\ARset{\TT}{G_R}}{w})$}{the Weight of the Pattern}}
\label{sec:bijection:pattern:G_R:and:R'}

In this section we show that the weight of $\isosub{\ARset{\TT}{G_R}}{w}$ can be determined under minimal assumptions. 

\newcommand{\pbeq}[2]{\langle #1 \mid #2 \rangle^{=}}

\begin{prop}
	\label{prop:iso:unique:pb:morphism}
	If $f$ and $g$ have a pullback of the form on the left of:
	\begin{center}
		\begin{tikzcd}[column sep=40, ampersand replacement=\&]
			C \arrow[d, "g" description] \& A' \arrow[l, "f'" description] \arrow[d, "i" description, two heads, tail] \arrow[ld, "\mathrm{PB}", phantom] \&  \& C \arrow[d, "g" description] \& A \arrow[l, "f'i^{{-1}}" description] \arrow[d, equals] \arrow[ld, "\mathrm{PB}", phantom] \\
			B                            \& A \arrow[l, "f" description]                                                                                  \&  \& B                            \& A \arrow[l, "f" description]                                                                    
		\end{tikzcd}
	\end{center}
	where $i \in \IsoC{\CC}$, then the square on the right is also a pullback. Moreover, $f'i^{{-1}}$ is a unique solution for $h$ in $f = g \circ h$.
\end{prop}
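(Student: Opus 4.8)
The plan is to recognize the right-hand square as the image of the left-hand (limiting) cone under precomposition with the isomorphism $\inverse{i}$, and then to read the \emph{moreover} clause directly off the universal property of the resulting pullback.

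First I would verify that the right square commutes. Since the left square is a pullback we have $g \circ f' = f \circ i$, and hence
$g \circ (f' \circ \inverse{i}) = (f \circ i) \circ \inverse{i} = f = f \circ \id{A}$,
so the right square (whose right leg is $\id{A}$) commutes. Next, to see that it is a pullback, I would invoke the standard fact that precomposing a limiting cone with an isomorphism of its apex again yields a limiting cone. Concretely, the given pullback exhibits the cone $(f', i)$ with apex $A'$ over the cospan $C \stackrel{g}{\to} B \stackrel{f}{\leftarrow} A$ as limiting; precomposing both legs with the isomorphism $\inverse{i} : A \to A'$ yields the cone $(f' \circ \inverse{i},\, i \circ \inverse{i}) = (f' \circ \inverse{i},\, \id{A})$ with apex $A$, which is therefore also limiting, and this is exactly the claim that the right square is a pullback. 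If one prefers to avoid the abstract principle, the same conclusion follows by unfolding the universal property: given $p : Z \to C$ and $q : Z \to A$ with $g \circ p = f \circ q$, the left pullback supplies a unique $m : Z \to A'$ with $f' \circ m = p$ and $i \circ m = q$; then $m = \inverse{i} \circ q$, so $(f' \circ \inverse{i}) \circ q = p$, while $q$ is forced as the unique mediating morphism because the right leg is $\id{A}$.

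Finally, for the uniqueness statement I would read it off this pullback. Existence is already contained in the commutativity check: $h = f' \circ \inverse{i}$ satisfies $f = g \circ h$. For uniqueness, suppose $h : A \to C$ satisfies $f = g \circ h$. Then $(h, \id{A})$ is a cone over the cospan with apex $A$, since $g \circ h = f = f \circ \id{A}$; the induced mediating morphism $A \to A$ must equal $\id{A}$ because it commutes with the identity right leg, whence $h = (f' \circ \inverse{i}) \circ \id{A} = f' \circ \inverse{i}$.

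I do not anticipate a genuine obstacle: the proposition is essentially the observation that pullbacks are stable under replacing the apex by an isomorphic copy, together with the remark that a pullback square with an identity leg encodes unique factorization of $f$ through $g$. The only point needing minor care is correctly identifying the legs of the right square (in particular that its right leg is $\id{A}$), so that the universal property collapses precisely to the asserted uniqueness of $h$.
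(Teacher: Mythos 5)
Your proof is correct and follows essentially the same route as the paper: verify commutativity by cancelling $i$, establish the pullback property of the right square via the universal property of the left one (your explicit mediating-morphism argument matches the paper's, with your $m$ playing the role of its $z$), and then read the unique factorization of $f$ through $g$ off the identity-legged pullback (which the paper delegates to a cited proposition). No gaps.
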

\begin{proof}
	First, ${gf' = fi} \implies {gf'i^{{-1}} = fii^{{-1}} \implies gf'i^{{-1}} = f}$. So the right square commutes.
	
	For the pullback property, let $gx = fy$ commute for a span $C \stackrel{x}{\leftto} X \stackrel{y}{\to} A$. Then using the pullback property of the left square, there exists a unique $z$ such that
	\begin{center}
		\begin{tikzcd}[column sep=40]
			C \arrow[d, "g" description] & A \arrow[l, "f' i^{-1}" description] \arrow[d, equals] \arrow[ld, "=", phantom] & A' \arrow[l, "i" description, two heads, tail] \arrow[ll, "f'" description, bend right] \arrow[ld, "i" description, two heads, tail] & X \arrow[l, "z" description, dotted] \arrow[lll, "x" description, bend right=49] \arrow[lld, "y" description] \\
			B                            & A \arrow[l, "f" description]                                                                 &                                                                                                                                     &                                                                                                              
		\end{tikzcd}
	\end{center}
	commutes. Then $iz$ is a solution for $w$ in $f'i^{{-1}} w = x$ and $1_A w = y$, and it is in fact a unique solution by the second equation. So the commuting square is indeed a pullback.
	
	Finally, that $f'i^{{-1}}$ is a unique solution for $h$ in $f = g \circ h$ follows easily from the established pullback property~\cite[Proposition 13]{overbeek2023quasitoposes}.
\end{proof}

\begin{defi}
	\label{def:pbeq}
	For cospans $A \stackrel{f}{\to} B \stackrel{g}{\leftto} C$ with $\pb{g}{f} \in \IsoC{\CC}$ iso, we let $\pbeq{f}{g}$ denote the unique $h$ satisfying $f = g \circ h$ (well-defined by Proposition~\ref{prop:iso:unique:pb:morphism}).
\end{defi}

\begin{lem}
	\label{lemma:bijection:iso:pattern}
	Let the pullback 
	\begin{center}
		\begin{tikzcd}[ampersand replacement=\&,row sep=26, column sep=60]
			R \arrow[d, "t_R" description, AR] \& R \arrow[d, "w" description] \arrow[l, equals] \arrow[ld, "\mathrm{PB}", phantom] \\
			R'                             \& G_R \arrow[l, "w'" description]                                                               
		\end{tikzcd}
	\end{center}
	be given with
	$t_R \in \ARC{\CC}$. Let $\CC$ be $\AR$-locally finite and $\TT$ a set of objects.
	Then $\chi : \isosub{\ARset{\TT}{G_R}}{w} \to  \isosub{\ARset{\TT}{R'}}{t_R}$ defined by $\chi(t) = w' \circ t$ is a domain-preserving bijection.
\end{lem}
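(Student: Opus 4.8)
The plan is to verify that $\chi$ is well-defined and domain-preserving, and then to establish bijectivity by exhibiting an explicit two-sided inverse $\psi$. Domain-preservation is immediate, since $\dom(w'\circ t)=\dom(t)$, so the real content lies in the $\AR$- and $\IsoC{\CC}$-membership bookkeeping. Before touching $\chi$ itself, I would record two preliminary facts that carry the argument. First, the commutativity of the given pullback square reads $t_R = w'\circ w$; since $t_R\in\AR$, stability under decomposition (Assumption~\ref{assumption:ar}) immediately yields $w\in\AR$. Second, I claim the identity $\pb{t_R}{w'\circ t}=\pb{w}{t}$ for every $t:T\to G_R$. To prove it I would paste the pullback of $w$ along $t$ to the right of the given pullback square, so that the two squares share the edge $R\xrightarrow{w}G_R$; the outer rectangle then has bottom edge $w'\circ t$ and right-hand projection $\pb{w}{t}$, and by the Pullback Lemma (Lemma~\ref{lemma:pullback:lemma}) it is itself a pullback, which is exactly the assertion $\pb{t_R}{w'\circ t}=\pb{w}{t}$. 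This identity transfers the $\IsoC{\CC}$-condition across $\chi$ with no further computation.

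With these in hand, well-definedness of $\chi$ follows cleanly. For $t\in\isosub{\ARset{\TT}{G_R}}{w}$ we have $t\in\AR$ and $\pb{w}{t}\in\IsoC{\CC}$, so the identity gives $\pb{t_R}{w'\circ t}=\pb{w}{t}\in\IsoC{\CC}$, which is precisely the iso-condition for membership in $\isosub{\ARset{\TT}{R'}}{t_R}$. For the $\AR$-condition I would exploit that $\pb{w}{t}$ being iso makes $t$ factor through $w$: Definition~\ref{def:pbeq} supplies $h_t=\pbeq{t}{w}$ with $t=w\circ h_t$, and decomposition gives $h_t\in\AR$; hence $\chi(t)=w'\circ t=(w'\circ w)\circ h_t=t_R\circ h_t\in\AR$ by composition. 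Thus $\chi(t)\in\isosub{\ARset{\TT}{R'}}{t_R}$.

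For the inverse I would set $\psi(s)=w\circ h'_s$, where $h'_s=\pbeq{s}{t_R}$ is the unique morphism with $s=t_R\circ h'_s$ (available because $\pb{t_R}{s}\in\IsoC{\CC}$ for $s\in\isosub{\ARset{\TT}{R'}}{t_R}$). This $\psi$ lands in $\isosub{\ARset{\TT}{G_R}}{w}$ by the symmetric argument: $h'_s\in\AR$ by decomposition and $w\in\AR$, so $\psi(s)\in\AR$ by composition; and since $w'\circ\psi(s)=(w'\circ w)\circ h'_s=t_R\circ h'_s=s$, the identity gives $\pb{w}{\psi(s)}=\pb{t_R}{s}\in\IsoC{\CC}$. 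Finally I would check the two composites are identities. One direction computes directly, $\chi(\psi(s))=w'\circ w\circ h'_s=t_R\circ h'_s=s$. For the other, $\psi(\chi(t))=w\circ h'_{\chi(t)}$, and since $\chi(t)=t_R\circ h_t$ exhibits a factorization of $\chi(t)$ through $t_R$, uniqueness (Proposition~\ref{prop:iso:unique:pb:morphism}) forces $h'_{\chi(t)}=h_t$, whence $\psi(\chi(t))=w\circ h_t=t$.

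The step I expect to be the main obstacle is the transfer of $\AR$-membership, rather than of the iso-condition. A priori $\chi(t)=w'\circ t$ and $\psi(s)=w\circ h'_s$ are composites involving the possibly-non-$\AR$ morphisms $w'$ and $w$, so neither membership is automatic. The resolution is uniform: one rewrites $w'\circ t$ so that the outer factor becomes $t_R\in\AR$, and $w\circ h'_s$ using $w\in\AR$ (itself obtained from $t_R=w'\circ w\in\AR$ by decomposition), and then invokes stability of $\AR$ under composition and decomposition. By contrast, the $\IsoC{\CC}$-part is entirely handled by the Pullback-Lemma identity $\pb{t_R}{w'\circ t}=\pb{w}{t}$, which is where the hypothesis that the square is a pullback is genuinely used; $\AR$-local finiteness plays no role in constructing the bijection and only ensures that both sides are finite sets.
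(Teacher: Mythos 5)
Your proof is correct and follows essentially the same route as the paper's: both rest on Proposition~\ref{prop:iso:unique:pb:morphism} and the Pullback Lemma to transfer the iso-condition, and on composition/decomposition stability of $\ARC{\CC}$ (via the factorizations $\chi(t)=t_R\circ\pbeq{t}{w}$ and $w\circ\pbeq{f}{t_R}$) to transfer $\AR$-membership. The only difference is cosmetic: you package injectivity and surjectivity as an explicit two-sided inverse $\psi$ (which is exactly the map the paper constructs in its surjectivity argument), whereas the paper proves injectivity separately via the universal property of the pullback.
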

\begin{proof}
	That $\chi$ is domain-preserving is immediate from the definition.
	
	We show that $\chi$ is well-typed. Let $t \in \isosub{\ARset{\TT}{G_R}}{w}$ and let $\dom(t) = T \in \TT$. We have
	\begin{equation}
		\label{eq:bijection:iso:pattern:first:diagram}
		\begin{tikzcd}[column sep=60, row sep=26]
			R \arrow[d, "t_R" description, AR] & R \arrow[d, "w" description] \arrow[l, equals] \arrow[ld, "\mathrm{PB}", phantom] & T \arrow[l, "\pbeq{t}{w}" description, AR] \arrow[d, equals] \arrow[ld, "\mathrm{PB}", phantom] \\
			R'                             & G_R \arrow[l, "w'" description]                                                                & T \arrow[l, "t" description, AR]                                                                            
		\end{tikzcd}
	\end{equation}
	using Proposition~\ref{prop:iso:unique:pb:morphism}. By the pullback lemma, the outer square is a pullback, and so $\pb{t_R}{w' \circ t} \in \IsoC{\CC}$. Moreover,
	${w' \circ t} \in \ARset{T}{R'}$, because ${w' \circ t} = {t_R \circ \pbeq{t}{w}}$, and ${t_R \circ \pbeq{t}{w}} \in \ARset{T}{R'}$ by stability under composition (Assumption~\ref{assumption:ar}).
	Thus, the image of $\chi$ lies in $\isosub{\ARset{\TT}{R'}}{t_R}$, and so $\isosub{\ARset{\TT}{R'}}{t_R}$ is a valid codomain.
	
	For injectivity of $\chi$, assume $w' \circ s = w' \circ t$. We then have the diagram
	\begin{equation}
		\begin{tikzcd}[column sep=60, row sep=26]
			R \arrow[d, "t_R" description, AR] & R \arrow[d, "w" description] \arrow[l, equals] \arrow[ld, "\mathrm{PB}", phantom] & T \arrow[l, "\pbeq{s}{w}" description, AR] \arrow[d, equals] \arrow[ld, "\mathrm{PB}", phantom] & T \arrow[lll, "\pbeq{t}{w}" description, bend right=15, AR] \arrow[ld, equals] \arrow[l, "!x" description, dotted] \\
			R'                             & G_R \arrow[l, "w'" description]                                                                & T \arrow[l, "s" description, AR]                                                                             &                                                                                                                         
		\end{tikzcd}
	\end{equation}
	where the outer diagram commutes by ${w' \circ s} = {w' \circ t}$ and Diagram~\eqref{eq:bijection:iso:pattern:first:diagram}. Hence there exists a unique $x : T \to T$ such that $\pbeq{t}{w} = {\id{R} \circ \pbeq{s}{w} \circ x}$ and $\id{T} = {\id{T} \circ x}$. Hence $x = \id{T}$, and by canceling identities, $\pbeq{t}{w} = \pbeq{s}{w}$. Thus $s = {w \circ \pbeq{s}{w}} = {w \circ \pbeq{t}{w}} = t$.
	
	For surjectivity of $\chi$, let $f \in \isosub{\ARset{\TT}{R'}}{t_R}$. We have the diagram
	\begin{equation}
		\begin{tikzcd}[column sep=80, row sep=25]
			R \arrow[d, "t_R" description, AR] & R \arrow[d, "w" description] \arrow[l, equals] \arrow[ld, "\mathrm{PB}", phantom] & T \arrow[ll, "\pbeq{f}{t_R}" description, AR, bend right=15] \arrow[d, equals] \arrow[l, "\pbeq{f}{t_R}" description, AR] \arrow[ld, "=", phantom] \\
			R'                                     & G_R \arrow[l, "w'" description]                                                                & T \arrow[ll, "f" description, AR, bend left=15] \arrow[l, "w \circ \pbeq{f}{t_R}" description, AR]                                                                     
		\end{tikzcd}
	\end{equation}
	where the outer square is a pullback. We have ${t_R \circ \pbeq{f}{t_R}} \in \ARC{\CC}$ by stability under composition, and hence $w \circ \pbeq{f}{t_R} \in \ARC{\CC}$ by stability under decomposition, using $t_R = {w' \circ w}$.
	The right square is moreover a pullback by the pullback lemma. Hence ${w \circ \pbeq{f}{t_R}} \in \isosub{\ARset{\TT}{G_R}}{w}$. It then follows that $f = {w' \circ w \circ \pbeq{f}{t_R}} = \chi(w \circ \pbeq{f}{t_R})$ lies in the image of $\chi$.
\end{proof}

\begin{cor}
	\label{corollary:determining:isosub:weight}
	If the conditions of Lemma~\ref{lemma:bijection:iso:pattern} are met, then 
	\[ 
	\pushQED{\qed}
	\ww(\isosub{\ARset{\TT}{G_R}}{w}) =  \ww(\isosub{\ARset{\TT}{R'}}{t_R}) \text{.} \qedhere
	\popQED
	\] 
\end{cor}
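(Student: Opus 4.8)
The plan is to read off the result directly from the domain-preserving bijection $\chi$ supplied by Lemma~\ref{lemma:bijection:iso:pattern}, since the weight of a tiling depends only on the weight of its domain. First I would make explicit the convention, already implicit in the definition of $\wwTAR{\TT}{\AR}$ and in the statement itself, that for a finite set $S$ of $\AR$-morphisms into a fixed object we write $\ww(S) = \sum_{t \in S} \ww(\dom(t))$. Under this convention the two sides of the claimed equality are sums of domain-weights over the sets $\isosub{\ARset{\TT}{G_R}}{w}$ and $\isosub{\ARset{\TT}{R'}}{t_R}$ respectively.

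The key step is then a reindexing of the left-hand sum along $\chi$. By Lemma~\ref{lemma:bijection:iso:pattern}, $\chi$ is a bijection between these two sets, so it furnishes a legitimate change of summation variable; and because $\chi$ is domain-preserving, $\ww(\dom(t)) = \ww(\dom(\chi(t)))$ for every $t$ in its domain. Chaining these two facts yields
\[
\ww(\isosub{\ARset{\TT}{G_R}}{w}) = \sum_{t \in \isosub{\ARset{\TT}{G_R}}{w}} \ww(\dom(t)) = \sum_{t \in \isosub{\ARset{\TT}{G_R}}{w}} \ww(\dom(\chi(t))) = \sum_{f \in \isosub{\ARset{\TT}{R'}}{t_R}} \ww(\dom(f)) = \ww(\isosub{\ARset{\TT}{R'}}{t_R}) \text{,}
\]
where the second equality uses that $\chi$ preserves domains and the third uses injectivity and surjectivity of $\chi$ to replace the index set.

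There is no substantive obstacle here: all the real work has already been carried out in proving Lemma~\ref{lemma:bijection:iso:pattern}, and the only thing left to confirm is that $\ww$, viewed as a function on finite sets of morphisms, is invariant under a domain-preserving bijection. This holds precisely because $\ww$ of a morphism is a function of its domain alone, so the $\AR$-local finiteness hypothesis guaranteeing that both sums are well defined is all that is required for the reindexing to be valid.
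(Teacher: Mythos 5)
Your proof is correct and matches the paper's intent exactly: the corollary is stated with an immediate \qed precisely because it follows by reindexing the weight sum along the domain-preserving bijection $\chi$ of Lemma~\ref{lemma:bijection:iso:pattern}, which is what you spell out. Making the convention $\ww(S) = \sum_{t \in S}\ww(\dom(t))$ explicit is a reasonable clarification of notation the paper uses implicitly.
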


\subsection{Sliding Tiles Injectively}
\label{sec:sliding:context}

In this section we establish conditions for the existence of a domain-preserving injection 
$\xi : \nisosub{\ARset{\TT}{G_R}}{w} \mono \ARset{\TT}{G_L}$.
Intuitively, one can think of $\xi$ as sliding tiles from right to left across the rewrite step diagram.

If $l' \in \RegC{\CC}$, then $\xi$ will be seen to exist rather straightforwardly. However, in general it suffices to require more weakly that $l'$ preserves any tiles to be slid (and distinctly so). Definitions~\ref{def:morphism:preserves:factorization} and \ref{def:monic:for} help capture such a weaker requirement.
With these definitions, $\xi$ can be shown to exist even for non-trivial rules with non-monic $l'$. 

\begin{defi}\label{def:morphism:preserves:factorization}
	A morphism $g : B \to C$ \emph{preserves an $\AR$-factorization} of $f : A \to B$ if there exists a factorization $f = f'' \circ f'$ of $f$ satisfying
	\begin{itemize}
		\item $f'' \in \ARC{\CC}$; and
		\item $g \circ f'' \in \ARC{\CC}$.
	\end{itemize}
\end{defi}

\begin{lem}
	\label{lemma:monic:on:image}
	Assume $\CC$ has pullbacks. Let the diagram 
	\begin{center}
		\begin{tikzcd}[ampersand replacement=\&,column sep=30,row sep=20,nodes={rectangle,inner sep=1mm,outer sep=0}]
			A \arrow[d, "g" description] \& B \arrow[d, "g'" description] \arrow[l, "f'" description] \arrow[ld, "\mathrm{PB}", phantom] \& X \arrow[l, "x" description, AR] \\
			C                            \& D \arrow[l, "f" description]                                                                 \&                             
		\end{tikzcd}
	\end{center}
	be given, with $x \in \ARC{\CC}$. If $f$ preserves an $\AR$-factorization of $g' \circ x$, then $f' \circ x \in \ARC{\CC}$.
\end{lem}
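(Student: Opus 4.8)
The plan is to unpack the factorization hypothesis, paste an auxiliary pullback onto the given square so that a pullback of an $\AR$-morphism appears on the left, and then route $x$ through this pasted pullback using its universal property, closing with the composition and decomposition properties of $\ARC{\CC}$.

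First I would unpack the hypothesis. By Definition~\ref{def:morphism:preserves:factorization}, the statement that $f$ preserves an $\AR$-factorization of $g' \circ x$ supplies a factorization $g' \circ x = f'' \circ e$ through some object $Z$, with $e : X \to Z$ and $f'' : Z \to D$, where $f'' \in \ARC{\CC}$ and, crucially, $f \circ f'' \in \ARC{\CC}$. The given square is the pullback $B = A \times_C D$, with projections $f' : B \to A$ and $g' : B \to D$ satisfying $g \circ f' = f \circ g'$.

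Next I would form the pullback $Q = B \times_D Z$ of $g'$ along $f''$, with projections $p : Q \to B$ and $q : Q \to Z$, so that $g' \circ p = f'' \circ q$. Pasting this pullback square to the right of the given one and invoking Lemma~\ref{lemma:pullback:lemma}, the outer rectangle is again a pullback. Hence $Q$ is the pullback of $g$ along $f \circ f''$, and its projection to $A$ is exactly $f' \circ p = \pb{f \circ f''}{g}$. Since $f \circ f'' \in \ARC{\CC}$ and $\ARC{\CC}$ is stable under pullback (Assumption~\ref{assumption:ar}), this yields $f' \circ p \in \ARC{\CC}$.

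It then remains to route $x$ through $Q$. The equation $g' \circ x = f'' \circ e$ exhibits $(x, e)$ as a cone over the cospan $B \stackrel{g'}{\to} D \stackrel{f''}{\leftto} Z$, so the universal property of $Q$ produces a unique $h : X \to Q$ with $p \circ h = x$ and $q \circ h = e$. From $x = p \circ h \in \ARC{\CC}$ and stability under decomposition (Assumption~\ref{assumption:ar}) we get $h \in \ARC{\CC}$, and therefore $f' \circ x = (f' \circ p) \circ h$ is a composite of two $\AR$-morphisms, which lies in $\ARC{\CC}$ by stability under composition. I expect the only genuinely non-mechanical step to be the recognition that the preserved factorization does double duty: $f''$ simultaneously provides the cone needed to construct the mediating $h$ and, after pasting, an $\AR$-morphism $f \circ f''$ whose pullback absorbs the $\AR$-membership; everything else is pullback pasting together with the closure assumptions on $\ARC{\CC}$.
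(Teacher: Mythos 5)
Your proof is correct and is essentially the paper's own argument: both pull back $g'$ along the $\AR$-part of the preserved factorization, use the pullback lemma to recognize the outer rectangle as the pullback of $f \circ f''$ along $g$ (giving the top composite in $\ARC{\CC}$ by pullback stability), obtain the mediating morphism from $X$ into the new pullback object, and finish with decomposition and composition. Only the notation differs (the paper writes $k, k', k'', y$ for your $e, f'', p, h$).
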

\begin{proof}
	By assumption, there exists an $\AR$-factorization $k' \circ k$ of $g' \circ x$ with $k', (f \circ k') \in \ARC{\CC}$. Construct the pullback of $B \stackrel{g'}{\to} D \stackrel{k'}{\leftARar} K$ as depicted in diagram
	\begin{center}
		\begin{tikzcd}
			A \arrow[d, "g" description] & B \arrow[d, "g'" description] \arrow[l, "f'" description] \arrow[ld, "\mathrm{PB}", phantom] & E \arrow[l, "k''" description, AR] \arrow[d, "g''" description] \arrow[ld, "\mathrm{PB}", phantom] & X \arrow[ll, "x" description, AR, bend right] \arrow[ld, "k" description, bend left] \arrow[l, "y" description, dotted, AR] \\
			C                            & D \arrow[l, "f" description]                                                                 & K \arrow[l, "k'" description, AR] \arrow[ll, "f \circ k'" description, AR, bend left]        &                                                                                                                                 
		\end{tikzcd}
	\end{center}
	where $k'' \in \ARC{\CC}$ by pullback stability.
	By the pullback property, there exists a morphism $y : X \to E$ such that $x = k'' \circ y$ and $k = g'' \circ y$. By $x \in \ARC{\CC}$ and the decomposition property, $y \in \ARC{\CC}$. By the pullback lemma, the two squares compose to form a larger pullback square so that $f' \circ k'' \in \ARC{\CC}$ by pullback stability. Finally, by stability under composition, $f' \circ x = (f' \circ k'') \circ y  \in \ARC{\CC}$. 
\end{proof}

\newcommand{\EE}{\mathcal{E}}

\begin{prop}
	\label{prop:preserving:a:factorization:and:factorization:systems}
	If $\CC$ has (up to isomorphism) unique $(\EE, \AR)$-factorizations (for some class of morphisms $\EE$), then the following statements are equivalent:
	\begin{enumerate}
		\item $g : B \to C$ preserves an $\AR$-factorization of $f:  A \to B$.
		\item $g \circ a \in \ARC{\CC}$ for $f = {a \circ e}$ an $(\EE, \AR)$-factorization of $f$, i.e., $a \in \AR$ and $e \in \EE$.
	\end{enumerate}
\end{prop}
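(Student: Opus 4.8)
The plan is to prove the two implications separately, observing that the reverse direction is immediate while the forward direction is where the uniqueness hypothesis does the real work.

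For $(2) \Rightarrow (1)$, I would simply note that the $(\EE, \AR)$-factorization $f = a \circ e$ itself witnesses Definition~\ref{def:morphism:preserves:factorization}. Taking $f'' = a$ and $f' = e$, we have $f = f'' \circ f'$ with $f'' = a \in \ARC{\CC}$ (by definition of the factorization) and $g \circ f'' = g \circ a \in \ARC{\CC}$ (by the hypothesis of (2)). This is exactly what it means for $g$ to preserve an $\AR$-factorization of $f$.

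For $(1) \Rightarrow (2)$, suppose $g$ preserves an $\AR$-factorization of $f$, so there is a factorization $f = f'' \circ f'$ with $f'' \in \ARC{\CC}$ and $g \circ f'' \in \ARC{\CC}$. The obstacle is that this need not be an $(\EE, \AR)$-factorization, since $f'$ need not lie in $\EE$. To fix this, I would apply the assumed existence of $(\EE, \AR)$-factorizations to $f'$, writing $f' = a' \circ e'$ with $a' \in \ARC{\CC}$ and $e' \in \EE$. Then $f = f'' \circ f' = (f'' \circ a') \circ e'$, and since $f'', a' \in \ARC{\CC}$, stability under composition (Assumption~\ref{assumption:ar}) gives $f'' \circ a' \in \ARC{\CC}$. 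Hence $(f'' \circ a') \circ e'$ is a genuine $(\EE, \AR)$-factorization of $f$, which makes uniqueness applicable.

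Invoking uniqueness up to isomorphism, and comparing with the given factorization $f = a \circ e$, I would extract an isomorphism $d$ between the two intermediate objects with $a = (f'' \circ a') \circ d$. Since $\IsoC{\CC} \subseteq \ARC{\CC}$ (as noted after Assumption~\ref{assumption:ar}), we have $d \in \ARC{\CC}$, so $g \circ a = (g \circ f'') \circ a' \circ d$ is a composite of three $\ARC{\CC}$-morphisms, and a final appeal to stability under composition yields $g \circ a \in \ARC{\CC}$, which is statement (2). The only genuine subtlety is the refactor-and-recombine step that converts the given factorization into an honest $(\EE, \AR)$-factorization and transports the property $g \circ (\cdot) \in \ARC{\CC}$ across the comparison isomorphism; the rest is routine type- and composition-bookkeeping.
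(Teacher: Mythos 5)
Your proposal is correct and follows essentially the same route as the paper's proof: the reverse direction is immediate from the definition, and the forward direction works by $(\EE,\AR)$-factorizing $f'$ to upgrade the given factorization to an $(\EE,\AR)$-factorization of $f$, then invoking essential uniqueness to obtain a comparison isomorphism (which lies in $\ARC{\CC}$) and concluding by stability under composition. No gaps.
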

\begin{proof}
	$1 \implies 2$: let $f = {f'' \circ f'}$ be given with $f'', (g \circ f'') \in \ARC{\CC}$. $(\EE, \AR)$-factorize both $f'$ and $f$ as in the following diagram:
	\begin{center}
		\begin{tikzcd}[column sep=40]
			A \arrow[rd, dashed, "e'" description] \arrow[rr, "f'" description] \arrow[rdd, dashed, bend right, "e" description] &                                      & D \arrow[r, "f''" description, AR] \arrow[rr, "g \circ f''" description, bend left, AR] & B \arrow[r, "g" description] \arrow[from=lll, "f" description, bend left, crossing over] & C \\
			& Y \arrow[ru, AR, "a'" description]                         &                                                                                 &                              &   \\
			& X \arrow[rruu, bend right, AR, "a" description] \arrow[u, AR, "i" description] &                                                                                 &                              &  
		\end{tikzcd}
	\end{center}
	The dashed arrows are in $\EE$. Then $(e', f'' \circ a')$ is an $(\EE, \AR)$-factorization of $f$ (using that $\ARC{\CC}$ is stable under composition), so that by essential uniqueness of such factorizations, there exists an $(i : X \to Y) \in \IsoC{\CC} \subseteq \ARC{\CC}$ with $a = {(f'' \circ  a') \circ i}$. Then $g \circ a \in \ARC{\CC}$ by $g \circ a = {(g \circ f'') \circ a' \circ i}$ and the assumption that $\ARC{\CC}$ is stable under composition.
	
	$2 \implies 1$: immediate by definition.
\end{proof}

\begin{rem}
	In any category $\CC$, morphisms $f: A \to B$ trivially admit a unique (identity, morphism)-factorization, namely $(\id{\dom(f)}, f)$. So if $\ARC{\CC} = \HomC{\CC}$, verifying whether $g: B \to C$ preserves an $\AR$-factorization of $f$ reduces to verifying whether $g \circ f \in \HomC{\CC}$, which is immediate. Moreover, a quasitopos admits both (essentially) unique (epi, regular mono)-factorizations and (essentially) unique (regular epi, mono)-factorizations (Proposition~\ref{prop:quasitopos:properties:summary}). Thus, if $\CC$ is a quasitopos and $\ARC{\CC} = \RegC{\CC}$ or $\ARC{\CC} = \MonoC{\CC}$, verifying whether any morphism $g: B \to C$ preserves an $\AR$-factorization of $f$ reduces to verifying a property for the respective factorizations.
\end{rem}

\begin{defi}[Monic For]
	\label{def:monic:for}
	Morphism $h : B \to C$ is \emph{monic for} morphisms $f,g: A \to B$ if $h \circ f = h \circ g$ implies $f = g$.
\end{defi}

\begin{lem}
	\label{lemma:monic:for:pb:stable}
	Let the diagram 
	\begin{center}
		\begin{tikzcd}[ampersand replacement=\&,column sep=30,row sep=25,nodes={rectangle,inner sep=1mm,outer sep=0}]
			A \arrow[d, "g" description] \& B \arrow[d, "g'" description] \arrow[l, "f'" description] \arrow[ld, "\mathrm{PB}", phantom] \& X \arrow[l, "y" description, shift left=2] \arrow[l, "x" description, shift right=2] \\
			C                            \& D \arrow[l, "f" description]                                                                 \&                                                                                   
		\end{tikzcd}
	\end{center}
	be given. If $f$ is monic for $g' \circ x$ and $g' \circ y$, then $f'$ is monic for $x$ and $y$.
\end{lem}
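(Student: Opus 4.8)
The plan is to exploit the universal property of the pullback square $ABCD$: a morphism into the pullback $B$ is uniquely determined by its two composites with the projections $f'$ and $g'$. So, assuming $f' \circ x = f' \circ y$ (the hypothesis of ``$f'$ monic for $x$ and $y$''), I would reduce the goal $x = y$ to establishing the single additional equation $g' \circ x = g' \circ y$.

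To obtain that equation, I would first use the commutativity of the pullback square, $f \circ g' = g \circ f'$. Precomposing this identity with $x$ and with $y$ yields $f \circ g' \circ x = g \circ f' \circ x$ and $f \circ g' \circ y = g \circ f' \circ y$. From the assumption $f' \circ x = f' \circ y$ we get $g \circ f' \circ x = g \circ f' \circ y$, and combining the three equalities gives $f \circ (g' \circ x) = f \circ (g' \circ y)$. At this point I would invoke the hypothesis that $f$ is monic for $g' \circ x$ and $g' \circ y$ (Definition~\ref{def:monic:for}): since the two morphisms become equal after composing with $f$, we conclude $g' \circ x = g' \circ y$.

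Having now secured both $f' \circ x = f' \circ y$ and $g' \circ x = g' \circ y$, I would finish by appealing to the uniqueness part of the pullback universal property for the square $ABCD$, which forces $x = y$. This is exactly the assertion that $f'$ is monic for $x$ and $y$.

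The argument is purely diagrammatic and presents no real obstacle; the only point worth flagging is that the ``monic for'' hypothesis is precisely what is needed to pass from equality after $f \circ g'$ to equality after $g'$ alone, which is the one step a bare monomorphism assumption on $f$ would also supply but which the weaker ``monic for'' formulation delivers just for the relevant pair.
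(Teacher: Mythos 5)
Your proposal is correct and follows essentially the same route as the paper's proof: derive $g' \circ x = g' \circ y$ from the commutativity of the square and the ``monic for'' hypothesis on $f$, then conclude $x = y$ from the uniqueness clause of the pullback's universal property (the paper phrases this by exhibiting both $x$ and $y$ as mediating morphisms for the same cone). No gaps.
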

\begin{proof}
	Assume ($\dagger_1$) ${f' \circ x} = {f' \circ y}$. We must show $x = y$. We have:
	\begin{align*}
		{f' \circ x} &= {f' \circ y} \\
		{g \circ f' \circ x} &= {g \circ f' \circ y} \\
		{f \circ g' \circ x} &= {f \circ g' \circ y} \\
		{g' \circ x} &= {g' \circ y}  \tag{$\dagger_2$}
	\end{align*}
	using ${g \circ f'} = {f \circ g'}$ and the assumption that $f$ is monic for $g' \circ x$ and $g' \circ y$. By ${g \circ (f' \circ x)} = {f \circ (g' \circ x)}$ and the pullback property, there exists a unique $z$ such that in diagram
	\begin{center}
		\begin{tikzcd}
			A \arrow[d, "g" description] & B \arrow[d, "g'" description] \arrow[l, "f'" description] \arrow[ld, "\mathrm{PB}", phantom] & X \arrow[ll, "f' \circ x" description, bend right] \arrow[ld, "g' \circ x" description] \arrow[l, "!z" description, dotted] \\
			C                            & D \arrow[l, "f" description]                                                                 &                                                                                                                            
		\end{tikzcd}
	\end{center}
	$f' \circ x = f' \circ z$ and $g' \circ x = g' \circ z$. Solution $z = x$ is trivial, and $z = y$ is a solution by ($\dagger_1$) and ($\dagger_2$). Hence $x = y$.
\end{proof}

The morphism $g_R : G_K \to G_R$ of the rewrite step may identify elements. So for the injection $\xi$ from right to left to exist, we must be able to go into the inverse direction, without identifying tiles. To this end, the following lemma will prove useful.

\begin{lem}
	\label{lemma:split:epi:squares:injective}
	If epimorphisms $e$ and $e'$ in diagram
	\begin{center}
		\begin{tikzcd}[ampersand replacement=\&]
			A' \arrow[r, "f'" description] \arrow[d, "e" description, two heads] \arrow[rd, "=", phantom] \& B' \arrow[d, "h" description] \& C' \arrow[l, "g'" description] \arrow[d, "e'" description, two heads] \arrow[ld, "=", phantom] \\
			A \arrow[r, "f" description]                                                                  \& B                             \& C \arrow[l, "g" description]                                                                  
		\end{tikzcd}
	\end{center}
	are split, then for right inverses $\rightinv{e}$ and $\rightinv{e'}$, $f'\rightinv{e} = g'\rightinv{e'} \implies f = g$.
\end{lem}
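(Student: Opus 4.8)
The plan is a direct diagram chase that hinges on the two defining equations of the split epimorphisms, namely $e \circ \rightinv{e} = \id{A}$ and $e' \circ \rightinv{e'} = \id{C}$, together with the commutativity of the two squares, which read $f \circ e = h \circ f'$ (left) and $g \circ e' = h \circ g'$ (right). Since $h$ is not assumed monic and neither square is assumed to be a pullback, the only structure available is splitness plus commutativity, and the argument will use nothing more.

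First I would rewrite the left-hand map $f$ by inserting the identity $\id{A} = e \circ \rightinv{e}$ on the right, obtaining $f = f \circ e \circ \rightinv{e}$. Applying the commutativity of the left square transports this across to $B'$, giving $f = h \circ f' \circ \rightinv{e}$. At this point the hypothesis $f' \circ \rightinv{e} = g' \circ \rightinv{e'}$ is exactly what is needed to replace the inner composite, yielding $f = h \circ g' \circ \rightinv{e'}$. Finally I would use the commutativity of the right square ($h \circ g' = g \circ e'$) and then the split property $e' \circ \rightinv{e'} = \id{C}$ to collapse the chain to $f = g \circ e' \circ \rightinv{e'} = g$, as required.

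There is no genuine obstacle here: the argument is a five-step rewriting in which each step uses exactly one of the available equalities, and it never appeals to uniqueness of the chosen right inverses. The only point requiring care is bookkeeping — correctly reading off which commuting square supplies which identity, and keeping the orientation of $\rightinv{e}$ and $\rightinv{e'}$ straight, since these are merely fixed choices of right inverse (recall they need not be unique). I would therefore present the computation as a single displayed chain of equalities to make the role of each hypothesis transparent.
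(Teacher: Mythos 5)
Your argument is correct and is essentially the paper's own proof: both compose the hypothesis with $h$, use the commutativity of the two squares to rewrite $hf'\rightinv{e}$ and $hg'\rightinv{e'}$ as $fe\rightinv{e}$ and $ge'\rightinv{e'}$, and then cancel via the right-inverse equations. The only difference is presentational (you start the chain from $f$ rather than from the hypothesis).
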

\begin{proof}
	$f'\rightinv{e} = g'\rightinv{e'}  
	\hspace{-0.22mm} \implies \hspace{-0.22mm}
	hf'\rightinv{e} = hg'\rightinv{e'} 
	\hspace{-0.22mm} \implies \hspace{-0.22mm}
	fe\rightinv{e} = ge'\rightinv{e'} 
	\hspace{-0.22mm} \implies  \hspace{-0.22mm}
	f = g$.
\end{proof}

We are now ready for the main theorem of this subsection. We recommend keeping the diagram for Lemma~\ref{lemma:VK:step} alongside it.

\begin{thm}
	\label{thm:xi:is:injection}
	Assume $\CC$ has pullbacks.
	Let a rewrite rule $\rho$ be given with square $r' \circ t_K = t_R \circ r$ \pbpoadhesive{}. Fix a class $\ARC{\CC}$ and a set of objects $\TT$. 
	Define
	\[ 
	\Phi = \{ f \in  \nisosub{\homset{\TT}{R'}}{t_R} \mid  \pb{f}{t_R} \in \ARC{\CC} \} \text{.}
	\]
	If
	\begin{enumerate}
		\item\label{ass:phi:split:epi} for all $f \in \Phi$,   $\pb{r'}{f}$  is a split epimorphism; and
		\item for some right inverse choice function $\rightinv{(\cdot)}$ and for all $f,g \in \Phi$, 
		\begin{enumerate}
			\item\label{ass:phi:l':factorization} $l'$ preserves an $\AR$-factorization of $\pb{f}{r'} \circ \rightinv{\pb{r'}{f}}$; and
			\item\label{ass:phi:l':monic} $l'$ is monic for $\pb{f}{r'} \circ \rightinv{\pb{r'}{f}}$ and $\pb{g}{r'} \circ \rightinv{\pb{r'}{g}}$;
		\end{enumerate}
	\end{enumerate}
	then for any rewrite step diagram induced by $\rho$, the function 
	\[
	\xi(t) = {g_L \circ \pb{t}{g_R} \circ \rightinv{\pb{g_R}{t}}}
	\]
	defines an injection $\nisosub{\ARset{\TT}{G_R}}{w} \mono \ARset{\TT}{G_L}$.
\end{thm}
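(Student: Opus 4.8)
The plan is to treat well-definedness, codomain-correctness, and injectivity of $\xi$ in turn, each time translating step-level pullback data into rule-level data via the four equalities of Lemma~\ref{lemma:VK:step} (available since the square is \pbpoadhesive{}). For a tiling $t$ write $f_t = w' \circ t$ and $\psi(t) = \pb{t}{g_R} \circ \rightinv{\pb{g_R}{t}}$, so that $\xi(t) = g_L \circ \psi(t)$ and $\xi$ is visibly domain-preserving. The crucial first step is to show that every $t \in \nisosub{\ARset{\TT}{G_R}}{w}$ yields $f_t \in \Phi$: indeed $f_t \in \homset{T}{R'}$ with $T = \dom(t) \in \TT$; by Lemma~\ref{lemma:VK:step}.\ref{lemma:VK:step:along:t_R} we have $\pb{t_R}{f_t} = \pb{w}{t} \notin \IsoC{\CC}$, placing $f_t$ in $\nisosub{\homset{\TT}{R'}}{t_R}$; and by Lemma~\ref{lemma:VK:step}.\ref{lemma:VK:step:AR}, $\pb{f_t}{t_R} = \pb{t}{w} \in \ARC{\CC}$, since $t \in \ARC{\CC}$ and $\ARC{\CC}$ is stable under pullback (Assumption~\ref{assumption:ar}). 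Hence $f_t \in \Phi$, and Lemma~\ref{lemma:VK:step}.\ref{lemma:VK:step:along:t} gives $\pb{g_R}{t} = \pb{r'}{f_t}$, which is a split epimorphism by hypothesis~\ref{ass:phi:split:epi}. Thus $\rightinv{\pb{g_R}{t}}$ exists, and I fix it to coincide with the choice function of hypotheses~\ref{ass:phi:l':factorization}--\ref{ass:phi:l':monic}, i.e.\ $\rightinv{\pb{g_R}{t}} = \rightinv{\pb{r'}{f_t}}$; this makes $\xi$ well-defined.

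To see that $\xi(t) \in \ARset{\TT}{G_L}$, it suffices (by domain-preservation) to prove $\xi(t) \in \ARC{\CC}$. First, $\psi(t) \in \ARC{\CC}$: the factor $\pb{t}{g_R}$ is a pullback of $t \in \ARC{\CC}$ and so lies in $\ARC{\CC}$ by pullback stability, while $\rightinv{\pb{g_R}{t}} \in \RegC{\CC} \subseteq \ARC{\CC}$ by Proposition~\ref{prop:split:epi:facts}, whence their composite is in $\ARC{\CC}$ by stability under composition. Now apply Lemma~\ref{lemma:monic:on:image} to the lower-left pullback square $\alpha \circ g_L = l' \circ u'$ of the rewrite step (Definition~\ref{def:pbpostrong:rewrite:step}), instantiating its $x$ with $\psi(t)$. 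Its hypothesis requires $l'$ to preserve an $\AR$-factorization of $u' \circ \psi(t)$; but by Lemma~\ref{lemma:VK:step}.\ref{lemma:VK:step:along:g_R} and the identification of right inverses above, $u' \circ \psi(t) = \pb{f_t}{r'} \circ \rightinv{\pb{r'}{f_t}}$, which is exactly the morphism whose factorization is preserved by hypothesis~\ref{ass:phi:l':factorization}. Lemma~\ref{lemma:monic:on:image} then delivers $\xi(t) = g_L \circ \psi(t) \in \ARC{\CC}$.

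For injectivity, domain-preservation lets me restrict to $s, t$ with a common domain $T$; assume $\xi(s) = \xi(t)$. I apply Lemma~\ref{lemma:monic:for:pb:stable} to the same pullback square $\alpha \circ g_L = l' \circ u'$ with $x = \psi(s)$ and $y = \psi(t)$. By hypothesis~\ref{ass:phi:l':monic} together with $u' \circ \psi(s) = \pb{f_s}{r'} \circ \rightinv{\pb{r'}{f_s}}$ and the analogous identity for $t$, the morphism $l'$ is monic for $u' \circ \psi(s)$ and $u' \circ \psi(t)$; hence $g_L$ is monic for $\psi(s)$ and $\psi(t)$. From $g_L \circ \psi(s) = \xi(s) = \xi(t) = g_L \circ \psi(t)$ I obtain $\psi(s) = \psi(t)$. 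Finally, I apply Lemma~\ref{lemma:split:epi:squares:injective} to the two pullback squares $g_R \circ \pb{s}{g_R} = s \circ \pb{g_R}{s}$ and $g_R \circ \pb{t}{g_R} = t \circ \pb{g_R}{t}$, whose downward legs $\pb{g_R}{s}$ and $\pb{g_R}{t}$ are the split epimorphisms found above; since $\psi(s) = \psi(t)$ reads $\pb{s}{g_R} \circ \rightinv{\pb{g_R}{s}} = \pb{t}{g_R} \circ \rightinv{\pb{g_R}{t}}$, the lemma yields $s = t$.

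The main difficulty is organisational rather than conceptual: one must consistently re-express all step-level data — $\pb{g_R}{t}$, $\pb{t}{g_R}$, and the chosen right inverses — in terms of the rule-level data $\pb{r'}{f_t}$ and $\pb{f_t}{r'}$ through the four equalities of Lemma~\ref{lemma:VK:step}, using a single fixed right-inverse choice function, so that the purely rule-level hypotheses~\ref{ass:phi:split:epi}--\ref{ass:phi:l':monic} about $\Phi$ apply verbatim. The one genuinely substantial ingredient is the factorization-preservation argument through Lemma~\ref{lemma:monic:on:image}: it is precisely what allows $\xi(t)$ to land in $\ARC{\CC}$ even when $l'$, and therefore $g_L$, fails to be monic.
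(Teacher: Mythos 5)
Your proposal is correct and follows essentially the same route as the paper's own proof: the same three-part structure (well-definedness of the right inverse via showing $w' \circ t \in \Phi$, membership of $\xi(t)$ in $\ARC{\CC}$ via Lemma~\ref{lemma:monic:on:image} applied to the square $\alpha \circ g_L = l' \circ u'$, and injectivity via Lemmas~\ref{lemma:monic:for:pb:stable} and~\ref{lemma:split:epi:squares:injective}), with the same translations through the four equalities of Lemma~\ref{lemma:VK:step}. The only cosmetic difference is that you explicitly "fix" the right inverse of $\pb{g_R}{t}$ to agree with that of $\pb{r'}{w'\circ t}$, whereas in the paper this is automatic because Lemma~\ref{lemma:VK:step}.\ref{lemma:VK:step:along:t} identifies the two morphisms outright.
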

\begin{proof}
	We first argue that the use of $\rightinv{(\cdot)}$ in $\xi(t)$ is well-defined.
	Because $t \in \ARC{\CC}$, we have $\pb{t}{w} \in \ARC{\CC}$ by pullback stability. And $\pb{t}{w} = \pb{w' \circ t}{t_R}$ by Lemma~\ref{lemma:VK:step}.\ref{lemma:VK:step:AR}. Moreover, $\pb{w}{t} \notin \IsoC{\CC}$, and $\pb{w}{t} = \pb{t_R}{w' \circ t}$ by Lemma~\ref{lemma:VK:step}.\ref{lemma:VK:step:along:t_R}. So $\pb{w' \circ t}{t_R} \in \Phi$. Then by local assumption~\ref{ass:phi:split:epi}, $\pb{r'}{w' \circ t}$ is a split epimorphism. And $\pb{r'}{w' \circ t} = \pb{g_R}{t}$ by Lemma~\ref{lemma:VK:step}.\ref{lemma:VK:step:along:t}. So $\rightinv{\pb{g_R}{t}}$ is well-defined.
	
	We next argue that $\xi(t) \in \ARset{\TT}{G_L}$.
	As established, $\pb{r'}{w' \circ t} = \pb{g_R}{t}$, and by Lemma~\ref{lemma:VK:step}.\ref{lemma:VK:step:along:g_R}, $u' \circ \pb{t}{g_R} = \pb{w' \circ t}{r'}$. So the diagram
	\begin{center}
		\begin{tikzcd}[column sep=120]
			G_L \arrow[d, "\alpha" description] & G_K \arrow[d, "u'" description] \arrow[l, "g_L" description] \arrow[ld, "\mathrm{PB}", phantom] & T \arrow[l, "{\pb{t}{g_R} \circ \rightinv{\pb{g_R}{t}}}" description, AR] \arrow[ld, "\pb{w' \circ t}{r'} \circ \rightinv{\pb{r'}{w' \circ t}}" description, bend left=10] \\
			L'                                & K' \arrow[l, "l'" description]                                                                &                                                             
		\end{tikzcd}
	\end{center}
	commutes, where the pullback square is given by the rewrite step. Moreover, $\pb{t}{g_R} \circ \rightinv{\pb{g_R}{t}} \in \ARC{\CC}$ (as indicated in the diagram) by stability under composition, using $\pb{t}{g_R} \in \ARC{\CC}$ (by pullback stability and $t \in \ARC{\CC}$) and $\rightinv{\pb{g_R}{t}} \in \RegC{\CC} \subseteq \ARC{\CC}$ (using Proposition~\ref{prop:split:epi:facts} and Assumption~\ref{assumption:ar}). By local assumption~\ref{ass:phi:l':factorization} and the commuting triangle of the diagram, $l'$ preserves an $\AR$-factorization of 
	$u' \circ \pb{t}{g_R} \circ \rightinv{\pb{g_R}{t}}$. So by Lemma~\ref{lemma:monic:on:image}, $\xi(t) \in \ARC{\CC}$ and consequently $\xi(t) \in \ARset{\TT}{G_L}$.
	
	For injectivity of $\xi$, assume $\xi(t) = \xi(s)$ for $t,s \in \nisosub{\ARset{\TT}{G_R}}{w}$. 
	By local assumption~\ref{ass:phi:l':monic}, Lemma~\ref{lemma:VK:step}.\ref{lemma:VK:step:along:t}, and Lemma~\ref{lemma:VK:step}.\ref{lemma:VK:step:along:g_R}, $l'$ is monic for
	\[
	{\pb{w' \circ t}{r'} \circ \rightinv{\pb{r'}{w' \circ t}}} = {u' \circ \pb{t}{g_R} \circ \rightinv{\pb{g_R}{t}}}
	\]
	and 
	\[
	{\pb{w' \circ s}{r'} \circ \rightinv{\pb{r'}{w' \circ s}}} = 
	{u' \circ \pb{s}{g_R} \circ \rightinv{\pb{g_R}{s}}} \text{.}
	\]
	So by Lemma~\ref{lemma:monic:for:pb:stable}, $g_L$ is monic for $\pb{t}{g_R} \circ \rightinv{\pb{g_R}{t}}$ and $\pb{s}{g_R} \circ \rightinv{\pb{g_R}{s}}$. Then because $\xi(t) = \xi(s)$, $\pb{t}{g_R} \circ \rightinv{\pb{g_R}{t}} = \pb{s}{g_R} \circ \rightinv{\pb{g_R}{s}}$. Then finally, $t = s$ by Lemma~\ref{lemma:split:epi:squares:injective}.
\end{proof}

\subsection{The Main Result}
\label{sec:termination:theorem}

We are now ready to prove the main result of this paper (Theorem~\ref{thm:termination:by:element:counting}) and its corollary (Corollary~\ref{corr:termination:by:element:counting}). We also show that in rather common settings, many technical conditions of the theorem are met automatically (Lemma \ref{lemma:sliding:does:not:collide} and Propositions~\ref{proposition:rmadhesive:quasitopos:consequences} and~\ref{prop:finite:presheaf}). We close with a complementary lemma that establishes decreasingness for deleting rules (Lemma~\ref{lemma:deleting:rules:terminate}). Examples of applications will be given in Section~\ref{section:examples}.

\begin{thm}[Decreasingness by Element Counting]
	\label{thm:termination:by:element:counting}
	Let $\sys$ and $\sys'$ be disjoint sets of \pbpostrong{} rules.
	Assume $\CC$ has pullbacks and let $\ARC{\CC}$ be a class such that $\CC$ is $\AR$-locally finite.
	Let $\TT$ be a set of objects and $\ww: \TT \to \mathbb{N}$ a weight function such that, for every $\rho \in {\sys \uplus \sys'}$, the following conditions hold:
	\begin{itemize}
		\item $\rho$'s pushout square $r' \circ t_K = t_R \circ r$  is \pbpoadhesive{}; and
		\item $t_R \in \ARC{\CC}$; and 
		\item set $\Phi_{\rho} = \{ f \in \nisosub{\homset{\TT}{R'}}{t_R} \mid  \pb{f}{t_R} \in \ARC{\CC} \}$ meets the conditions of Theorem~\ref{thm:xi:is:injection} for some right inverse choice function $\rightinv{(\cdot)}$; and 
		\item
		there exists a set $\tilingL_{\rho} \subseteq \ARset{\TT}{L}$ such that
		\begin{itemize}
			\item for all $f \in \Phi_\rho$ and $t \in \tilingL_{\rho}$,  $l' \circ \pb{f}{r'} \circ \rightinv{\pb{r'}{f}} \neq t_L \circ t$;
			\item $t_L$ is monic for all $t,t' \in \tilingL_\rho$; and
			\item 
			$\ww(\tilingL_\rho) > \ww(\isosub{\ARset{\TT}{R'}}{t_R})$ if $\rho \in \sys$ and $\ww(\tilingL_\rho) \ge \ww(\isosub{\ARset{\TT}{R'}}{t_R})$ if $\rho \in \sys'$.
		\end{itemize}
	\end{itemize}
	Then for any rewrite step with match $m \in \ARC{\CC}$, induced by a rule $\rho \in \sys \uplus \sys'$, we have 
	$\wwTAR{\TT}{\AR}(G_L) > \wwTAR{\TT}{\AR}(G_R)$ if $\rho \in \sys$ and $\wwTAR{\TT}{\AR}(G_L) \ge \wwTAR{\TT}{\AR}(G_R)$ if $\rho \in \sys'$.
\end{thm}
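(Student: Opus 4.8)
The plan is to realize the inequality chain sketched at the start of this section, by partitioning $\ARset{\TT}{G_R} = \isosub{\ARset{\TT}{G_R}}{w} \uplus \nisosub{\ARset{\TT}{G_R}}{w}$ (Definition~\ref{def:iso:partitions}) and accounting for the two halves separately at the level of the rule. Since $\wwTAR{\TT}{\AR}(G_R) = \ww(\isosub{\ARset{\TT}{G_R}}{w}) + \ww(\nisosub{\ARset{\TT}{G_R}}{w})$, it suffices to exhibit two \emph{disjoint} subsets of $\ARset{\TT}{G_L}$, namely $m \circ \tilingL_\rho$ and the image $\xi \circ \nisosub{\ARset{\TT}{G_R}}{w}$, whose weights dominate these two summands (strictly, for the first, when $\rho \in \sys$). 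Then $\wwTAR{\TT}{\AR}(G_L) = \ww(\ARset{\TT}{G_L}) \geq \ww(m \circ \tilingL_\rho) + \ww(\xi \circ \nisosub{\ARset{\TT}{G_R}}{w})$ by disjointness, and the three displayed steps close the argument.

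First I would dispatch the two weight transfers. The hypotheses $t_R \in \ARC{\CC}$ and $\AR$-local finiteness are precisely those of Lemma~\ref{lemma:bijection:iso:pattern}, so Corollary~\ref{corollary:determining:isosub:weight} gives $\ww(\isosub{\ARset{\TT}{G_R}}{w}) = \ww(\isosub{\ARset{\TT}{R'}}{t_R})$, transporting the pattern weight to the rule. For the remainder, \pbpoadhesivity{} of $r' \circ t_K = t_R \circ r$ and the assumption that $\Phi_\rho$ meets the conditions of Theorem~\ref{thm:xi:is:injection} (for the fixed choice function $\rightinv{(\cdot)}$) yield the domain-preserving injection $\xi : \nisosub{\ARset{\TT}{G_R}}{w} \mono \ARset{\TT}{G_L}$ with $\xi(t) = g_L \circ \pb{t}{g_R} \circ \rightinv{\pb{g_R}{t}}$. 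Being domain-preserving and injective, $\xi$ satisfies $\ww(\xi \circ \nisosub{\ARset{\TT}{G_R}}{w}) = \ww(\nisosub{\ARset{\TT}{G_R}}{w})$.

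Next I would analyse $m \circ \tilingL_\rho$. Since the left square of the rewrite step (Remark~\ref{remark:pbpoplus:diagrams}) is a pullback we have $\alpha \circ m = t_L$, so for each $t \in \tilingL_\rho \subseteq \ARset{\TT}{L}$ the arrow $m \circ t$ lies in $\ARset{\TT}{G_L}$ by $m \in \ARC{\CC}$ and stability under composition (Assumption~\ref{assumption:ar}). Moreover $m$ is injective on $\tilingL_\rho$: if $m \circ t = m \circ t'$ then $t_L \circ t = \alpha \circ m \circ t = \alpha \circ m \circ t' = t_L \circ t'$, whence $t = t'$ since $t_L$ is monic for $\tilingL_\rho$. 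Therefore $\ww(m \circ \tilingL_\rho) = \ww(\tilingL_\rho)$, which by hypothesis and the previous paragraph gives $\ww(m \circ \tilingL_\rho) > \ww(\isosub{\ARset{\TT}{R'}}{t_R}) = \ww(\isosub{\ARset{\TT}{G_R}}{w})$ when $\rho \in \sys$, and the analogous $\ge$ when $\rho \in \sys'$.

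The key remaining step, which I expect to be the main obstacle, is disjointness: $(m \circ \tilingL_\rho) \cap (\xi \circ \nisosub{\ARset{\TT}{G_R}}{w}) = \emptyset$. Suppose $m \circ t = \xi(s)$ for some $t \in \tilingL_\rho$ and $s \in \nisosub{\ARset{\TT}{G_R}}{w}$. Post-composing with $\alpha$ and using $\alpha \circ m = t_L$ on the left and $\alpha \circ g_L = l' \circ u'$ on the right yields $t_L \circ t = l' \circ u' \circ \pb{s}{g_R} \circ \rightinv{\pb{g_R}{s}}$. I would then rewrite the right-hand side purely at the rule level: setting $f = w' \circ s$, the argument in the proof of Theorem~\ref{thm:xi:is:injection} shows $f \in \Phi_\rho$, while Lemma~\ref{lemma:VK:step}.\ref{lemma:VK:step:along:g_R} gives $u' \circ \pb{s}{g_R} = \pb{f}{r'}$ and Lemma~\ref{lemma:VK:step}.\ref{lemma:VK:step:along:t} gives $\pb{g_R}{s} = \pb{r'}{f}$, so that $\rightinv{\pb{g_R}{s}} = \rightinv{\pb{r'}{f}}$ for the common choice function. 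Hence $t_L \circ t = l' \circ \pb{f}{r'} \circ \rightinv{\pb{r'}{f}}$, contradicting the hypothesis $l' \circ \pb{f}{r'} \circ \rightinv{\pb{r'}{f}} \neq t_L \circ t$. With disjointness established, $m \circ \tilingL_\rho$ and $\xi \circ \nisosub{\ARset{\TT}{G_R}}{w}$ are disjoint subsets of $\ARset{\TT}{G_L}$; substituting the (in)equalities above into $\wwTAR{\TT}{\AR}(G_L) \ge \ww(m \circ \tilingL_\rho) + \ww(\xi \circ \nisosub{\ARset{\TT}{G_R}}{w})$ then gives $\wwTAR{\TT}{\AR}(G_L) > \wwTAR{\TT}{\AR}(G_R)$ for $\rho \in \sys$ and $\wwTAR{\TT}{\AR}(G_L) \ge \wwTAR{\TT}{\AR}(G_R)$ for $\rho \in \sys'$.
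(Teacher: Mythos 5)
Your proposal is correct and follows essentially the same route as the paper's proof: the same partition of $\ARset{\TT}{G_R}$ via Definition~\ref{def:iso:partitions}, the same appeals to Corollary~\ref{corollary:determining:isosub:weight} and Theorem~\ref{thm:xi:is:injection}, the same argument that $t_L$ monic for $\tilingL_\rho$ transfers to $m$, and the same disjointness argument by post-composing with $\alpha$ and rewriting via Lemma~\ref{lemma:VK:step}.(\ref{lemma:VK:step:along:t}--\ref{lemma:VK:step:along:g_R}). No gaps.
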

\begin{proof}
	Let a step induced by a $\rho \in {\sys \uplus \sys'}$ be given.
	
	By Corollary~\ref{corollary:determining:isosub:weight}, $\ww(\isosub{\ARset{\TT}{G_R}}{w}) = \ww(\isosub{\ARset{\TT}{R'}}{t_R})$.
	
	By Theorem~\ref{thm:xi:is:injection}, we obtain an injection $\xi : \nisosub{\ARset{\TT}{G_R}}{w} \mono \ARset{\TT}{G_L}$
	with $\dom(\xi(t)) = \dom(t)$, using the assumption on $\Phi_\rho$. So ${\ww(\xi \circ \nisosub{\ARset{\TT}{G_R}}{w})} = {\ww(\nisosub{\ARset{\TT}{G_R}}{w})}$.
	
	Moreover, by $m \in \ARC{\CC}$ and stability under composition, we have
	$(m \circ \tilingL_\rho) \subseteq \ARset{\TT}{G_L}$. And by $t_L$ monic for all $t, t' \in \tilingL_\rho$, we have $m$ monic for all $ t,t' \in \tilingL_\rho$, and so $\ww(m \circ \tilingL_\rho) = \ww(\tilingL_\rho)$. It remains to show that $(m \circ \tilingL_\rho)$ and 
	$(\xi \circ \nisosub{\ARset{\TT}{G_R}}{w})$ are disjoint.  If for a $t' \in \tilingL_\rho$ and $t \in \nisosub{\ARset{\TT}{G_R}}{w}$, $m \circ t' = \xi(t)$, then  ${t_L \circ t'} = {\alpha \circ m \circ t'} = {\alpha \circ \xi(t)} = {\alpha \circ g_L \circ \pb{t}{g_R} \circ \rightinv{\pb{g_R}{t}}} = {l' \circ u' \circ \pb{t}{g_R} \circ \rightinv{\pb{g_R}{t}}} = {l' \circ \pb{w' \circ t}{r'} \circ \rightinv{\pb{r'}{w' \circ t}}}$,
	using Lemma~\ref{lemma:VK:step}.(\ref{lemma:VK:step:along:t}--\ref{lemma:VK:step:along:g_R}) and ${\alpha \circ g_L} = {l' \circ u}$,
	which contradictingly implies $t' \notin \tilingL_\rho$ by the definition of $\tilingL_\rho$ and $w' \circ t \in \Phi$. Thus $\xi(t) \neq m \circ t'$.
	
	In summary,
	\begin{align*}
		\wwTAR{\TT}{\AR}(G_L) &\geq \ww(m \circ \tilingL_\rho) + \ww(\xi \circ \nisosub{\ARset{\TT}{G_R}}{w}) \\
		&= 
		\ww(\tilingL_\rho) + \ww(\nisosub{\ARset{\TT}{G_R}}{w}) \\
		& \succ
		\ww(\isosub{\ARset{\TT}{R'}}{t_R}) + \ww(\nisosub{\ARset{\TT}{G_R}}{w})
		\\
		&=
		\ww(\isosub{\ARset{\TT}{G_R}}{w}) + \ww(\nisosub{\ARset{\TT}{G_R}}{w})
		\\
		&= \wwTAR{\TT}{\AR}(G_R)
	\end{align*}
	for ${\succ} = {>}$ if $\rho \in \sys$ and ${\succ} = {\geq}$ if $\rho \in \sys'$, completing the proof.
\end{proof}

\begin{rem}
	\label{remark:matches:lower:bound:A}
	The requirement $m \in \ARC{\CC}$ puts a lower bound on what one can choose for $\ARC{\CC}$ in a termination proof. Usually two factors are relevant: the class of $t_L$, and match restrictions imposed by the setting. More precisely, let $\ClassC{X}{\CC}$ and $\ClassC{Y}{\CC}$ be classes of morphisms. If $t_L \in \ClassC{X}{\CC}$, where $\ClassC{X}{\CC}$ satisfies the decomposition property (meaning $m \in \ClassC{X}{\CC}$ by $t_L = {\alpha \circ m}$), and the setting imposes $m \in \ClassC{Y}{\CC}$, then the choice of $\ARC{\CC}$ must satisfy ${\ClassC{X}{\CC} \cap \ClassC{Y}{\CC}} \subseteq \ARC{\CC}$.
\end{rem}

From Theorem~\ref{thm:termination:by:element:counting} and Remark~\ref{remark:matches:lower:bound:A} the following is immediate.

\begin{cor}[Termination by Element Counting]
	\label{corr:termination:by:element:counting}
	Let $\sys$ and $\sys'$ be disjoint sets of \pbpostrong{} rules. Let $\ARC{\CC}$ be a class such that for all rules $\rho \in (\sys \uplus \sys')$,  $t_L(\rho) \in \ARC{\CC}$ or matching of $\rho$ is restricted to a class $X \subseteq \ARC{\CC}$.
	If the conditions  of Theorem~\ref{thm:termination:by:element:counting} are met, then $\sys$ terminates relative to $\sys'$.
	Hence $\sys \uplus \sys'$ is terminating iff $\sys'$ is. \qed
\end{cor}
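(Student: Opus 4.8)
The plan is to reduce the corollary directly to the main theorem (Theorem~\ref{thm:termination:by:element:counting}) together with the relative-termination criterion of Proposition~\ref{prop:measure}. The only genuine gap between the two statements is that Theorem~\ref{thm:termination:by:element:counting} carries the side condition $m \in \ARC{\CC}$ on the match of each step, whereas the corollary phrases its hypotheses purely at the level of rules. So the first---and essentially only nontrivial---step is to argue that, under the corollary's hypotheses, every match occurring in a step automatically lies in $\ARC{\CC}$.

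To do this I would fix a rule $\rho \in \sys \uplus \sys'$ and an arbitrary step it induces, with match $m : L \to G_L$ and adherence $\alpha : G_L \to L'$. From the rewrite step diagram of Definition~\ref{def:pbpostrong:rewrite:step}, the bottom-left pullback square commutes along $L \stackrel{m}{\to} G_L \stackrel{\alpha}{\to} L'$ against $L \stackrel{t_L}{\to} L'$ with the left edge an identity, so $t_L = \alpha \circ m$. Now I would split into the two cases of the corollary's hypothesis. If $t_L(\rho) \in \ARC{\CC}$, then since $\ARC{\CC}$ is stable under decomposition (Assumption~\ref{assumption:ar}), the factorization $t_L = \alpha \circ m \in \ARC{\CC}$ forces $m \in \ARC{\CC}$; this is exactly the reasoning recorded in Remark~\ref{remark:matches:lower:bound:A}. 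If instead matching of $\rho$ is restricted to a class $X \subseteq \ARC{\CC}$, then $m \in X \subseteq \ARC{\CC}$ directly. In either case $m \in \ARC{\CC}$, and crucially this holds for \emph{every} match producing a valid step.

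With $m \in \ARC{\CC}$ secured for every step, I would invoke Theorem~\ref{thm:termination:by:element:counting}: its remaining hypotheses are assumed to hold in the corollary, so for every step induced by $\rho \in \sys$ we obtain $\wwTAR{\TT}{\AR}(G_L) > \wwTAR{\TT}{\AR}(G_R)$, and for every step induced by $\rho \in \sys'$ we obtain $\wwTAR{\TT}{\AR}(G_L) \ge \wwTAR{\TT}{\AR}(G_R)$. Reading these stepwise inequalities at the level of relations, the measure $\wwTAR{\TT}{\AR}$ is decreasing for $\Rightarrow_\sys$ and non-increasing for $\Rightarrow_{\sys'}$.

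Finally I would apply Proposition~\ref{prop:measure}, whose conclusion (as noted in the text immediately following it) applies equally to systems of rules in place of relations. Taking $R = {\Rightarrow_\sys}$ and $S = {\Rightarrow_{\sys'}}$, it delivers both that $\sys$ terminates relative to $\sys'$ and that $\sys \uplus \sys'$ is terminating iff $\sys'$ is, which is precisely the claim. I do not anticipate any real obstacle: the entire substance resides in Theorem~\ref{thm:termination:by:element:counting}, and the only point demanding care is the short case analysis establishing $m \in \ARC{\CC}$, which is why the paper can truthfully call the corollary immediate.
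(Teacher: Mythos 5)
Your proposal is correct and matches the paper's intended argument exactly: the paper leaves the corollary as immediate from Theorem~\ref{thm:termination:by:element:counting} and Remark~\ref{remark:matches:lower:bound:A}, and your two-case analysis (decomposition of $t_L = \alpha \circ m$ when $t_L \in \ARC{\CC}$, or the match restriction $X \subseteq \ARC{\CC}$) followed by Proposition~\ref{prop:measure} is precisely that reasoning spelled out.
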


\begin{rem}[Choosing $\ARC{\CC}$ Per Tile]
	\label{remark:choosing:class:per:tile}
	Theorem~\ref{thm:termination:by:element:counting} and the results it depends on still hold if in instead of having $\ARC{\CC}$ globally fixed, a class $\ARC{\CC}$ is fixed for each individual $T \in \TT$, and match morphism $m$ is in the intersection of every class. This for instance allows counting some tiles monically, and others non-monically.
\end{rem}

The following lemma implies that in many categories of interest, tilings of $L$ and slid tiles never collide, in which case one condition on $\Delta_{\rho}$ from Theorem~\ref{thm:termination:by:element:counting} is vacuously satisfied. For instance, in quasitoposes, pushouts along $t_K$ are pullbacks if $t_K \in \RegC{\CC}$~\cite[Lemma~A2.6.2]{johnstone2002sketches}.

\begin{lem}
	\label{lemma:sliding:does:not:collide}
	Let $\CC$ have pullbacks, and let a rule $\rho$ be given.
	If $t_K \in \MonoC{\CC}$ and pushout ${r' \circ t_K} = {t_R \circ r}$ is a pullback, then for all $t \in \nisosub{\homset{T}{R'}}{t_R}$ with
	$\pb{r'}{t}$ a split epi, we have
	\[
	{l' \circ \pb{t}{r'} \circ \rightinv{\pb{r'}{t}}} \neq {t_L \circ t'}
	\]
	for all $T' \in \obj{\CC}$ and $t' \in \homset{T'}{L}$.
\end{lem}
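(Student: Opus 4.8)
The plan is to argue by contradiction, aiming to contradict $t \in \nisosub{\homset{T}{R'}}{t_R}$, i.e.\ the assertion that $\pb{t_R}{t} \notin \IsoC{\CC}$. First I would observe that the displayed inequality is only non-trivial when $T' = T$: otherwise the two sides have distinct domains ($T$ on the left, $T'$ on the right) and are automatically unequal. So I assume $l' \circ \pb{t}{r'} \circ \rightinv{\pb{r'}{t}} = t_L \circ t'$ with $t' : T \to L$ and derive a contradiction.

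Writing $s = \pb{t}{r'} \circ \rightinv{\pb{r'}{t}} : T \to K'$, the assumed equality $l' \circ s = t_L \circ t'$ is a commuting square over the cospan $K' \xrightarrow{l'} L' \xleftarrow{t_L} L$. Since the rule's left square $t_L \circ l = l' \circ t_K$ is a pullback, its universal property yields a unique $k : T \to K$ with $t_K \circ k = s$ and $l \circ k = t'$. Next I would compute $t_R \circ r \circ k = r' \circ t_K \circ k = r' \circ s = r' \circ \pb{t}{r'} \circ \rightinv{\pb{r'}{t}} = t \circ \pb{r'}{t} \circ \rightinv{\pb{r'}{t}} = t$, using $t_R \circ r = r' \circ t_K$, the defining square of the pullback producing $\pb{r'}{t}$, and $\pb{r'}{t} \circ \rightinv{\pb{r'}{t}} = \id{T}$. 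Thus $(r \circ k, \id{T})$ is a cone over $R \xrightarrow{t_R} R' \xleftarrow{t} T$, and the universal property of the pullback $Q$ defining $\pb{t_R}{t}$ supplies a section $\sigma : T \to Q$ with $\pb{t_R}{t} \circ \sigma = \id{T}$ and $\pb{t}{t_R} \circ \sigma = r \circ k$. In particular $\pb{t_R}{t}$ is a split epimorphism.

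The main obstacle, and the only place the two hypotheses enter, is upgrading this split epi to an isomorphism by showing $\sigma$ is also a left inverse, $\sigma \circ \pb{t_R}{t} = \id{Q}$. By uniqueness in the universal property of $Q$ it suffices to check the two projections, and the only non-automatic one reduces to the identity $r \circ k \circ \pb{t_R}{t} = \pb{t}{t_R}$ of morphisms $Q \to R$. To prove it I would use that the pushout $t_R \circ r = r' \circ t_K$ is assumed to be a pullback, so $K$ is the pullback of $r'$ and $t_R$: the two morphisms $t_K \circ k \circ \pb{t_R}{t} : Q \to K'$ and $\pb{t}{t_R} : Q \to R$ agree over $R'$ (a short computation from $r' \circ t_K = t_R \circ r$ and $t_R \circ r \circ k = t$), hence induce a unique $\kappa : Q \to K$. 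Since $t_K \circ \kappa = t_K \circ (k \circ \pb{t_R}{t})$ and $t_K \in \MonoC{\CC}$, cancellation gives $\kappa = k \circ \pb{t_R}{t}$, whose $R$-component is exactly the wanted identity. Then $\sigma \circ \pb{t_R}{t} = \id{Q}$, so $\pb{t_R}{t} \in \IsoC{\CC}$, contradicting $t \in \nisosub{\homset{T}{R'}}{t_R}$. I expect the delicate point to be precisely this cancellation step: obtaining $\sigma$ is routine, but making it two-sided genuinely requires both $t_K$ monic (to cancel in $K'$) and the pushout being a pullback (to invoke the universal property of $K$); the remainder is pullback pasting and bookkeeping of the various $\pb{-}{-}$ morphisms.
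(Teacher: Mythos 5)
Your proof is correct and follows essentially the same route as the paper's: both argue by contradiction, use the rule's left pullback square to produce $k : T \to K$ with $t_K \circ k = \pb{t}{r'} \circ \rightinv{\pb{r'}{t}}$ and $t_R \circ r \circ k = t$, and then exploit $t_K \in \MonoC{\CC}$ together with the pushout-being-a-pullback hypothesis to force $\pb{t_R}{t} \in \IsoC{\CC}$, contradicting $t \in \nisosub{\homset{T}{R'}}{t_R}$. The only difference is mechanical: the paper finishes by pasting three pullback squares (the pushout-as-pullback, the square witnessing that $t_K$ is monic, and a trivial identity square) and applying the pullback lemma twice to identify the pullback of $t_R$ and $t$ with $T$ itself, whereas you explicitly construct a two-sided inverse of $\pb{t_R}{t}$ from the universal properties of the two pullbacks --- both are valid and invoke the two hypotheses at exactly the same points.
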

\begin{proof}
	By contradiction. Assume that for some $t \in \nisosub{\homset{T}{R'}}{t_R}$ with
	$\pb{r'}{t}$ a split epi and some $t' \in \homset{T'}{L}$, ${l' \circ \pb{t}{r'} \circ \rightinv{\pb{r'}{t}}} = {t_L \circ t'}$. Then we have the commuting diagram
	\begin{center}
		\begin{tikzcd}[column sep=50, row sep=25]
			& T' \arrow[d, "t'" description]                                                                                                        & S' \arrow[d, "\pb{t'}{l}" description] \arrow[l, "\pb{l}{t'}" description] \arrow[ld, "\mathrm{PB}(1)", phantom]                                                                    &                                \\
			& L \arrow[d, "t_L" description]                                                                                                        & K \arrow[l, "l" description] \arrow[r, "r" description] \arrow[d, "t_K" description, tail] \arrow[ld, "\mathrm{PB}(2)", phantom] \arrow[rd, "\mathrm{PO + PB}", phantom]            & R \arrow[d, "t_R" description] \\
			& L'                                                                                                                                    & K' \arrow[l, "l'" description] \arrow[r, "r'" description]                                                                                                                          & R'                             \\
			T' \arrow[ru, "t_L \circ t'" description] & S' \arrow[ru, "t_K \circ \pb{t'}{l}" description] \arrow[l, "\pb{l}{t'}" description] \arrow[u, "\mathrm{PB}(3)", phantom] & S \arrow[u, "\pb{t}{r'}" description] \arrow[r, "\pb{r'}{t}" description, two heads] \arrow[ru, "\mathrm{PB}(4)", phantom]                                                          & T \arrow[u, "t" description]   \\
			&                                                                                                                                       & T \arrow[u, "\rightinv{\pb{r'}{t}}" description, hook'] \arrow[ru, equals] \arrow[lu, "x" description, dotted] \arrow[llu, equals] &                               
		\end{tikzcd}
	\end{center}
	where
	\begin{itemize}
		\item squares $\mathrm{PB}(2)$ and $\mathrm{PO}+\mathrm{PB}$ are given by $\rho$;
		\item squares $\mathrm{PB}(1)$ and $\mathrm{PB}(4)$ are constructed;
		\item square $\mathrm{PB}(3)$ follows from $\mathrm{PB}(1)$ and $\mathrm{PB}(2)$, using the pullback lemma;
		\item morphism $x$ exists by the hypothesis and the pullback property; and
		\item ${\pb{r'}{t} \circ \rightinv{\pb{r'}{t}}} = \id{T}$ by the right inverse property.
	\end{itemize}
	By a diagram chase we thus have $t = {r' \circ t_K \circ \pb{t'}{l} \circ x}$. Then from
	\begin{center}
		\begin{tikzcd}[column sep=50]
			R \arrow[d, "t_R" description] & K \arrow[d, "t_K" description, tail] \arrow[l, "r" description] \arrow[ld, "\mathrm{PB}", phantom] & K \arrow[l, equals] \arrow[d, equals] \arrow[ld, "\mathrm{PB}", phantom] & T \arrow[l, "\pb{t'}{l} \circ x" description] \arrow[d, equals] \arrow[ld, "\mathrm{PB}", phantom] \\
			R'                             & K' \arrow[l, "r'" description]                                                                     & K \arrow[l, "t_K" description, tail]                                                               & T \arrow[l, "\pb{t'}{l} \circ x" description] \arrow[lll, "t" description, bend left=15]                          
		\end{tikzcd}
	\end{center}
	and two applications of the pullback lemma, we have $\pb{t_R}{t} \in \IsoC{\CC}$, contradicting $t \in \nisosub{\homset{T}{R'}}{t_R}$.
\end{proof}

The proposition below states a strong sufficient condition for satisfying the termination method's preconditions. Many graph categories of interest meet these conditions.

\begin{prop}
	\label{proposition:rmadhesive:quasitopos:consequences}
	If $\CC$ is an $\Reg$-locally finite, rm-adhesive quasitopos, then $\RegC{\CC}$ satisfies Assumption~\ref{assumption:ar}. $\CC$ also has all pullbacks and all pushouts, and so in particular the required pushouts described in Assumption~\ref{assumption:rule:po}. If moreover $t_L(\rho) \in \RegC{\CC}$, then $m, t_K, t_R \in \RegC{\CC}$, $\rho$'s pushout square is \pbpostrong{}-adhesive, and $t_L$ is monic for $\ARset{\TT}{L(\rho)}$.
\end{prop}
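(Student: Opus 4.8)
The plan is to verify each clause of Proposition~\ref{proposition:rmadhesive:quasitopos:consequences} by appealing to the quasitopos and rm-adhesivity machinery already collected, chiefly Proposition~\ref{prop:quasitopos:properties:summary}, Proposition~\ref{prop:assumption:regular:monos:stable:class:quasitopos}, and Corollary~\ref{cor:rm:adhesive:pbpo:adhesive}. I would treat the statement as a sequence of small, largely independent verifications rather than one monolithic argument.

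First I would dispatch the assertions that do not depend on $t_L(\rho) \in \RegC{\CC}$. That $\RegC{\CC}$ satisfies Assumption~\ref{assumption:ar} is exactly the final clause of Proposition~\ref{prop:assumption:regular:monos:stable:class:quasitopos} (valid since $\CC$ is a quasitopos), so I would simply cite it. The existence of all pullbacks and pushouts --- and hence the pushouts of Assumption~\ref{assumption:rule:po} --- is immediate from clause~(1) of Proposition~\ref{prop:quasitopos:properties:summary}, since a quasitopos has all finite limits and colimits by definition. These take one line each.

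Next I would assume $t_L(\rho) \in \RegC{\CC}$ and establish the membership facts $m, t_K, t_R \in \RegC{\CC}$. Recall from Definition~\ref{def:pbpostrong:rewrite:step} that $t_L = \alpha \circ m$; since $t_L \in \RegC{\CC}$, decomposition of regular monos (Proposition~\ref{prop:quasitopos:properties:summary}(2b)) gives $m \in \RegC{\CC}$. For $t_K$, the rule's left square $t_L \circ l = l' \circ t_K$ is a pullback (Definition~\ref{def:pbpostrong:rewrite:step}), and $\RegC{\CC}$ is stable under pullback (Proposition~\ref{prop:assumption:regular:monos:stable:class:quasitopos}), so $t_K \in \RegC{\CC}$ as the pullback of $t_L$. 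For $t_R$, the rule's right square $t_R \circ r = r' \circ t_K$ is a pushout (Remark~\ref{remark:pbpoplus:diagrams}) with $t_K \in \RegC{\CC}$ now established, so $t_R \in \RegC{\CC}$ by stability of regular monos under pushout (Proposition~\ref{prop:quasitopos:properties:summary}(4)). Once $t_K \in \RegC{\CC}$ is in hand, \pbpostrong{}-adhesivity of $\rho$'s pushout square follows directly from Corollary~\ref{cor:rm:adhesive:pbpo:adhesive}.

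Finally, for the claim that $t_L$ is monic for $\ARset{\TT}{L(\rho)}$: since $\RegC{\CC} \subseteq \MonoC{\CC}$ and $t_L \in \RegC{\CC}$, the morphism $t_L$ is a monomorphism, and a mono $h$ trivially satisfies the ``monic for'' property of Definition~\ref{def:monic:for} for \emph{any} pair of parallel arrows into its domain, in particular for all $t, t' \in \ARset{\TT}{L(\rho)}$. The only point requiring mild care is $t_R \in \ARC{\CC}$ with $\ARC{\CC} = \RegC{\CC}$, which is subsumed by the $t_R \in \RegC{\CC}$ verification above. I expect no serious obstacle here: the proposition is essentially a bookkeeping corollary that assembles previously proven stability properties, and the main ``work'' is threading the pullback/pushout squares of the \pbpostrong{} rule through the correct closure lemma in the correct order (decomposition, then pullback-stability, then pushout-stability).
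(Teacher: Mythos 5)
Your proposal is correct and follows essentially the same route as the paper's own proof: citing Proposition~\ref{prop:assumption:regular:monos:stable:class:quasitopos} for Assumption~\ref{assumption:ar}, then using decomposition, pullback-stability, and pushout-stability of regular monos for $m$, $t_K$, $t_R$ respectively, Corollary~\ref{cor:rm:adhesive:pbpo:adhesive} for \pbpostrong{}-adhesivity, and the triviality of ``monic for'' when $t_L$ is monic. Your version simply spells out a few steps the paper leaves implicit.
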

\begin{proof}
	See Proposition~\ref{prop:quasitopos:properties:summary} for a summary of quasitopos properties. That $\RegC{\CC}$ satisfies Assumption~\ref{assumption:ar} was stated in Proposition~\ref{prop:assumption:regular:monos:stable:class:quasitopos}. If $t_L \in \RegC{\CC}$, and $m,t_K \in \RegC{\CC}$ by stability under decomposition and pullback, respectively. That $t_R \in \RegC{\CC}$ subsequently follows from pushout stability in quasitoposes~\cite[Lemma A.2.6.2]{johnstone2002sketches}. Because of the assumed rm-adhesivity and $t_K \in \RegC{\CC}$, $\rho$'s pushout square is \pbpoadhesive{}    (Corollary~\ref{cor:rm:adhesive:pbpo:adhesive}). Finally, that $t_L$ is monic for $\ARset{\TT}{L(\rho)}$ follows trivially from the fact that $t_L$ is monic.
\end{proof}

Finally, we have the following general principle, which does not require any assumptions on $t_K$ and $t_R$, nor any adhesivity assumptions.

\begin{lem}[Deleting Rules Are Decreasing]
	\label{lemma:deleting:rules:terminate}
	Assume $\CC$ has pullbacks and is $\Mono$-locally finite. Suppose that for a \pbpostrong{} rule $\rho$, $l'$ is monic, $l$ is not epic, and $r$ is iso; and that for any matches $m$ for $\rho$, $m$ is monic. Then $\rho$ is decreasing for $\TT = \{ L \}$, $\ww(L) > 0$, $\tilingL_\rho = \{ \id{L} \}$ and $\ARC{\CC} = \MonoC{\CC}$.
\end{lem}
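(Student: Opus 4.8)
The plan is to bypass Theorem~\ref{thm:termination:by:element:counting} entirely --- as the remark preceding the lemma signals, no adhesivity and no control over $t_K,t_R$ are available --- and instead to prove decreasingness by a direct counting argument. Since $\ARC{\CC} = \MonoC{\CC}$ and $\TT = \{ L \}$, the measure unfolds to $\wwTAR{\TT}{\AR}(G) = \ww(L) \cdot \card{\homsetmono{L}{G}}$, so with $\ww(L) > 0$ it suffices to show $\card{\homsetmono{L}{G_L}} > \card{\homsetmono{L}{G_R}}$ for every step (both cardinalities being finite by $\Mono$-local finiteness). First I would dispose of the right-hand side: because $r$ is iso, the pushout producing $G_R$ (the top-right square of Definition~\ref{def:pbpostrong:rewrite:step}) makes $g_R : G_K \to G_R$ an isomorphism, so postcomposition with $g_R$ is a bijection $\homsetmono{L}{G_K} \to \homsetmono{L}{G_R}$. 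The goal thus reduces to $\card{\homsetmono{L}{G_L}} > \card{\homsetmono{L}{G_K}}$.

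Next I would build a strict injection from the monos into $G_K$ to the monos into $G_L$. Since $l'$ is monic and $g_L$ is its pullback along $\alpha$ in the square defining $G_K = G_L \times_{L'} K'$, the morphism $g_L : G_K \to G_L$ is monic; hence $s \mapsto g_L \circ s$ is a well-defined injection $\homsetmono{L}{G_K} \hookrightarrow \homsetmono{L}{G_L}$. This already yields $\geq$, and the whole content is to exhibit one mono $L \to G_L$ that is \emph{not} of the form $g_L \circ s$. The natural candidate is the match $m$ itself, which is a mono by hypothesis and hence a genuine tile.

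The crux is therefore to show $m \neq g_L \circ s$ for all $s$, and this is where the hypothesis that $l$ is not epic enters. I would first prove the structural fact that $K$ is the pullback of $L \xrightarrow{m} G_L \xleftarrow{g_L} G_K$. This follows from the pullback lemma (Lemma~\ref{lemma:pullback:lemma}): stacking the defining pullback $G_K = G_L \times_{L'} K'$ on the right and the commuting square $g_L \circ u = m \circ l$ on the left produces an outer rectangle that, using $\alpha \circ m = t_L$ and $u' \circ u = t_K$, is exactly the rule's pullback square $t_L \circ l = l' \circ t_K$; since the right square and the outer rectangle are pullbacks, the left square is a pullback too. Now if $m = g_L \circ s$, then $(\id{L}, s)$ is a cone over $L \xrightarrow{m} G_L \xleftarrow{g_L} G_K$, so the pullback property yields $h : L \to K$ with $l \circ h = \id{L}$, making $l$ a split epimorphism and hence epic --- contradicting the hypothesis. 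Thus $m$ lies outside the image of the injection, the inequality is strict, and decreasingness follows.

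The main obstacle I anticipate is the pullback-lemma bookkeeping in the third paragraph: correctly orienting the three stacked squares and checking that the map $K \to G_K$ induced there is precisely the canonical $u$ of the step, so that the outer rectangle genuinely coincides with the rule's pullback. Once $K = L \times_{G_L} G_K$ is established, the split-epi contradiction is immediate, and everything else is routine cardinality arithmetic.
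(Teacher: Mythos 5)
Your proposal is correct and follows essentially the same route as the paper: the paper likewise uses that $g_L$ is monic and $g_R$ is iso to get an injection $\xi(t) = g_L \circ \inverse{g_R} \circ t$ from $\homsetmono{L}{G_R}$ into $\homsetmono{L}{G_L}$, and then shows $m$ is not in its image because the pullback square $m \circ l = g_L \circ u$ would otherwise force $l$ to be a split epimorphism. Your only cosmetic deviations are factoring the injection through $\homsetmono{L}{G_K}$ and re-deriving that pullback square via the pullback lemma (the paper reads it off the step diagram, cf.\ Remark~\ref{remark:pbpoplus:diagrams}).
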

\begin{proof}
	Let a rewrite step diagram be given. By stability properties, morphism $g_L$ is monic because $l'$ is monic, and $g_R$ is iso because $r$ is iso. Then 
	\[
	\xi : \homsetmono{L}{G_R} \mono \homsetmono{L}{G_L}
	\]
	defined by $\xi(t) = g_L \circ \inverse{g_R} \circ t$ is an injection because $g_L \circ \inverse{g_R}$ is monic. But $\xi$ is not surjective. Because if $m = \xi(t)$ for some $t \in \homsetmono{L}{G_R}$, then
	in diagram
	\begin{center}
		\begin{tikzcd}[column sep=40]
			L \arrow[d, "m" description, tail] & K \arrow[d, "u" description, tail] \arrow[l, "l" description, tail] \arrow[ld, "\mathrm{PB}", phantom] &                                                    & L \arrow[ll, "!x" description, tail, dotted]  \arrow[d, equals] \arrow[lll, equals, bend right=15] \\
			G_L                          & G_K \arrow[l, "g_L" description, tail]                                                           & G_R \arrow[l, "g_R^{{-1}}" description, two heads, tail] & L \arrow[l, "t" description, tail] \arrow[lll, "\xi(t)" description, bend left=15, tail]                                                                    
		\end{tikzcd}
	\end{center}
	a unique $x : L \to K$ is obtained that makes the diagram commute, using the pullback square on the left. Thus $l \circ x$ is an isomorphism, and hence $l$ an epimorphism, which contradicts our assumption about $l$. So $m \in \homsetmono{L}{G_L}$ does not lie in the image of $\xi$.
	It follows that $\wwTAR{\TT}{\AR}(G_L) >
	\wwTAR{\TT}{\AR}(G_R)$.
\end{proof}

\section{Examples}
\label{section:examples}

We give a number of examples of applying Theorem~\ref{thm:termination:by:element:counting} in category $\CC = \text{\FinGraph}$ (Definition~\ref{definition:category:graph}), each demonstrating new features. For each example, we will fix $\TT$, $\ww$, and $\ARC{\CC}$, and usually some properties of the relevant morphism sets (such as cardinalities) or related comments. The remaining details of the proofs are routine.
Note that in \FinGraph{} (and more generally in any topos), $\RegC{\CC} = \MonoC{\CC}$, and because the rules in examples satisfy $t_L \in \MonoC{\CC}$, we are in each case free to choose $\MonoC{\CC}$ or $\HomC{\CC}$ for $\ARC{\CC}$ (Remark~\ref{remark:matches:lower:bound:A}).

\tikzset{epat/.style={thick}}
\tikzset{eset/.style={densely dotted}}
\tikzset{npattern/.style={rectangle,rounded corners=2mm,draw=black,inner sep=0.5mm,outer sep=.5mm,minimum size=4.5mm}}
\tikzset{nset/.style={npattern,draw=black,fill=white,densely dotted}}
\tikzset{label/.style={scale=0.85,inner sep=0,outer sep=0.5mm}}

\begin{notation}[Visual Notation]
	\label{notation:visual}
	In our examples of rules, the morphisms $t_X : X \hookrightarrow X'$ ($X \in \{L, K, R\}$) of rules are regular monos (embeddings). We depict $t_X$ by depicting the graph $X'$, and then let solid, colored vertices and solid edges denote $t_X(X)$, with dotted blank vertices and dotted edges the remainder of $X'$. For example, in Example~\ref{example:fold:edge} below, subgraph $L$ of $t_L : L \regmono L'$ is
	{
		\hspace{-3.4mm}
		\newcommand{\nodex}{\vertex{x}{cblue!20}}
		\newcommand{\nodey}{\vertex{y}{cgreen!20}}
		\begin{tikzpicture}[->,node distance=12mm,n/.style={},baseline=-0.8ex]
			\node [npattern] (x) at (-6mm,0mm) {\nodex};
			\node [npattern] (y) at (6mm,0mm) {\nodey};
			\draw [epat] (x) to (y);
		\end{tikzpicture}%
	}.
	
	The vertices of graphs are non-empty sets $\{ x_1,\ldots,x_n \}$ depicted by boxes
	{
		\hspace{-3.4mm}
		\newcommand{\nodexa}{\vertex{x_1}{cblue!20}}
		\newcommand{\nodexb}{\vertex{x_n}{cblue!20}}
		\begin{tikzpicture}[->,node distance=12mm,n/.style={},baseline=-0.8ex]
			\node [npattern] (xaxb)
			{\nodexa \ \raisebox{1mm}{$\cdots$} \  \nodexb};
		\end{tikzpicture}%
	}. When depicting the homomorphism $(r' = (r'_V, r'_E)): K' \to R'$ we will choose the vertices of $K'$ and $R'$ in such a way that component $r'_V$ is fully determined by $S \subseteq r'_V(S)$ for all $S \in V_{K'}$. For example, for nodes $\{ x \}, \{ y \} \in V_{K'}$ of Example~\ref{example:fold:edge} below (in which morphism $r'$ is implicit), $r'(\{x\}) = r'(\{ y \}) = \{ x,y \} \in V_{R'}$. If component $r'_E$ is not uniquely determined by $r'_V$, then let $r'_E$ preserve the relative positioning of the edges (although normally, this choice will be inconsequential). Morphism $l' : K' \to L'$ is depicted similarly.
\end{notation}

\begin{exa}[Folding an Edge]
	\label{example:fold:edge}
	The rule
	\begin{center}
		$\rho = $ \scalebox{0.9}{
%
{
\newcommand{\context}{
          \node [nset] (c) at (0mm,-10mm) {\nodectx};
          \draw [eset] (c) to[bend left=15] (x);
          \draw [eset] (x) to[bend left=5] (c);
          \draw [eset] (c) to[bend left=5] (y);
          \draw [eset] (y) to[bend left=15] (c);
          \draw [eset] (c) to[thinloop=0,distance=1.5em] (c);
          \draw [eset] (x) to[thinloop=-135,distance=1.5em] (x);
          \draw [eset] (y) to[thinloop=-45,distance=1.5em] (y);
          \draw [eset] (x.105) to[out=60,in=120,distance=5mm] (y.75);
          \draw [eset] (y.105) to[out=135,in=45,distance=3.5mm] (x.75);
}%
\newcommand{\nodex}{\vertex{x}{cblue!20}}%
\newcommand{\nodey}{\vertex{y}{cgreen!20}}%
\newcommand{\nodez}{\vertex{z}{corange!20}}%
\newcommand{\nodew}{\vertex{w}{cred!20}}%
\newcommand{\nodectx}{\vertex{c}{white}}%
\begin{tikzpicture}[baseline=-22mm,->,node distance=12mm,n/.style={},epattern/.style={very thick},mred/.style={cred,very thick},mblue/.style={cblue,very thick},mgreen/.style={cdarkgreen,very thick},mpurple/.style={cpurple,very thick,densely dotted}]
        \graphboxx{L}{0mm}{-11mm}{32mm}{22mm}{2mm}{-8mm}{
          \node [npattern] (x) at (-6mm,0mm) {\nodex};
          \node [npattern] (y) at (6mm,0mm) {\nodey};
          \draw [epat] (x) to (y);
          \context
        }
        \graphboxx{K}{33mm}{-11mm}{32mm}{22mm}{2mm}{-8mm}{
          \node [npattern] (x) at (-6mm,0mm) {\nodex};
          \node [npattern] (y) at (6mm,0mm) {\nodey};
          \draw [epat] (x) to (y);
          \context
        }
        \begin{scope}[opacity=1]        
        \graphboxx{R}{66mm}{-11mm}{32mm}{22mm}{1mm}{-8mm}{
          \node [npattern,opacity=0] (x) at (-3mm,0mm) {\nodex};
          \node [npattern,opacity=0] (y) at (3mm,0mm) {\nodey};
          \context
          \node [npattern] (x) at (0mm,0mm) {\nodex\nodey};
          \draw [epat] (x) to [thinloop=10,out=-5,in=25,distance=1.5em] (x);
        }
        \end{scope}
      \end{tikzpicture}
}
	\end{center}
	folds a non-loop edge $\edgeGraph$ into a loop $\aloopUnlabeled{}$ (in any context). Define tile set $\TT = \{ \edgeGraph \}$ with $\ww(\edgeGraph) = 1$ and fix $\ARC{\CC} = \MonoC{\CC}$. Then $\card{\Phi_{\rho}} = 5$ (every non-loop dotted edge in $R'$, and the loop on $c$). For every $f \in \Phi_{\rho}$, $\pb{r'}{f}$ is iso and hence a regular epi with a unique right inverse. Because $l'$ is monic (even iso), the remaining details of Theorem~\ref{thm:xi:is:injection} are immediate. Finally, for the only choice of $\Delta_\rho$, $\ww(\Delta_\rho) = 1 > \ww(\isosub{\homsetmono{\TT}{R'}}{t_R}) = \ww(\emptyset) = 0$. So $\rho$ is terminating.
\end{exa}

\begin{exa}
	\label{example:edge:graph:decreasing:system}
	Let the rewrite system $\sys$ consist of rules
	\begin{center}
		$\rho = $ \scalebox{0.9}{
%
{
\newcommand{\context}{
          \node [nset] (c) at (0mm,-10mm) {\nodectx};
          \draw [eset] (c) to[bend left=15] (x);
          \draw [eset] (x) to[bend left=15] (c);
          \draw [eset] (c) to[thinloop=0,distance=1.5em] (c);
          \draw [eset] (x) to[thinloop=-180,distance=1.5em] (x);
}%
\newcommand{\nodex}{\vertex{x}{cblue!20}}%
\newcommand{\nodey}{\vertex{y}{cgreen!20}}%
\newcommand{\nodez}{\vertex{z}{corange!20}}%
\newcommand{\nodew}{\vertex{w}{cred!20}}%
\newcommand{\nodectx}{\vertex{c}{white}}%
      \begin{tikzpicture}[->,node distance=12mm,n/.style={},epattern/.style={very thick},mred/.style={cred,very thick},mblue/.style={cblue,very thick},mgreen/.style={cdarkgreen,very thick},mpurple/.style={cpurple,very thick,densely dotted},baseline=-22mm]
        \graphboxx{L}{0mm}{-11mm}{23mm}{20mm}{2mm}{-5mm}{
          \node [npattern] (x) at (0mm,0mm) {\nodex};
          \draw [epat] (x) to[thinloop=0,distance=1.5em] (x);
          \context
        }
        \graphboxx{K}{24mm}{-11mm}{23mm}{20mm}{2mm}{-5mm}{
          \node [npattern] (x) at (0mm,0mm) {\nodex};
          \context
        }
        \begin{scope}[opacity=1]        
        \graphboxx{R}{48mm}{-11mm}{30mm}{20mm}{-1mm}{-5mm}{
          \node [npattern] (x) at (0mm,0mm) {\nodex};
          \node [npattern] (y) at (10mm,0mm) {\nodey};
          \context
        }
        \end{scope}
      \end{tikzpicture}
}
	\end{center}
	and
	\begin{center}
		$\tau = $ \scalebox{0.9}{
%
{
\newcommand{\context}{
          \node [nset] (c) at (0mm,-10mm) {\nodectx};
          \draw [eset] (c) to[bend left=15] (x);
          \draw [eset] (x) to[bend left=15] (c);
          \draw [eset] (c) to[bend left=15] (y);
          \draw [eset] (y) to[bend left=15] (c);
          \draw [eset] (c) to[thinloop=0,distance=1.5em] (c);
          \draw [eset] (x) to[thinloop=-135,distance=1.5em] (x);
          \draw [eset] (y) to[thinloop=-45,distance=1.5em] (y);
          \draw [eset] (x.105) to[out=60,in=120,distance=5mm] (y.75);
          \draw [eset] (y.105) to[out=135,in=45,distance=3.5mm] (x.75);
}%
\newcommand{\nodex}{\vertex{x}{cblue!20}}%
\newcommand{\nodey}{\vertex{y}{cgreen!20}}%
\newcommand{\nodez}{\vertex{z}{corange!20}}%
\newcommand{\nodew}{\vertex{w}{cred!20}}%
\newcommand{\nodectx}{\vertex{c}{white}}%
      \begin{tikzpicture}[->,node distance=12mm,n/.style={},epattern/.style={very thick},mred/.style={cred,very thick},mblue/.style={cblue,very thick},mgreen/.style={cdarkgreen,very thick},mpurple/.style={cpurple,very thick,densely dotted},baseline=-23mm]
        \graphboxx{L}{0mm}{-11mm}{32mm}{22mm}{2mm}{-8mm}{
          \node [npattern] (x) at (-6mm,0mm) {\nodex};
          \node [npattern] (y) at (6mm,0mm) {\nodey};
          \draw [epat] (x) to (y);
          \context
        }
        \graphboxx{K}{33mm}{-11mm}{32mm}{22mm}{2.5mm}{-8mm}{
          \node [npattern] (x) at (-6mm,0mm) {\nodex};
          \node [npattern] (y) at (6mm,0mm) {\nodey};
          \context
        }
        \begin{scope}[opacity=1]        
        \graphboxx{R}{66mm}{-11mm}{40mm}{22mm}{-2mm}{-8mm}{
          \node [npattern] (x) at (-6mm,0mm) {\nodex};
          \node [npattern] (y) at (6mm,0mm) {\nodey};
          \node [npattern] (z) at (17mm,0mm) {\nodez};
          \context
        }
        \end{scope}
      \end{tikzpicture}
}
	\end{center}
	Rule $\rho$ deletes a loop in any context, and adds a node; and rule $\tau$ deletes a non-loop edge in any context, and adds a node.
	
	Because the $r$ morphisms of $\rho$ and $\tau$ are not iso, Lemma~\ref{lemma:deleting:rules:terminate} cannot be applied. But for $T = \edgeGraph$, $\TT = \{ T \}$, $\ww(T) = 1$, $\ARC{\CC} = \HomC{\CC}$, and $\Delta_{\rho_1}$ and $\Delta_{\rho_2}$ the singleton sets containing the unique morphisms $T \to L(\rho)$ and $T \to L(\tau)$, respectively,
	$\sys$ is proven decreasing and thus terminating by Theorem~\ref{thm:termination:by:element:counting}. This argument captures the natural argument: ``the number of edges is decreasing''.
	
	An alternative argument lets $\TT = \{T, T'\}$, with 
	\smash{$T' = \aloopUnlabeled{}
		$}, $\ww(T) = \ww(T') = 1$, and $\ARC{\CC} = \MonoC{\CC}$. Then $\Delta_{\rho}$ contains the unique mono $T' \mono L(\rho)$ and $\Delta_{\tau}$ the unique mono $T \mono L(\tau)$. This captures the argument: ``the sum of loop and non-loop edges is decreasing''.
\end{exa}

\newcommand{\catName}[1]{{\normalfont\textbf{#1}}}

\begin{rem}[Element Counting in Fuzzy Presheaves]
	In a fuzzy graph category (Example~\ref{ex:fuzzy:presheaves}), rules that change labels (but leave the structure of the graph unchanged) can be proven terminating by using $\ARC{\CC} = \RegC{\CC}$, but not always by using $\ARC{\CC} = \MonoC{\CC}$. For instance, a rule that increases a loop edge label $a$ into label $b > a$, is shown terminating by $\TT = \{ \aloop{a} \}$ and $\ARC{\CC} = \RegC{\CC}$, but no proof exists for $\ARC{\CC} = \MonoC{\CC}$, because $\aloop{b}$ has strictly more monic elements than $\aloop{a}$.
\end{rem}

\newcommand{\twoEdgeGraph}{%
	\,\begin{tikzpicture}[baseline=-.6ex]
		\node (x) [circle,inner sep=0,outer sep=1mm,minimum size=0.7mm,fill=black] {};
		\node (y) [circle,inner sep=0,outer sep=1mm,minimum size=0.7mm,fill=black, right of=x, xshift=-3mm] {};
		\draw [->] ([yshift=.7mm]x.east) to ([yshift=.7mm]y.west);
		\draw [->] ([yshift=-.7mm]y.west) to ([yshift=-.7mm]x.east);
	\end{tikzpicture}\,%
}

\begin{exa}
	\label{example:dpo:as:pbpo}
	The rule
	\begin{center}
		$\rho = $ \scalebox{0.9}{
%
{
\newcommand{\context}{
          \node [nset] (c) at (0mm,-20mm) {\nodectx};
          \draw [eset] (c) to[bend left=15] (x);
          \draw [eset] (x) to[bend left=15] (c);
          \draw [eset] (c) to[bend left=15] (y);
          \draw [eset] (y) to[bend left=15] (c);
          \draw [eset] (c) to[thinloop=0,distance=1.5em] (c);
          \draw [eset] (x) to[thinloop=180,distance=1.5em] (x);
          \draw [eset] (y) to[thinloop=0,distance=1.5em] (y);
          \draw [eset] (x) to[bend left=25] (y);
          \draw [eset] (y) to[bend left=25] (x);
}%
\newcommand{\nodex}{\vertex{x}{cblue!20}}%
\newcommand{\nodey}{\vertex{y}{cgreen!20}}%
\newcommand{\nodez}{\vertex{z}{corange!20}}%
\newcommand{\nodew}{\vertex{w}{cred!20}}%
\newcommand{\nodectx}{\vertex{c}{white}}%
\begin{tikzpicture}[baseline=-36mm,->,node distance=12mm,n/.style={},epattern/.style={very thick},mred/.style={cred,very thick},mblue/.style={cblue,very thick},mgreen/.style={cdarkgreen,very thick},mpurple/.style={cpurple,very thick,densely dotted}]
        \graphboxx{L}{0mm}{-21mm}{34mm}{30mm}{2mm}{-5mm}{
          \node [npattern] (x) at (-6mm,-10mm) {\nodex};
          \node [npattern] (y) at (6mm,-10mm) {\nodey};
          \node [npattern] (z) at (6mm,0mm) {\nodez};
          \draw [epat] (x) to (y);
          \draw [epat] (y) to[bend right=20] (z);
          \draw [epat] (z) to[bend right=20] (y);
          \context
        }
        \graphboxx{K}{35mm}{-21mm}{34mm}{30mm}{2mm}{-5mm}{
          \node [npattern] (x) at (-6mm,-10mm) {\nodex};
          \node [npattern] (y) at (6mm,-10mm) {\nodey};
          \draw [epat] (x) to (y);
          \context
        }
        \begin{scope}[opacity=1]        
        \graphboxx{R}{70mm}{-21mm}{34mm}{30mm}{2mm}{-5mm}{
          \node [npattern] (x) at (-6mm,-10mm) {\nodex};
          \node [npattern] (y) at (6mm,-10mm) {\nodey};
          \node [npattern] (z) at (0mm,0mm) {\nodew};
          \draw [epat] (x) to (y);
          \draw [epat] (y) to[bend left=0] (z);
          \draw [epat] (z) to[bend left=0] (x);
          \context
        }
        \end{scope}

      \end{tikzpicture}
}
	\end{center}
	is proven terminating by $\TT = \{\twoEdgeGraph \}$, $\ww( \twoEdgeGraph) = 1$, 	and $\ARC{\CC} = \MonoC{\CC}$.
	
	This example can be used to show that applicability of our technique is not invariant under rule equivalence.
	Let $\rho'$ be obtained from $\rho$ by dropping the solid edge from $x$ to $y$ in $K$ and $K'$. The rules $\rho$ and $\rho'$ are equivalent: they induce the same rewrite relation. However, our termination technique fails for $\rho'$, because not all tiles can be intactly transferred. We therefore pose the following open question. \emph{Does there exist a procedure that maps a rule $\tau$ onto an equivalent ``standard'' representative $\mathit{standard}(\tau)$, such that
		the termination method fails on $\mathit{standard}(\tau)$ iff it fails on all rules in the equivalence class? And if it exists, can it be formulated on the level of (rm-adhesive) quasitoposes?}
\end{exa}

\begin{exa}
	Consider the rules $\rho$ and $\tau$, respectively:
	\begin{center}
		$\rho = $ \scalebox{0.9}{
%
{
\newcommand{\context}{
  \draw [eset] (c) to[thinloop=90,distance=1em] node [above,label,inner sep=0.5mm] {$a$} (c);
  \draw [eset] (c) to[thinloop=35,distance=1em] node [above,label,inner sep=0.5mm,pos=0.7] {$b$} (c);
}%
\newcommand{\nodex}{\vertex{x}{cblue!20}}%
\newcommand{\nodey}{\vertex{y}{cblue!20}}%
\newcommand{\nodez}{\vertex{z}{corange!20}}%
\newcommand{\nodeu}{\vertex{u}{cgreen!20}}%
\newcommand{\nodev}{\vertex{v}{cgreen!20}}%
\newcommand{\nodew}{\vertex{w}{cgreen!20}}%
\newcommand{\nodectx}{\vertex{c}{white}}%
      \begin{tikzpicture}[->,node distance=12mm,n/.style={},epattern/.style={very thick},mred/.style={cred,very thick},mblue/.style={cblue,very thick},mgreen/.style={cdarkgreen,very thick},mpurple/.style={cpurple,very thick,densely dotted},baseline=-23mm]
        \graphboxx{L}{0mm}{-15mm}{31mm}{14mm}{1mm}{-10mm}{
          \node [npattern] (x) at (-7mm,0mm) {\nodex};
          \node [npattern] (y) at (0mm,0mm) {\nodey};
          \draw [epat] (x) to[thinloop=90,distance=1em] node [above,label,inner sep=0.5mm] {$a$} (x);
          \draw [epat] (y) to[thinloop=90,out=90-20,in=90+20,distance=1em] node [above,label,inner sep=0.5mm] {$a$} (y);
          \node [nset] (c) at (7mm,0mm) {\nodectx};
          \context
        }
        \graphboxx{K}{32mm}{-15mm}{18mm}{14mm}{2mm}{-10mm}{
          \node [nset] (c) at (0mm,0mm) {\nodectx};
          \context
        }
        \begin{scope}[opacity=1]        
        \graphboxx{R}{51mm}{-15mm}{38mm}{14mm}{-2mm}{-10mm}{
          \node [npattern] (x) at (-7mm,0mm) {\nodeu};
          \node [npattern] (y) at (0mm,0mm) {\nodev};
          \node [npattern] (z) at (7mm,0mm) {\nodew};
          \draw [epat] (x) to[thinloop=90,distance=1em] node [above,label,inner sep=0.5mm] {$b$} (x);
          \draw [epat] (y) to[thinloop=90,distance=1em] node [above,label,inner sep=0.5mm] {$b$} (y);
          \draw [epat] (z) to[thinloop=90,distance=1em] node [above,label,inner sep=0.5mm] {$b$} (z);
          \node [nset] (c) at (14mm,0mm) {\nodectx};
          \context
        }
        \end{scope}
      \end{tikzpicture}
}
	\end{center}
	\begin{center}
		$\tau = $ \scalebox{0.9}{
%
{
\newcommand{\context}{
  \draw [eset] (c) to[thinloop=90,distance=1em] node [above,label,inner sep=0.5mm] {$a$} (c);
  \draw [eset] (c) to[thinloop=35,distance=1em] node [above,label,inner sep=0.5mm,pos=0.7] {$b$} (c);
}%
\newcommand{\nodex}{\vertex{x}{cblue!20}}%
\newcommand{\nodey}{\vertex{y}{cblue!20}}%
\newcommand{\nodeu}{\vertex{u}{cgreen!20}}%
\newcommand{\nodew}{\vertex{w}{cred!20}}%
\newcommand{\nodectx}{\vertex{c}{white}}%
      \begin{tikzpicture}[->,node distance=12mm,n/.style={},epattern/.style={very thick},mred/.style={cred,very thick},mblue/.style={cblue,very thick},mgreen/.style={cdarkgreen,very thick},mpurple/.style={cpurple,very thick,densely dotted},baseline=-23mm]
        \graphboxx{L}{0mm}{-15mm}{31mm}{14mm}{1mm}{-10mm}{
          \node [npattern] (x) at (-7mm,0mm) {\nodex};
          \node [npattern] (y) at (0mm,0mm) {\nodey};
          \draw [epat] (x) to[thinloop=90,distance=1em] node [above,label,inner sep=0.5mm] {$b$} (x);
          \draw [epat] (y) to[thinloop=90,out=90-20,in=90+20,distance=1em] node [above,label,inner sep=0.5mm] {$b$} (y);
          \node [nset] (c) at (7mm,0mm) {\nodectx};
          \context
        }
        \graphboxx{K}{32mm}{-15mm}{18mm}{14mm}{2mm}{-10mm}{
          \node [nset] (c) at (0mm,0mm) {\nodectx};
          \context
        }
        \begin{scope}[opacity=1]        
        \graphboxx{R}{51mm}{-15mm}{25mm}{14mm}{1mm}{-10mm}{
          \node [npattern] (x) at (-3.5mm,0mm) {\nodeu};
          \draw [epat] (x) to[thinloop=90,distance=1em] node [above,label,inner sep=0.5mm] {$a$} (x);
          \node [nset] (c) at (3.5mm,0mm) {\nodectx};
          \context
        }
        \end{scope}
      \end{tikzpicture}
}
	\end{center}
	Intuitively, these rules model replacements in multisets over $\{a,b\}$. The elements of the multiset are modeled by nodes with loops that carry the label $a$ or $b$, respectively.
	Rule $\rho$ replaces two $a$'s by three $b$'s,
	and $\tau$ replaces two $b$'s by one $a$.
	
	These rules are terminating.
	To prove this, we use tiles $\TT = \{ \aloop{a}, \aloop{b} \}$ together with the weight assignment $\ww(\aloop{a}) = 5$ and $\ww(\aloop{b}) = 3$, and let $\ARC{\CC} = \MonoC{\CC}$. The context is isomorphically preserved along $l'$ and $r'$, and partial overlaps with the pattern are not possible. So Theorem~\ref{thm:xi:is:injection} is easily verified for $\Phi_\rho$ and $\Phi_\tau$. Then for the obvious largest choices of $\Delta_\rho$ and $\Delta_\tau$, we have $\ww(\Delta_\rho) = 2 \cdot 5 = 10 >  \ww(\isosub{\homsetmono{\TT}{\rho(R')}}{\rho(t_R)}) = 3 \cdot 3 = 9$ for $\rho$ and
	$\ww(\Delta_\tau) = 2 \cdot 3 = 6 >  \ww(\isosub{\homsetmono{\TT}{\tau(R')}}{\tau(t_R)}) = 5$ for $\tau$, completing the proof.
	
	The above termination proof works also for vast generalizations of the rules. For instance, rule $\rho$ can be generalized to rule $\rho'$ as follows:
	\begin{center}
		\scalebox{0.9}{
%
{
\newcommand{\context}{
  \draw [eset] (c) to[thinloop=180,distance=1em] node [left,label,inner sep=0.5mm] {$a,b$} (c);
  \draw [eset] (c) to[bend left=15] node [left,label,inner sep=0.7mm] {$a,b$} (y);
  \draw [eset] (y) to[bend left=15] node [right,label,inner sep=0.7mm,pos=0.4] {$a,b$} (c);
  \draw [eset] (x) to[thinloop=-90,distance=1em] node [left,label,inner sep=0.5mm,pos=0.2] {$a,b$} (x);
}%
\newcommand{\nodex}{\vertex{x}{cblue!20}}%
\newcommand{\nodey}{\vertex{y}{cgreen!20}}%
\newcommand{\nodez}{\vertex{z}{corange!20}}%
\newcommand{\nodew}{\vertex{w}{cred!20}}%
\newcommand{\nodectx}{\vertex{c}{white}}%
      \begin{tikzpicture}[->,node distance=12mm,n/.style={},epattern/.style={very thick},mred/.style={cred,very thick},mblue/.style={cblue,very thick},mgreen/.style={cdarkgreen,very thick},mpurple/.style={cpurple,very thick,densely dotted}]
        \graphboxx{L}{0mm}{-9mm}{40mm}{19mm}{0mm}{-4mm}{
          \node [npattern] (x) at (-6mm,0mm) {\nodex};
          \node [npattern] (y) at (9mm,0mm) {\nodey\nodez};
          \draw [epat] (x) to[thinloop=180,distance=1em] node [left,label,inner sep=0.5mm] {$a$} (x);
          \draw [epat] (y) to[thinloop=0,out=180-14,in=180+14,distance=1em] node [left,label,inner sep=0.5mm] {$a$} (y);
          \node [nset] (c) at (0mm,-10mm) {\nodectx};
          \context
          \draw [eset] (y) to[thinloop=0,out=0-14,in=0+14,distance=1em] node [below,label,inner sep=0.1mm,pos=0.2] {$a,b$} (y);
        }
        \graphboxx{K}{41mm}{-9mm}{41mm}{19mm}{1mm}{-4mm}{
          \node [npattern] (x) at (-9mm,0mm) {\nodex};
          \node [npattern] (y) at (0mm,0mm) {\nodey};
          \node [npattern] (z) at (12mm,-10mm) {\nodez};
          \node [nset] (c) at (0mm,-10mm) {\nodectx};
          \context
          \draw [eset] (c) to[bend left=15] node [above,label,inner sep=0.1mm] {$a,b$} (z);
          \draw [eset] (z) to[bend left=15] node [below,label,inner sep=0.1mm] {$a,b$} (c);
          \draw [eset] (y) to[bend left=30] node [above right,label,inner sep=0mm] {$a,b$} (z);
          \draw [eset] (z) to[thinloop=0,out=55-14,in=55+14,distance=1em] node [above,label,inner sep=0.1mm,pos=0.5] {$a,b$} (z);
        }
        \begin{scope}[opacity=1]        
        \graphboxx{R}{83mm}{-9mm}{46mm}{19mm}{-2mm}{-4mm}{
          \node [npattern] (x) at (-8mm,0mm) {\nodex};
          \node [npattern] (y) at (4mm,0mm) {\nodey};
          \node [npattern] (z) at (16mm,-10mm) {\nodez};
          \node [nset] (c) at (4mm,-10mm) {\nodectx};
          \context
          \draw [eset] (c) to[bend left=15] node [above,label,inner sep=0.1mm] {$a,b$} (z);
          \draw [eset] (z) to[bend left=15] node [below,label,inner sep=0.1mm] {$a,b$} (c);
          \draw [eset] (y) to[bend left=30] node [above right,label,inner sep=0mm] {$a,b$} (z);
          \draw [epat] (x) to[thinloop=180,distance=1em] node [left,label,inner sep=0.5mm] {$b$} (x);
          \draw [epat] (y) to[thinloop=180,distance=1em] node [left,label,inner sep=0.5mm] {$b$} (y);
          \draw [epat] (z) to[thinloop=0,distance=1em] node [right,label,inner sep=0.5mm] {$b$} (z);
          \draw [eset] (z) to[thinloop=0,out=55-14,in=55+14,distance=1em] node [above,label,inner sep=0.1mm,pos=0.5] {$a,b$} (z);
        }
        \end{scope}
      \end{tikzpicture}
}
	\end{center}
	Observe that $L'$ now allows an unbounded number of additional loops on the nodes, and an unbounded number of edges between the nodes and the context. The morphism $l'$ preserves the loops, duplicates a node (including its incident edges from and to the context), and unfolds loops between the duplicated nodes.  The system $\{ \rho', \tau \}$ is still proven terminating using the argument above.
	
	More generally, as long as $l'$ and $r'$ of a rule do not create new loops other than those specified by $l$ and $r$, the rule can be proven terminating using the argument above.
\end{exa}

\begin{exa}
	\label{example:delete:loop:duplicate}
	Consider the following rules:
	\begin{center}
		$\rho = $ \scalebox{0.9}{
%
{
\newcommand{\context}{
          \node [npattern] (n) at (-5mm,0mm) {\noden};
          \node [nset] (c) at (-5mm,-9mm) {\nodec};
          \draw [eset] (n) to[thinloop=180,distance=1em]  (n);
          \draw [eset] (c) to[thinloop=0,distance=1em]  (c);
          \draw [eset] (n) to[bend left=20] (c);
          \draw [eset] (c) to[bend left=20] (n);
}%
\newcommand{\contextb}{
          \node [nset] (w) at (5mm,0mm) {\nodew};
          \node [nset] (x) at (12mm,0mm) {\nodex};
          \node [nset] (y) at (5mm,-9mm) {\nodey};
          \node [nset] (z) at (12mm,-9mm) {\nodez};
          \draw [eset] (x) to[bend left=20] (z);
          \draw [eset] (z) to[bend left=20] (x);
          \draw [eset] (w) to[bend left=20] (y);
          \draw [eset] (y) to[bend left=20] (w);
}%
\newcommand{\nodew}{\vertex{w}{white}}
\newcommand{\nodex}{\vertex{x}{white}}
\newcommand{\nodey}{\vertex{y}{white}}
\newcommand{\nodez}{\vertex{z}{white}}
\newcommand{\noden}{\vertex{n}{cgreen!20}}%
\newcommand{\nodec}{\vertex{c}{white}}%
\begin{tikzpicture}[->,node distance=12mm,n/.style={},epattern/.style={very thick},mred/.style={cred,very thick},mblue/.style={cblue,very thick},mgreen/.style={cdarkgreen,very thick},mpurple/.style={cpurple,very thick,densely dotted},baseline=-17mm]
        \graphboxx{L}{0mm}{-9mm}{35mm}{17mm}{1mm}{-4mm}{
          \context
          \draw [epat] (n) to[thinloop=0,distance=1em] (n);
          \node [nset] (x) at (9mm,0mm) {\nodew\nodex};
          \node [nset] (y) at (9mm,-9mm) {\nodey\nodez};
          \draw [eset] (x) to[bend left=20] (y);
          \draw [eset] (y) to[bend left=20] (x);

        }
        \graphboxx{K}{36mm}{-9mm}{35mm}{17mm}{1mm}{-4mm}{
          \context
          \contextb
        }
        \begin{scope}[opacity=1]        
        \graphboxx{R}{72mm}{-9mm}{35mm}{17mm}{1mm}{-4mm}{
          \context
          \contextb
        }
        \end{scope}
      \end{tikzpicture}
}
	\end{center}
	\begin{center}
		$\tau = $ \scalebox{0.9}{
%
{
\newcommand{\context}{
}%
\newcommand{\nodew}{\vertex{w}{cblue!20}}%
\newcommand{\nodex}{\vertex{x}{cgreen!20}}%
\newcommand{\nodey}{\vertex{y}{corange!20}}%
\newcommand{\nodez}{\vertex{z}{cred!20}}%
\newcommand{\nodec}{\vertex{c}{white}}%
\begin{tikzpicture}[->,node distance=12mm,n/.style={},epattern/.style={very thick},mred/.style={cred,very thick},mblue/.style={cblue,very thick},mgreen/.style={cdarkgreen,very thick},mpurple/.style={cpurple,very thick,densely dotted},baseline=-14mm]
        \graphboxy{L}{0mm}{-9mm}{30mm}{8mm}{2mm}{-4mm}{
          \node [npattern] (x) at (-5mm,0mm) {\nodex};
          \draw [eset] (x) to[thinloop=180,distance=1em] node [right,label,inner sep=0.5mm] {} (x);
          \node [nset] (c) at (5mm,0mm) {\nodec};
          \draw [eset] (c) to[thinloop=0,distance=1em] node [right,label,inner sep=0.5mm] {} (c);
          \draw [eset] (x) to[bend left=20] (c);
          \draw [eset] (c) to[bend left=20] (x);
        }
        \graphboxy{K}{31mm}{-9mm}{18mm}{8mm}{1mm}{-4mm}{
          \node [nset] (c) at (0mm,0mm) {\nodec};
          \draw [eset] (c) to[thinloop=0,distance=1em] node [right,label,inner sep=0.5mm] {} (c);
        }
        \begin{scope}[opacity=1]        
        \graphboxy{R}{50mm}{-9mm}{18mm}{8mm}{1mm}{-4mm}{
          \node [nset] (c) at (0mm,0mm) {\nodec};
          \draw [eset] (c) to[thinloop=0,distance=1em] node [right,label,inner sep=0.5mm] {} (c);
        }
        \end{scope}
      \end{tikzpicture}
}
	\end{center}
	Rule $\rho$ deletes an arbitrary loop, and in doing so, allows arbitrarily many bipartite graph components in the context to duplicate (such components can either be mapped onto node $c$ or onto the right subgraph component). Note that this makes the rule non-deterministic.
	Rule $\tau$ deletes an arbitrary node including incident edges.
	
	Termination of the system can be proven as follows.
	Let $\ARC{\CC} = \MonoC{\CC}$.
	Use the tile set $\TT = \{ \aloopUnlabeled{} \}$ with the weight assignment $\ww(\aloopUnlabeled{}) = 1$. Then $\rho$ is decreasing and $\tau$ is non-increasing, and so it suffices to prove $\tau$ terminating, whose termination is immediate from Lemma~\ref{lemma:deleting:rules:terminate}.
	
	The derivational complexity\footnote{The \emph{derivational complexity} of a terminating term rewrite system is a function $f : \nat \to \nat$ such that $f(n)$ is the length of the longest rewrite sequence for any starting term of size $n$.} of this system is $O(2^mn)$ where $m = {\card{E_G}}$ and $n = {\card{V_G}}$ of the starting graph $G$. This can be seen as follows. In the worst case, $G$ consists of
	\begin{itemize}
		\item one distinguished node $x$ with $m$ loops; and
		\item the remaining $n - 1$ nodes without any incident edges.
	\end{itemize}
	In this case, each of the $m$ edges gives rise to an application of rule $\rho$, and each node forms a bipartite graph component, with the exception of node $x$. Assuming in the worst case that
	\begin{itemize}
		\item all $\rho$-steps precede $\tau$-steps; and
		\item all of the nodes except $x$ are duplicated at every $\rho$-step;
	\end{itemize}
	this scenario gives rise to the following recurrence relation $\mathit{max\_len}(m, n)$ for computing the derivation length:
	\begin{itemize}
		\item $\mathit{max\_len}(0, n) = n$; and
		\item $\mathit{max\_len}(m + 1, n) = {1 + \mathit{max\_len}(m, 2n -1)}$.
	\end{itemize}
	
	Using a routine inductive proof it can be shown that
	\[
	\mathit{max\_len}(m, n) = {2^m(n-1) + m + 1}
	\]
	which is in $O(2^mn)$.
\end{exa}

\begin{exa}
	\label{example:plump:string}
	We consider a system consisting of two DPO rules, given by Plump~{{\cite[Example 3]{plump2018modular}}, \cite[Example 3.8]{plump1995on}}. The equivalent \pbpostrong{} rules (using the standard encoding~\cite[Definition~71]{overbeek2023quasitoposes}) are as follows:
	\begin{center}
		$\rho = $ \scalebox{0.9}{
%
{
\newcommand{\context}{
          \node [nset] (c) at (0mm,-10mm) {\nodectx};
          \draw [eset] (c) to[bend left=15] (x);
          \draw [eset] (x) to[bend left=15] (c);
          \draw [eset] (c) to[bend left=15] (z);
          \draw [eset] (z) to[bend left=15] (c);
          \draw [eset] (c) to[thinloop=-45,distance=1.5em] (c);
          \draw [eset] (x) to[thinloop=-135,distance=1.5em] (x);
          \draw [eset] (z) to[thinloop=-45,distance=1.5em] (z);
          \draw [eset] (x.105) to ++(0mm,3.5mm) -| (z.75);
          \draw [eset] (z.105) to ++(0mm,2mm) -| (x.75);
}%
\newcommand{\nodex}{\vertex{x}{cblue!20}}%
\newcommand{\nodey}{\vertex{y}{cgreen!20}}%
\newcommand{\nodez}{\vertex{z}{corange!20}}%
\newcommand{\nodew}{\vertex{w}{cred!20}}%
\newcommand{\nodectx}{\vertex{c}{white}}%
      \begin{tikzpicture}[->,node distance=12mm,n/.style={},epattern/.style={very thick},mred/.style={cred,very thick},mblue/.style={cblue,very thick},mgreen/.style={cdarkgreen,very thick},mpurple/.style={cpurple,very thick,densely dotted},baseline=-20mm]
        \graphboxx{L}{0mm}{-11mm}{35mm}{26mm}{0.5mm}{-9mm}{
          \node [npattern] (x) at (-10mm,0mm) {\nodex};
          \node [npattern] (y) at (0mm,0mm) {\nodey};
          \node [npattern] (z) at (10mm,0mm) {\nodez};
          \draw [epat] (x) to node [above] {$a$} (y);
          \draw [epat] (y) to node [above] {$b$} (z);
          \context
        }
        \graphboxx{K}{36mm}{-11mm}{35mm}{26mm}{0.5mm}{-9mm}{
          \node [npattern] (x) at (-10mm,0mm) {\nodex};
          \node [npattern] (z) at (10mm,0mm) {\nodez};
          \context
        }
        \begin{scope}[opacity=1]        
        \graphboxx{R}{72mm}{-11mm}{35mm}{26mm}{0.5mm}{-9mm}{
          \node [npattern] (x) at (-10mm,0mm) {\nodex};
          \node [npattern] (y) at (0mm,0mm) {\nodey};
          \node [npattern] (z) at (10mm,0mm) {\nodez};
          \draw [epat] (x) to node [above] {$a$} (y);
          \draw [epat] (y) to node [above] {$c$} (z);
          \context
        }
        \end{scope}
      \end{tikzpicture}
}
	\end{center}
	\begin{center}
		$\tau = $ \scalebox{0.9}{
%
{
\newcommand{\context}{
          \node [nset] (c) at (0mm,-10mm) {\nodectx};
          \draw [eset] (c) to[bend left=15] (x);
          \draw [eset] (x) to[bend left=15] (c);
          \draw [eset] (c) to[bend left=15] (z);
          \draw [eset] (z) to[bend left=15] (c);
          \draw [eset] (c) to[thinloop=-45,distance=1.5em] (c);
          \draw [eset] (x) to[thinloop=-135,distance=1.5em] (x);
          \draw [eset] (z) to[thinloop=-45,distance=1.5em] (z);
          \draw [eset] (x.105) to ++(0mm,3.5mm) -| (z.75);
          \draw [eset] (z.105) to ++(0mm,2mm) -| (x.75);
}%
\newcommand{\nodex}{\vertex{x}{cblue!20}}%
\newcommand{\nodey}{\vertex{y}{cgreen!20}}%
\newcommand{\nodez}{\vertex{z}{corange!20}}%
\newcommand{\nodew}{\vertex{w}{cred!20}}%
\newcommand{\nodectx}{\vertex{c}{white}}%
      \begin{tikzpicture}[->,node distance=12mm,n/.style={},epattern/.style={very thick},mred/.style={cred,very thick},mblue/.style={cblue,very thick},mgreen/.style={cdarkgreen,very thick},mpurple/.style={cpurple,very thick,densely dotted},baseline=-20mm]
        \graphboxx{L}{0mm}{-11mm}{35mm}{26mm}{0.5mm}{-9mm}{
          \node [npattern] (x) at (-10mm,0mm) {\nodex};
          \node [npattern] (y) at (0mm,0mm) {\nodey};
          \node [npattern] (z) at (10mm,0mm) {\nodez};
          \draw [epat] (x) to node [above] {$c$} (y);
          \draw [epat] (y) to node [above] {$d$} (z);
          \context
        }
        \graphboxx{K}{36mm}{-11mm}{35mm}{26mm}{0.5mm}{-9mm}{
          \node [npattern] (x) at (-10mm,0mm) {\nodex};
          \node [npattern] (z) at (10mm,0mm) {\nodez};
          \context
        }
        \begin{scope}[opacity=1]        
        \graphboxx{R}{72mm}{-11mm}{35mm}{26mm}{0.5mm}{-9mm}{
          \node [npattern] (x) at (-10mm,0mm) {\nodex};
          \node [npattern] (y) at (0mm,0mm) {\nodey};
          \node [npattern] (z) at (10mm,0mm) {\nodez};
          \draw [epat] (x) to node [above] {$d$} (y);
          \draw [epat] (y) to node [above] {$b$} (z);
          \context
        }
        \end{scope}
      \end{tikzpicture}
}
	\end{center}
	These are graph transformation versions of the string rewrite rules $ab \to ac$ and $cd \to db$, respectively. Note that both rules specify that node $y$ is not allowed to have any incident edges other than those shown in $L$.
	
	We can use $\TT = \{ L(\rho) \}$ for our tile set, with $\ww(L(\rho)) = 1$ and $\ARC{\CC} = \MonoC{\CC}$. Then rule $\rho$ is clearly decreasing. Rule $\tau$ is non-increasing, because the creation of the $b$-edge cannot create new $L(\rho)$ subobjects, due to $y$ not having any incoming $a$-edges. So $\tau$ is non-increasing and we can remove $\rho$ by relative termination. Next, we can prove $\tau$ terminating using tile 
	{
		\hspace{-3.4mm}
		\newcommand{\nodex}{\vertex{x}{cblue!20}}
		\newcommand{\nodey}{\vertex{y}{cgreen!20}}
		\begin{tikzpicture}[->,node distance=12mm,n/.style={},baseline=-0.8ex]
			\node [npattern] (x) at (-6mm,0mm) {\nodex};
			\node [npattern] (y) at (6mm,0mm) {\nodey};
			\draw [epat] (x) to node [above] {$c$} (y);
		\end{tikzpicture}%
	} with weight $1$ and $\ARC{\CC} = \MonoC{\CC}$, i.e., by counting non-loop $c$-edges.
\end{exa}

We give one example in a hypergraph category that was originally considered by Plump~\cite[Example 6]{plump2018modular}.

\begin{exa}[Hypergraphs]\label{ex:plump:jungle:eval:hypergraph}
	
	Let $\II$ be the category freely generated by:
	\begin{center}
		\begin{tikzcd}[column sep=50, row sep=20]
			& s \arrow[d, "a_s^1" description, shift left=2] \arrow[d, "r_s" description, shift right=2] &                                \\
			+ \arrow[r, "a_+^1" description, shift right=1] \arrow[r, "a_+^2" description, shift right=6] \arrow[r, "r_+" description, shift left=3] & V                                                                                        & 0 \arrow[l, "r_0" description]
		\end{tikzcd}
	\end{center}
	We view $V$ as the type of expression roots. There are three types of hyperedges: $+$, $s$, and $0$, each associated with a root ($r_+$, $r_S$, and $r_0$, respectively). Moreover, $+$ is associated with two (ordered) arguments $a_+^1$ and $a_+^2$, and $s$ with one argument $a_s^1$ ($s$ should be thought of as representing the successor function). The functor category $\functorcat{\II}{\FinSet}$ has hypergraphs as objects and hypergraph homomorphisms as arrows. Moreover, it is a topos (Proposition~\ref{prop:finite:presheaf}).
	
	Not all objects of $\functorcat{\II}{\Set}$ would be considered well-formed under our suggested interpretation of these hypergraphs as expressions (in particular, there could be cycles and parallel hyperedges), but our termination argument below is valid even in the presence of such non-well-formed hypergraphs.
	
	The DPO rules considered by Plump~\cite[Example 6]{plump2018modular} in this category are as follows:
	\begin{center}
		$\rho = $ \scalebox{0.9}{
%
{
\newcommand{\context}{
}%
\newcommand{\nodex}{\vertex{x}{cblue!20}}%
\newcommand{\nodey}{\vertex{y}{cgreen!20}}%
\newcommand{\nodez}{\vertex{z}{corange!20}}%
\newcommand{\nodew}{\vertex{w}{cred!20}}%
\newcommand{\nodectx}{\vertex{c}{white}}%
      \begin{tikzpicture}[->,node distance=12mm,n/.style={},epattern/.style={very thick},mred/.style={cred,very thick},mblue/.style={cblue,very thick},mgreen/.style={cdarkgreen,very thick},mpurple/.style={cpurple,very thick,densely dotted},baseline=-28mm]
        \graphbox{$L$}{0mm}{-11mm}{35mm}{33mm}{2mm}{-5mm}{
          \node [npattern] (x) at (0mm,0mm) {\nodex};
          \node [npattern] (y) at (-8mm,-15mm) {\nodey};
          \node [npattern] (z) at (8mm,-15mm) {\nodez};
          \node [rectangle,draw=black,fill=white] (sx) [below of=x,node distance=7.5mm] {$+$};
          \draw [-] (x) to  (sx);
          \draw [->] (sx.south west) to  (y);
          \draw [->] (sx.south east) to  (z);
          \node [rectangle,draw=black,fill=white] (0) [below of=z,node distance=8mm] {$0$}; 
          \draw [-] (z) to  (0);
          \context
        }
        \graphbox{$K$}{36mm}{-11mm}{35mm}{33mm}{2mm}{-5mm}{
          \node [npattern] (x) at (0mm,0mm) {\nodex};
          \node [npattern] (y) at (-8mm,-15mm) {\nodey};
          \node [npattern] (z) at (8mm,-15mm) {\nodez};
          \node [rectangle,draw=black,fill=white] (0) [below of=z,node distance=8mm] {$0$}; 
          \draw [-] (z) to  (0);
          \context
        }
        \begin{scope}[opacity=1]        
        \graphbox{$R$}{72mm}{-11mm}{35mm}{33mm}{2mm}{-5mm}{
          \node [npattern] (x) at (0mm,0mm) {\nodex\nodey};
          \node [npattern] (z) at (0mm,-15mm) {\nodez};
          \node [rectangle,draw=black,fill=white] (0) [below of=z,node distance=8mm] {$0$}; 
          \draw [-] (z) to  (0);
          \context
        }
        \end{scope}
      \end{tikzpicture}
}
	\end{center}
	\begin{center}
		$\tau = $ \scalebox{0.9}{
%
{
\newcommand{\context}{
}%
\newcommand{\nodex}{\vertex{x}{cblue!20}}%
\newcommand{\nodey}{\vertex{y}{cgreen!20}}%
\newcommand{\nodez}{\vertex{z}{corange!20}}%
\newcommand{\nodew}{\vertex{w}{cred!20}}%
\newcommand{\nodectx}{\vertex{c}{white}}%
      \begin{tikzpicture}[->,node distance=12mm,n/.style={},epattern/.style={very thick},mred/.style={cred,very thick},mblue/.style={cblue,very thick},mgreen/.style={cdarkgreen,very thick},mpurple/.style={cpurple,very thick,densely dotted},baseline=-28mm]
        \graphbox{$L$}{0mm}{-11mm}{35mm}{33mm}{2mm}{-5mm}{
          \node [npattern] (x) at (-8mm,0mm) {\nodex};
          \node [npattern] (y) at (8mm,0mm) {\nodey};
          \node [npattern] (z) at (0mm,-15mm) {\nodez};
          \node [rectangle,draw=black,fill=white] (sx) [below of=x,node distance=7.5mm] {$s$};
          \draw [-] (x) to  (sx);
          \draw [->] (sx.south east) to  (z);
          \node [rectangle,draw=black,fill=white] (sy) [below of=y,node distance=7.5mm] {$s$};
          \draw [-] (y) to  (sy);
          \draw [->] (sy.south west) to  (z);
          \node [rectangle,draw=black,fill=white] (0) [below of=z,node distance=8mm] {$0$};
          \draw [-] (z) to  (0);
          \context
        }
        \graphbox{$K$}{36mm}{-11mm}{35mm}{33mm}{2mm}{-5mm}{
          \node [npattern] (x) at (-8mm,0mm) {\nodex};
          \node [npattern] (y) at (8mm,0mm) {\nodey};
          \node [npattern] (z) at (0mm,-15mm) {\nodez};
          \node [rectangle,draw=black,fill=white] (sx) [below of=x,node distance=7.5mm] {$s$};
          \draw [-] (x) to  (sx);
          \draw [->] (sx.south east) to  (z);
          \node [rectangle,draw=black,fill=white] (0) [below of=z,node distance=8mm] {$0$};
          \draw [-] (z) to  (0);
          \context
        }
        \begin{scope}[opacity=1]        
        \graphbox{$R$}{72mm}{-11mm}{35mm}{33mm}{2mm}{-5mm}{
          \node [npattern] (x) at (-7mm,0mm) {\nodex};
          \node [npattern] (y) at (7mm,0mm) {\nodey};
          \node [npattern] (z) at (-7mm,-15mm) {\nodez};
          \node [npattern,minimum size=5mm] (n) at (7mm,-15mm) {};
          \node [rectangle,draw=black,fill=white] (sx) [below of=x,node distance=7.5mm] {$s$};
          \draw [-] (x) to  (sx);
          \draw [->] (sx) to  (z);
          \node [rectangle,draw=black,fill=white] (0) [below of=z,node distance=8mm] {$0$};
          \draw [-] (z) to  (0);
          \node [rectangle,draw=black,fill=white] (sy) [below of=y,node distance=7.5mm] {$s$};
          \draw [-] (y) to  (sy);
          \draw [->] (sy) to  (n);
          \node [rectangle,draw=black,fill=white] (0b) [below of=n,node distance=8mm] {$0$};
          \draw [-] (n) to  (0b);
          \context
        }
        \end{scope}
      \end{tikzpicture}
}
	\end{center}
	
	They can be encoded into \pbpostrong{} using the standard encoding~\cite[Definition~71]{overbeek2023quasitoposes}. That is, $t_X = (\eta_X : X \regmono T(X))$ for $X \in \{L, K \}$ (with $(\eta, T)$ the partial map classifier of the category), and $R'$ is obtained by pushing out $t_K$ along $r$. Note that because vertex $w$ is fresh in $R$, it does not have any connections in $R'$ other than those shown in $R$.
	
	To prove termination, we first use tile
	\begin{center}
		\scalebox{0.9}{
%
{
\newcommand{\context}{
}%
\newcommand{\nodex}{\vertex{x}{cblue!20}}%
\newcommand{\nodey}{\vertex{y}{cgreen!20}}%
\newcommand{\nodez}{\vertex{z}{corange!20}}%
\newcommand{\nodew}{\vertex{w}{cred!20}}%
\newcommand{\nodectx}{\vertex{c}{white}}%
      \begin{tikzpicture}[->,node distance=12mm,n/.style={},epattern/.style={very thick},mred/.style={cred,very thick},mblue/.style={cblue,very thick},mgreen/.style={cdarkgreen,very thick},mpurple/.style={cpurple,very thick,densely dotted}]
          \node [npattern] (x) at (0mm,0mm) {\nodex};
          \node [npattern] (y) at (-8mm,-15mm) {\nodey};
          \node [npattern] (z) at (8mm,-15mm) {\nodez};
          \node [rectangle,draw=black,fill=white] (sx) [below of=x,node distance=7.5mm] {$+$};
          \draw [-] (x) to  (sx);
          \draw [->] (sx.south west) to  (y);
          \draw [->] (sx.south east) to  (z);
      \end{tikzpicture}
}
}
	\end{center}
	with weight 1 and $\ARC{\CC} = \MonoC{\CC}$. This proves $\rho$ decreasing and $\tau$ non-increasing, so that we can remove rule $\rho$.
	
	We then remove the second rule using tile
	\begin{center}
		\scalebox{0.9}{
%
{
\newcommand{\context}{
}%
\newcommand{\nodex}{\vertex{x}{cblue!20}}%
\newcommand{\nodey}{\vertex{y}{cgreen!20}}%
\newcommand{\nodez}{\vertex{z}{corange!20}}%
\newcommand{\nodew}{\vertex{w}{cred!20}}%
\newcommand{\nodectx}{\vertex{c}{white}}%
      \begin{tikzpicture}[->,node distance=12mm,n/.style={},epattern/.style={very thick},mred/.style={cred,very thick},mblue/.style={cblue,very thick},mgreen/.style={cdarkgreen,very thick},mpurple/.style={cpurple,very thick,densely dotted}]
          \node [npattern] (x) at (-8mm,0mm) {\nodex};
          \node [npattern] (y) at (8mm,0mm) {\nodey};
          \node [npattern] (z) at (0mm,-15mm) {\nodez};
          \node [rectangle,draw=black,fill=white] (sx) [below of=x,node distance=7.5mm] {$s$};
          \draw [-] (x) to  (sx);
          \draw [->] (sx.south east) to  (z);
          \node [rectangle,draw=black,fill=white] (sy) [below of=y,node distance=7.5mm] {$s$};
          \draw [-] (y) to  (sy);
          \draw [->] (sy.south west) to  (z);
      \end{tikzpicture}
}
	\end{center}
	with weight $1$ and $\ARC{\CC} = \MonoC{\CC}$.
	Observe that a node in the host graph with $n$ incoming $s$-edges allows for $n \cdot (n-1)$ embeddings of this tile: this observation relates our argument to that of Plump~{\cite[Example 6]{plump2018modular}}.
\end{exa}

\section{Related Work}
\label{section:related:work}

We consider two closely related approaches by Bruggink et al.~\cite{bruggink2014, bruggink2015} and the approach by Endrullis and Overbeek~\cite{gwtg2024} to be the most relevant to our method.\footnote{The recently updated arXiv version~\cite{bruggink2023proving} of \cite{bruggink2015} corrects some errors in the theory.}
These approaches use weighted type graphs $T$ to measure graphs $G$ by means of counting weighted morphisms $G \to T$ (instead of weighted morphisms $T' \to G$ for tiles $T'$). So the general idea is dual to the one in this paper. Moreover, to our knowledge, these approaches are the only systematic termination methods in the algebraic tradition based on decreasing interpretations.

The methods by Bruggink et al.\ are defined for DPO in the category of edge-labeled multigraphs. The first approach~\cite{bruggink2014}
requires that $l$ and $r$ of DPO rules $L \stackrel{l}{\leftmono} K \stackrel{r}{\mono} R$, and matches $m : L \mono G_L$, are monic. The second approach~\cite{bruggink2015} has no such restrictions.

Because our method is applicable in a much broader setting, our method will prove rules terminating that are outside the scope of the methods by Bruggink et al. Nonetheless, it is interesting to ask how the approaches relate in settings where they are all defined.

On the one hand, although Examples~5 and~6 of~\cite{bruggink2015} are within the scope of our method, our method cannot prove them terminating. The intuitive reason is that the examples terminate because of global properties, rather than local ones. On the other hand, Example~\ref{example:DPO:rule:comparison} below defines a DPO rule that falls inside the scope of all three methods, and only our method can prove it terminating. In conclusion, within the restricted setting, the methods are incomparable.

\begin{exa}
	\label{example:DPO:rule:comparison}
	Consider the following DPO rule $\rho$ in category $\FinGraph$:
	\begin{center}
		\scalebox{0.9}{
%
{
\newcommand{\nodex}{\vertex{x}{cblue!20}}%
\newcommand{\nodey}{\vertex{y}{cgreen!20}}%
\newcommand{\nodez}{\vertex{z}{corange!20}}%
\newcommand{\nodew}{\vertex{w}{cred!20}}%
      \begin{tikzpicture}[->,node distance=12mm,n/.style={},epattern/.style={very thick},mred/.style={cred,very thick},mblue/.style={cblue,very thick},mgreen/.style={cdarkgreen,very thick},mpurple/.style={cpurple,very thick,densely dotted}]
        \graphbox{$L$}{0mm}{0mm}{30mm}{20mm}{1mm}{-5mm}{
          \node [npattern] (x) at (-6mm,-10mm) {\nodex};
          \node [npattern] (y) at (6mm,-10mm) {\nodey};
          \node [npattern] (z) at (6mm,0mm) {\nodez};
          \draw [->] (x) to (y);
          \draw [->] (y) to[bend right=20] (z);
          \draw [->] (z) to[bend right=20] (y);
        }
        \graphbox{$K$}{38mm}{0mm}{30mm}{20mm}{1mm}{-5mm}{
          \node [npattern] (x) at (-6mm,-10mm) {\nodex};
          \node [npattern] (y) at (6mm,-10mm) {\nodey};
          \draw [->] (x) to (y);
        }
        \graphbox{$R$}{76mm}{0mm}{30mm}{20mm}{1mm}{-5mm}{
          \node [npattern] (x) at (-6mm,-10mm) {\nodex};
          \node [npattern] (y) at (6mm,-10mm) {\nodey};
          \node [npattern] (z) at (0mm,0mm) {\nodew};
          \draw [->] (x) to (y);
          \draw [->] (y) to[bend left=0] (z);
          \draw [->] (z) to[bend left=0] (x);
        }
        \draw [->] (37mm,-10mm) to node [above] {$l$} ++(-6mm,0mm);
        \draw [->] (69mm,-10mm) to node [above] {$r$} ++(6mm,0mm);
      \end{tikzpicture}
}
	\end{center}
	and assume matching is required to be monic. This requirement is often used in practice, because monic matching increases the expressiveness of DPO~\cite{habel2001doublerevisited}.
	
	The approach in~\cite{bruggink2014} cannot prove $\rho$ terminating. For establishing termination (on all graphs), the weighted type graph $T$ has to contain a node with a loop (called a \emph{flower node}). The flower node ensures that every graph $G$ can be mapped into $T$. Then, in particular, the technique requires a weight decrease (from $L$ to $R$) for the case that the interface $K$ is mapped onto the flower node. However, this makes $L$ and $R$ indistinguishable for the technique in~\cite{bruggink2014}.
	
	Although matches are required to be monic, the method of~\cite{bruggink2015} overapproximates for unrestricted matches by design.
	Observe that if matching is not monic, then graph $L$ of $\rho$, but with $x$ and $y$ identified, rewrites to itself, meaning $\rho$ is not terminating. As a consequence, the overapproximation of~\cite{bruggink2015} causes it to fail in proving $\rho$ terminating for the monic matching setting. (For the same reason, the method of~\cite{bruggink2015} fails on the simpler top span of Example~\ref{example:fold:edge}, which is a DPO rule, for the monic matching setting.)
	
	Rule $\rho$ can be proven terminating with our method as follows. Encode $\rho$ into \pbpostrong{} using the standard encoding~\cite[Definition~71]{overbeek2023quasitoposes}.). The resulting rule and a termination proof were given in Example~\ref{example:dpo:as:pbpo} above.
\end{exa}

Additional examples by Bruggink et al.\ that our method can prove are Example~4 of \cite{bruggink2014} ($=$ Example~2 of \cite{bruggink2015}), and Example 4 of \cite{bruggink2015}. Additional examples that our method cannot prove are Example~1 and the example of Section~3.5 of~\cite{bruggink2014}. However, unlike the earlier referenced Examples~5 and~6 of~\cite{bruggink2015}, these examples are in reach if our morphism counting technique can take into account antipatterns (Remark~\ref{remark:antipatterns} below), because they terminate because of local properties.

\begin{rem}[Antipatterns]
	\label{remark:antipatterns}
	A rule that matches an isolated node, and adds a loop, cannot be proven terminating with our method. For this, one must be able to count nodes \emph{without} loops (an \emph{antipattern}), which is currently unsupported. We believe that extending our method with support for such antipatterns is a natural first step for significantly strengthening it.
\end{rem}

In \cite{gwtg2024}, we generalize and strengthen the approach of~\cite{bruggink2015} to (a) be applicable in a wider range of categories and (b) be applicable when termination depends on matching restrictions. This enhanced weighted type graph technique can prove termination of Example~\ref{example:DPO:rule:comparison}. However, it still fails for other systems which can be proven terminating by the approach presented in the current paper (for instance Example~6 of \cite{plump2018modular}). Thus, even in settings where both methods are defined, the methods are strictly incomparable.

We discuss some additional related work.
An early systematic termination criterion for hypergraph rewriting with DPO, due to Plump, is based on the concept of forward closures~\cite{plump1995on}. Both of the examples proven terminating with forward closures, Example~3.8 (Example~\ref{example:plump:string} above) and Example~4.1 of~\cite{plump1995on}, can be handled with our method.

More recently, Plump formulated a modularity criterion for hypergraph rewriting using DPO~\cite{plump2018modular}: the union of two terminating systems is terminating if there are no sequential critical pairs. Of this paper, our method can prove three out of four examples: Examples~3 ($=$ Example 3.8 of~\cite{plump1995on}),~5 and~6. The modeling of Example 6 is Example~\ref{ex:plump:jungle:eval:hypergraph} above.
Our method cannot prove Example~4 ($=$ the already discussed Example 5 of \cite{bruggink2014}). It would be interesting to assess the strength of the modularity criterion (especially if generalized to \pbpostrong{}) combined with our method.

Bruggink et al.\ have shown that string rewriting rules are terminating on graphs iff they are terminating on cycles~\cite{bruggink2014}, making cycle rewriting techniques~\cite{zantema2014termination, sabel2017termination} applicable to graph transformation systems consisting of string rewrite rules. Similarly, in another paper~\cite{overbeek2021from} we have shown that particular \pbpostrong{} encodings of linear term rewrite rules are terminating on graphs iff they are terminating on terms.

There also exist a variety of approaches that generalize TRS methods (such as simplification orderings) to acyclic term graphs~\cite{plump1999term,plump97simplification,moser2016kruskal} and drags~\cite{dershowitz2019drags, dershowitz2023drag} (that possibly contain cycles)~\cite{dershowitz2018graph}.

\begin{rem}[Relating Subgraph Counting with TRS Termination Methods]
	An interesting question is how the technique presented in this chapter relates to the many techniques available for term rewriting systems. Zantema~\cite[Chapter 6]{terese} has roughly categorized the available termination techniques for TRSs as follows:
	\begin{itemize}
		\item syntactical methods, which inductively define a well-founded order directly on the terms;
		\item transformational methods, which define non-termination preserving transformations $\Phi$ on rewrite systems $(\Sigma, R)$, such that termination of $(\Sigma, R)$ can be proven by applying other techniques to $\Phi((\Sigma, R))$; and
		\item semantical methods, which interpret terms into some well-founded order using weight functions.
	\end{itemize}
	One of the most well-known syntactical methods for TRSs is the recursive path order (RPO) by Dershowitz~\cite{dershowitz82orderings}. RPO has been adapted for acyclic term graphs and~\cite{plump1999term,plump97simplification,moser2016kruskal} and drags~\cite{dershowitz2018graph}.
	
	Well-known transformational methods for TRSs include dependency pairs~\cite{arts2000termination} and semantic labeling~\cite{zantema1995termination}. As far as we know, no versions for graphs have been proposed.
	
	For the semantical methods, finally, a fundamental result is that a TRS is terminating iff it admits a compatible well-founded monotone algebra~\cite[Theorem 6.2.2.]{terese}. This of course also implies that finding such an algebra is hard. For TRSs, there exist well-known heuristics for finding polynomial interpretations~\cite{cherifa1987termination} and matrix interpretations~\cite{hofbauer2006termination, endrullis2006matrix}. The weighted type graph approach by Bruggink et al.~\cite{bruggink2014, bruggink2015}, discussed above, essentially corresponds to the matrix interpretation method.
	
	Our weighted subgraph counting approach is distinctly semantical, but as far as we can see, it does not correspond directly to an existing method for term rewriting systems. Because an analogous approach would not exploit the inductive term structure, we also do not believe that it would be very powerful in the TRS setting.
\end{rem}

\section{Implementation}
\label{section:implementation}

We have implemented our termination method using Scala 3~\cite{odersky2021programming}. The software package is called \texttt{graphTT}~\cite{overbeek2024graphtt}, and it includes a REPL (read-eval print loop) for exploring relative termination proofs iteratively, where in each step detailed feedback about the proof process is provided (Section~\ref{sec:repl}). Our algorithm is implemented generically for general categories, and one category is included (together with parsing methods for input files): $\FinGraph$, the category of finite edge- and vertex-labeled directed multigraphs. After describing the REPL, we give a brief description of the implementation (Section~\ref{sec:implementation:details}), which can be viewed as an example of computational (applied) category theory (see Remark~\ref{rem:computational:applied:category:theory} below).

\begin{rem}[Related Work]
	\label{rem:computational:applied:category:theory}
	\emph{Computational (applied) category theory}, initiated by Rydeheard and Burstall~\cite{rydeheard1988computational} in 1988, bridges the gap between programming and category theory. More specifically, Rydeheard and Burstall implemented a large body of category theory in ML.
	For algebraic graph transformation in particular, Minas and Schneider~\cite{minas2010graph} proposed a Java implementation in 2010. In more recent years, Brown et al.~\cite{brown2023computationalJLAMP} have implemented algebraic graph rewriting in Julia, as part of the broader AlgebraicJulia\footnote{%
		\url{https://www.algebraicjulia.org}
	}~\cite{halter2020compositional, patterson2021categorical} applied category theory project.
	The work by Brown et al.\ is especially interesting because the code is both general and performant.
\end{rem}

\subsection{REPL}
\label{sec:repl}

We describe the REPL used for exploring relative termination proofs by running through a simple example session.

When the REPL is invoked, it starts in system selection mode. One of the available commands is \texttt{help}, which shows the available commands. The output is as follows:
\begin{Verbatim}[xleftmargin=.5cm]
=== GraphTT REPL ===
>> You are in system selection mode.
>> Type 'help' to view the available commands.
graphTT> help
>> Available commands:
select [n]  : select system n for termination proving
inspect [n] : inspect option (system/tile) n in detail
systems     : list the available systems
help        : print all available commands
exit        : exit the program
\end{Verbatim}
The list of available systems at the time of writing is as follows:
\begin{Verbatim}[xleftmargin=.5cm]
graphTT> systems
>> The following systems were loaded:
(0) multiset_as_graph
(1) delete_loop_and_nonloop
(2) unfold_to_triangle
(3) folding_an_edge
(4) duplicating_bipartite_components
(5) generalized_multiset_as_graph
\end{Verbatim}
When the REPL is started, the available systems are read from relative directory \path{./src/main/resources/labeled/systems}. The included systems correspond to examples provided in this paper. The correspondences are indicated in the included files using comments (\texttt{/*...*/}). For example, the contents of file \texttt{folding\_an\_edge.pbpop} are as follows:
\begin{Verbatim}[xleftmargin=.5cm]
graphTT> inspect 2
SYSTEM: folding_an_edge
/* Example 5.2 of 
Termination of Graph Transformation Systems 
Using Weighted Subgraph Counting
*/

=== rho ===
L  { x:0 -P:0-> y:0 }
L' { x:0 -P:0-> y:0     
	x:0 -XX:0-> x:0
	y:0 -YY:0-> y:0
	x:0 -XY:0-> <-YX:0- y:0 
	x:0 -XC:0-> <-CX:0- c:0
	c:0 -CY:0-> <-YC:0- y:0 
	c:0 -CC:0-> c:0 }
K  { x:0 -P:0-> y:0 }
K' { x:0 -P:0-> y:0     
	x:0 -XX:0-> x:0
	y:0 -YY:0-> y:0
	x:0 -XY:0-> <-YX:0- y:0 
	x:0 -XC:0-> <-CX:0- c:0
	c:0 -CY:0-> <-YC:0- y:0 
	c:0 -CC:0-> c:0 }
R  { x.y:0 -P:0-> x.y:0 }
\end{Verbatim}
The file contains the system of Example~\ref{example:fold:edge}, which consists of a single \pbpostrong{} rule $\rho$ over unlabeled graphs. Elements and their labels are written as \texttt{id:label}. By convention, we use lowercase for vertex identities, uppercase for edge identities, and we label unlabeled elements with \texttt{0}. Edges require two endpoints and an indication of direction, and vertices without incident edges can be listed separately. The rule morphisms are implicitly defined and follow the convention described in Notation~\ref{notation:visual}. For instance, vertices \texttt{x} and \texttt{y} of $K$ are mapped onto vertex \texttt{x.y} of $R$: so \texttt{.} is used to compose identities.

Let us prove system \texttt{folding\_an\_edge} terminating. We do this by selecting it, after which the REPL enters into proof mode:
\begin{Verbatim}[xleftmargin=.5cm]
graphTT> select 3
>> Entering proof mode for system folding_an_edge.
>> The system consists of 1 rule.
>> The system is as follows:

[... printed system as above ...]
\end{Verbatim}
In proof mode, the available commands are as follows:
\begin{Verbatim}[xleftmargin=.5cm]
graphTT> help
>> Available commands:
use [i w c]+ :
use tile i with weight w, and count morphisms of class c, where:
- i and w are integers (and w is positive), and 
- c is a character (r: regular monos, m: monos, h: homomorphisms)
multiple tiles can be specified. 
for example, 'use 3 4 h 5 9 r' uses:
- tile 3 with weight 4 (counting homomorphisms), and
- tile 5 with weight 9 (counting regular monos)
inspect [n] : inspect option (system/tile) n in detail
back : return to system selection mode
help : print all available commands
tiles : list the available tiles
exit : exit the program
\end{Verbatim}
Tiles are loaded from relative directory \path{./src/main/resources/labeled/tiles}. At the time of writing, these tiles are available:
\begin{Verbatim}[xleftmargin=.5cm]
>> The following tiles were loaded:
(0) two_opposing_edges
(1) single_loop
(2) single_node
(3) single_nonloop_edge
(4) a_loop
(5) b_loop
\end{Verbatim}
As Example~\ref{example:fold:edge} suggests, we just need a single unlabeled non-loop edge, which is provided:
\begin{Verbatim}[xleftmargin=.5cm]
graphTT> inspect 3
TILE: single_nonloop_edge
x:0 -XY:0-> y:0
\end{Verbatim}
Example~\ref{example:fold:edge} also suggests that a weight assignment of 1 is sufficient, and that we should count (regular) monos. Note that our implementation allows choosing a class $\ARC{\CC}$ on a tile-by-tile basis (see Remark~\ref{remark:choosing:class:per:tile}). As the description for \texttt{use} suggests in the \texttt{help} output above, we can use tile 3 with weight 1 (and counting monos) by writing \texttt{use 3 1 m}.

Using tile 3 with weight 1 produces the following output (commentary on the output follows afterwards):
\begin{Verbatim}[xleftmargin=.5cm]
graphTT> use 3 1 m

=============== SYSTEM TERMINATION REPORT ===============
---------------          SUMMARY          ---------------

The system has 1 rules, named: rho
Was the sliding successful for every rule? yes
Provably decreasing rules: rho
Provably nonincreasing (but not provably decreasing) rules: 
Possibly increasing rules: 
The pruned system contains rules: 

The pruned system is empty, so the system is TERMINATING.

---------------   DETAILED RULE REPORTS   ---------------

>>>>>>>>>>>>>>> rule rho <<<<<<<<<<<<<<<
Summary:
- The sliding is SUCCESSFUL.
- The weight of Delta is 1.
- The weight of R is 0.
- Conclusion: the rule is PROVABLY DECREASING.

The details per tile for this rule now follow.

~~~ Tile single_nonloop_edge with weight 1, counting MONOS only
x:0 -XY:0-> y:0

- The tiling of R has size:             0
- Giving a weight of:                   0 * 1 = 0
- A largest valid tiling of L has size: 1
- Giving a weight of:                   1 * 1 = 1

Slide data:

# morphisms into R':           10
# of which valid:              5
# iso in R:                    0
# noniso in R:                 5
# number of ways to slide:     1

>> The pruned system is empty!
>> You have proven system folding_an_edge terminating.
>> Returning to system selection mode.
\end{Verbatim}
As can be seen:
\begin{enumerate}
	\item a summary is given on the system level, indicating whether the sliding was successful for every rule, and partitioning the rules according to whether they are provably decreasing, provably nonincreasing (but not provably decreasing), or neither (i.e., possibly increasing);
	\item a summary is given for every rule $\rho$ of the system, indicating, among others, the largest possible weight for $\Delta_\rho$ (Theorem~\ref{thm:termination:by:element:counting}), and the weight of $R$;
	\item for every rule $\rho$ and tile $T$, detailed data is given for the slide analysis. Here, 
	\begin{itemize}
		\item \texttt{morphisms into R'} indicates $\card{\homset{T}{R'}}$;
		\item \texttt{of which valid} indicates $\card{S}$ for $S = \{ t\in \homset{T}{R'} \mid \pb{t}{t_R} \in \ARC{\CC} \}$;
		\item \texttt{iso in R} indicates $\isosub{\homset{T}{R'}}{t_R} \cap S$;
		\item \texttt{noniso in R} indicates $\nisosub{\homset{T}{R'}}{t_R} \cap S$; and
		\item \texttt{ways to slide} indicates the number of possible slid tile sets into $L'$. The number of choices for right inverses do not matter here: it is the number of outcomes.
	\end{itemize}
\end{enumerate}
Moreover, the system is automatically pruned if possible. In this example the pruned system is empty, meaning the system is terminating. If the system is not empty, the process repeats, and the user can select a new set of tiles for the pruned system.

\subsection{Implementation Details}
\label{sec:implementation:details}

The implementation consists of seven packages, named \texttt{categorytheory}, \texttt{labeledgraph}, \texttt{parsing}, \texttt{repl}, \texttt{rewriting}, \texttt{termination}, and \texttt{util}. We comment on packages  \texttt{categorytheory}, \texttt{labeledgraph}, \texttt{rewriting}, and \texttt{termination}.

\subsubsection{Package \texttt{categorytheory}}

\newcommand{\trait}{\scala{trait}}
\newcommand{\interface}{\java{interface}}

Package \texttt{categorytheory} contains general purpose categorical definitions, organized in a highly modular fashion. The library is designed to be reusable for other purposes. Some of the more complex definitions include the notions of a presheaf category and a topos.

Most of the definitions in package \texttt{categorytheory} are defined as \emph{traits}. A Scala trait can be described as a more powerful version of Java's \emph{interface}. A trait can take type parameters and define (abstract or implemented) members. It can also mix in other traits, which is Scala's way of implementing multiple inheritance. The definition of a topos is given in Listing~\ref{code:topos}, for instance, where type parameters \scala{O} and \scala{A} represent the collections of objects and arrows, respectively. All members of \scala{Topos} are inherited from the \scala{Quasitopos} trait, and three default implementations are provided. The \scala{RmAdhesive} trait is empty: mixing it in constitutes a promise that the category is rm-adhesive.

\begin{code}
	\caption{\texttt{Topos.scala}}
	\label{code:topos}
	\begin{minted}{scala}
		trait Topos[O, A] extends Quasitopos[O, A] with RmAdhesive:
			final def isRegularEpic(f: A): Boolean = isEpic(f)
			final def isRegularMonic(f: A): Boolean = isMonic(f)
			final def regularEpiMonoFactorization(f: A): (A, A) =  
				epiRegularMonoFactorization(f)
	\end{minted}
\end{code}

We have also defined a shorthand type name \scala{TerminationCategory} for the intersection of the types listed in Listing~\ref{code:termination:category}. Our termination method implementation expects that the ambient category is a \scala{TerminationCategory}. Note that this intersection is strictly stronger than the requirements listed in the paper: it could be generalized in the future, and the algorithm made more fine-grained, if the use case arises.

\begin{code}
	\caption{\texttt{TerminationCategory.scala}}
	\label{code:termination:category}
	\begin{minted}{scala}
		type TerminationCategory[O, A] = Quasitopos[O, A] & RmAdhesive
	\end{minted}
\end{code}

\subsection{Package \texttt{labeledgraph}}

Package \texttt{labeledgraph} defines labeled graphs and graph homomorphisms as defined in Definition~\ref{def:graph}. It also defines \scala{LabeledGraphCategory} as an instance of the \scala{Topos} trait. Enumerations of homomorphisms are not naively implemented, but also not fully optimized. For our present purposes, this is sufficient.

\subsection{Package \texttt{rewriting}}

Package \texttt{rewriting} contains, among others, the definitions of a \pbpostrong{} rule and step. For computing the step, an immediate question is how morphism $u : K \to G_K$ is to obtained (Definition~\ref{def:pbpostrong:rewrite:step}). The lemma below demonstrates a few things of interest:
\begin{enumerate}
	\item the pullback square ${t_L \circ l} = {l' \circ t_K}$ of the rewrite rule does not need to be used in the construction;
	\item the explicit construction of $u$ can be bypassed; and
	\item no search is needed (for universal morphisms or factorizations): (a) composition of morphisms, (b) construction of pullbacks, (c) construction of pushouts, and (d) verifying that a morphism is an isomorphism are the only required operations.
\end{enumerate}

\begin{lem}[Computing a \pbpostrong{} Rewrite Step]
	\label{lemma:computing:a:pbpoplus:step}
	Let a rule $\rho$ and morphism $\alpha : G_L \to L'$ be given. By using the constructions outlined in the diagram:
	\begin{equation}
		\label{eq:computing:a:rewrite:step:diagram}
		\begin{tikzcd}[column sep=30]
			X\arrow[r, "n" description] \arrow[rd, "\mathrm{PB}(1)", phantom] \arrow[d, "j" description] & G_L \arrow[d, "\alpha" description] & G_K \arrow[l, "g_L" description] \arrow[ld, "\mathrm{PB}(2)", phantom] \arrow[d, "u'" description] & Y\arrow[l, "v" description] \arrow[d, "i" description] \arrow[ld, "\mathrm{PB}(3)", phantom] \arrow[r, "v" description] \arrow[rdd, "\mathrm{PO}(4)", phantom] & G_K \arrow[dd, "g_R" description] \\
			L \arrow[r, "t_L" description]                                                                  & L'                                  & K' \arrow[l, "l'" description]                                                                     & K \arrow[l, "t_K" description] \arrow[d, "r" description]                                                                                                                           &                                   \\
			&                                     &                                                                                                    & R \arrow[r, "w" description]                                                                                                                                                        & G_R                              
		\end{tikzcd}
	\end{equation}
	and by additionally verifying that $j$ is an isomorphism, the result of the \pbpostrong{} rewrite step induced by $\alpha$ is computed.
	Moreover, a \pbpostrong{} rewrite step diagram as given in Definition~\ref{def:pbpostrong:rewrite:step} can be reconstructed by defining the missing morphisms $m: L \to G_L$ and $u : K \to G_K$ as follows:
	\begin{itemize}
		\item ${m} = {n \circ j^{-1}}$; and
		\item ${u} = {v \circ \inverse{i}}$.
	\end{itemize}
\end{lem}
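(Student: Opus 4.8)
The plan is to check the construction of Diagram~\eqref{eq:computing:a:rewrite:step:diagram} against the data required by Definition~\ref{def:pbpostrong:rewrite:step}, component by component, and to show that the single verification ``$j$ is an isomorphism'' captures exactly the strong-matching requirement, after which the remaining morphisms are forced. First I would note that square $\mathrm{PB}(2)$ is literally the pullback defining $G_K$, $g_L$ and $u'$ in the definition, so it requires no comment. Square $\mathrm{PB}(1)$ pulls $\alpha$ back along $t_L$, yielding $X$ with legs $n$ and $j$; applying Proposition~\ref{prop:iso:unique:pb:morphism} with $j$ in the role of the isomorphism shows that, whenever $j$ is iso, the square with legs $m := n \circ j^{-1}$ and $\id{L}$ is again a pullback and $\alpha \circ m = t_L$. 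This is precisely the match square of Definition~\ref{def:pbpostrong:rewrite:step}. Since in \pbpostrong{} the match induced by $\alpha$ is forced to satisfy this strong condition, $j$ being iso is both necessary and sufficient for $\alpha$ to induce a step, which justifies taking it as the sole verification.

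The core of the argument is to show that $i$ is then automatically an isomorphism and that $u := v \circ i^{-1}$ is the morphism $u$ of the definition. I would paste $\mathrm{PB}(3)$ and $\mathrm{PB}(2)$ horizontally along the shared arrow $u'$ and invoke the pullback lemma (Lemma~\ref{lemma:pullback:lemma}): the outer rectangle exhibits $Y$ as the pullback of $G_L \xrightarrow{\alpha} L' \xleftarrow{l' \circ t_K} K$ with legs $g_L \circ v$ and $i$. Rewriting $l' \circ t_K = t_L \circ l$ (the rule's pullback square) and factoring $t_L \circ l$ through $L$, a second application of the pullback lemma identifies this with the pullback of $j : X \to L$ along $l$. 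As $j$ is iso and pullbacks of isomorphisms are isomorphisms, $i$ is iso. Then commutativity of $\mathrm{PB}(3)$ gives $u' \circ (v \circ i^{-1}) = t_K \circ i \circ i^{-1} = t_K$; and factoring the cone $(g_L \circ v,\, l \circ i)$ through $X$ (it commutes because $\alpha \circ g_L \circ v = l' \circ u' \circ v = l' \circ t_K \circ i = t_L \circ l \circ i$) yields $g_L \circ (v \circ i^{-1}) = m \circ l$. These are exactly the two equations that determine $u$ uniquely via the pullback property of $G_K$, so $u = v \circ i^{-1}$.

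Finally I would compare the pushouts. Definition~\ref{def:pbpostrong:rewrite:step} pushes out $G_K \xleftarrow{u} K \xrightarrow{r} R$, whereas $\mathrm{PO}(4)$ pushes out $G_K \xleftarrow{v} Y \xrightarrow{r \circ i} R$. Precomposing the latter span with the isomorphism $i^{-1} : K \to Y$ produces the former span exactly (using $v \circ i^{-1} = u$ and $r \circ i \circ i^{-1} = r$), so the two spans are isomorphic and share one and the same pushout object $G_R$ with coprojections $g_R$ and $w$. Hence $\mathrm{PO}(4)$ computes the correct result, and $m = n \circ j^{-1}$, $u = v \circ i^{-1}$ reassemble the full diagram of Definition~\ref{def:pbpostrong:rewrite:step}.

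The hard part will be the bookkeeping in the double pasting that forces $i$ to be iso: one must track which projection of each composite pullback is being identified, so that the isomorphism witnessing $Y \cong K$ is genuinely $i$ itself, and so that the factorization of $(g_L \circ v,\, l \circ i)$ through $X$ really delivers $g_L \circ v = m \circ l \circ i$ rather than some weaker cone equation. Once these identifications are pinned down, everything else is a routine diagram chase.
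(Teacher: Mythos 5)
Your proposal is correct and follows the same overall skeleton as the paper's proof: establish the strong match from $j$ being iso, show $i$ is iso so that $u = v \circ \inverse{i}$ is the unique morphism with $t_K = u' \circ u$, and observe that $\mathrm{PO}(4)$ pushes out a span that differs from $(u, r)$ only by precomposition with an isomorphism. The one genuine difference is how you obtain that $i$ is an isomorphism. The paper simply cites~\cite[Lemma 15]{overbeek2023quasitoposes} for this fact (the same lemma that guarantees uniqueness of $u$), whereas you derive it from scratch: pasting $\mathrm{PB}(3)$ onto $\mathrm{PB}(2)$ exhibits $Y$ as the pullback of $\alpha$ along $l' \circ t_K = t_L \circ l$, and a second pasting through $X$ identifies $i$ (up to the canonical comparison iso) with the pullback of $j$ along $l$, which is iso because $j$ is. This is a correct and more self-contained route; what it buys is that the entire lemma is verified from the diagram alone, at the cost of the bookkeeping you yourself flag. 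Your additional verification that $g_L \circ u = m \circ l$ via factoring the cone $(g_L \circ v,\, l \circ i)$ through $X$ is also correct, though not strictly needed for the conditions the paper checks (it only uses $t_K = u' \circ u$ together with uniqueness). Everything else --- the use of Proposition~\ref{prop:iso:unique:pb:morphism} for the match square, and the isomorphic-spans argument for the pushout (the paper phrases it as the dual of the pullback lemma applied to the square with $\inverse{i}$ on top) --- matches the paper's argument.
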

\begin{proof}
	If $j$ is an isomorphism, it is easy to verify that $\alpha$ defines a strong match for match morphism ${m = {n \circ j^{-1}}}: L \to G_L$. Moreover, morphism $i$ is known to be an isomorphism~\cite[Lemma 15]{overbeek2023quasitoposes}. That the correct pushout is computed in the last step can then be observed as follows.  First, using $\mathrm{PB}(3)$ we have $t_K = {t_K \circ i \circ \inverse{i}} = {u' \circ v \circ  \inverse{i}}$, which means that $v \circ \inverse{i}$ is the unique $u : K \to G_K$ for which $t_K = {u'\circ u}$~\cite[Lemma 15]{overbeek2023quasitoposes}. Then in diagram
	\begin{center}
		\begin{tikzcd}[column sep=40]
			K \arrow[r, "\inverse{i}" description, two heads, tail] \arrow[rr, "u" description, bend left=29] \arrow[dd, "r" description] \arrow[rdd, "\mathrm{PO}", phantom] & Y \arrow[r, "v" description] \arrow[d, "i" description, two heads, tail] \arrow[rdd, "\mathrm{PO}", phantom] & G_K \arrow[dd, "g_R" description] \\
			& K \arrow[d, "r" description]                                                                                     &                                   \\
			R \arrow[r, equals]                                                                                                                                  & R \arrow[r, "w" description]                                                                                     & G_R                              
		\end{tikzcd}
	\end{center}
	it is easy to verify that the left square is a pushout. By the dual of the pullback lemma, the pushout of the right square is also the pushout of the outer square.
\end{proof}

The implementation of the construction described in Lemma~\ref{lemma:computing:a:pbpoplus:step}, utilizing the library of \texttt{categorytheory} library, is given in Listing~\ref{code:pbpostrong:step}. Given an adherence $\alpha: G_L \to L'$, it computes the match $m : L \to G_L$ and the result of the rewrite step $G_R$. Functions \scala{maybePullback} and \scala{maybePushout} are \scala{Option} types: they can return \scala{None} in categories where not all pullbacks and pushouts are defined. 

\begin{code}
	\caption{\texttt{PbpoPlusRule.scala}}
	\label{code:pbpostrong:step}
	\begin{minted}[xleftmargin=0.028\textwidth]{scala}
		def adherenceInducedStep(alpha: A): Option[(O, A)] =
			require(codomain(alpha) == L1, "the codomain of alpha needs to be L'")
			for {
				pb0 <- maybePullback(Cospan(tL, alpha))
				if isIso(pb0.left)
				m = pb0.right o inverseOf(pb0.left).get
				pb1 <- maybePullback(Cospan(alpha, l1))
				u1 = pb1.right
				pb2 <- maybePullback(Cospan(u1, tK))
				u = pb2.left
				iso = pb2.right
				po <- maybePushout(Span(r o iso, u))
				gR = po.sink
			} yield (gR, m)
	\end{minted}
\end{code}

\begin{rem}[Decidability of the Rewrite Relation]
	Given a \pbpostrong{} rule $\rho$ in an arbitrary category $\CC$, from Lemma~\ref{lemma:computing:a:pbpoplus:step} it follows that the following conditions are sufficient for deciding whether $\pbpostrongsteprule{A}{B}{\rho}$ for objects $A, B \in \obj{\CC}$:
	\begin{enumerate}
		\item for all $X, Y \in \obj{\CC}$, the hom-set $\homset{X}{Y}$ is finite; and
		\item pushouts, pullbacks, and morphism compositions are computable.
	\end{enumerate}
	More specifically, one can use the construction given in Lemma~\ref{lemma:computing:a:pbpoplus:step}, where the conditions ensure that
	\begin{itemize}
		\item all adherence morphism candidates $\alpha \in \homset{A}{L'}$ can be exhaustively considered;
		\item verifying whether a morphism $f : X \to Y$ is an isomorphism reduces to a finite search for a morphism $f' \in \homset{Y}{X}$ that satisfies ${f \circ f'} = \id{Y}$ and ${f' \circ f} = \id{X}$;
		\item verifying whether $B$ qualifies as a pushout object in PO(4) of Diagram~\eqref{eq:computing:a:rewrite:step:diagram} reduces to a finite search for an isomorphism in $\homset{G_R}{B}$.
	\end{itemize}
	
	For the category of finite sets $\FinSet$, the sufficient conditions hold. The same is true for functor categories $\functorcat{I}{\FinSet}$ for which the index category $I$ is finite, which includes many graph categories of interest. This is because here the constructions are computed pointwise in $\FinSet$.
\end{rem}

\subsection{Package \texttt{termination}}

Package \texttt{termination}, finally, contains the implementation of the termination method, currently implemented at the level of rm-adhesive quasitoposes. Listing~\ref{code:termination} contains a code fragment that illustrates how the implementation utilizes the \texttt{categorytheory} library. When given a morphism 
$f : T \to R'$ (parameter \texttt{tiling}) rule $\rho$ (parameter \texttt{rule}), it computes set $\{ l' \circ \pb{f}{r'} \circ g \mid \text{$g$ is a right inverse for $\pb{r'}{f}$} \}$. Then, based on the selected counting class (either $\HomC{\CC}$, $\MonoC{\CC}$, or $\RegC{\CC}$, represented by parameter \texttt{countingClass}), it verifies part of the conditions of Theorem~\ref{thm:xi:is:injection}, also utilizing Proposition~\ref{prop:preserving:a:factorization:and:factorization:systems}.

\begin{code}
	\caption{\texttt{Termination.scala}}
	\label{code:termination}
	\begin{minted}[xleftmargin=0.005\textwidth]{scala}
		def slideOptions[O, A](tiling: A, rho: PbpoPlusRule[O, A], 
		countingClass: CountingClass)(implicit category: TerminationCategory[O, A]): Set[A] =
			import category.{codomain, pullback, rightInversesFor, image, ArrowOps}
			require(codomain(tiling) == rho.R1.get)
			
			/**
			* L1 <-l1- K1 -r1-> R1
			*          ^        ^
			*          |        |
			*          |   PB  tiling
			*          |        |
			*          X -----> T
			*/
		
			val `K1 <- X -> T` = pullback(Cospan(rho.r1.get, tiling))
			val `K1 <- X`: A = `K1 <- X -> T`.left
			val `X -> T`: A = `K1 <- X -> T`.right
			val `{K1 <- X <- T}`: Set[A] = 
			rightInversesFor(`X -> T`).map(`K1 <- X` o _)
			
			val inClass = countingClass.predicate
			val secondFactor = countingClass.secondFactor
			
			`{K1 <- X <- T}`.view.filter(f => inClass(rho.l1 o secondFactor(f)))
			.map(rho.l1 o _).toSet
	\end{minted}
\end{code}

\section{Conclusion and Future Work}
\label{section:conclusion}

We have introduced a termination method for graph transformation systems that can be utilized across frameworks, and which is defined in a broad array of categories. Our examples and comparisons with related work show that the method adds considerable value to the study of termination for graph transformation. We have also implemented the method in Scala, together with a REPL with which one can explore relative termination proofs iteratively.

Future work for strengthening the method includes solving the issues raised related to rule equivalence (Example~\ref{example:dpo:as:pbpo}) and antipatterns (Remark~\ref{remark:antipatterns}).
Methods for finding $\TT$, if it exists, and identifying useful sufficient conditions for the non-existence of $\TT$, would also be very useful. A possible metatheoretical direction for future research includes the question posed regarding \pbpoadhesivity{} (Remark~\ref{remark:pbpo:adhesive:question}).

\section*{Acknowledgment}

We thank the anonymous reviewers for their helpful suggestions. We also thank Nicolas Behr and Femke van Raamsdonk for some specific commentary on the PhD thesis of the first author that indirectly led to improvements in this paper. Both authors received funding from the Netherlands Organization for Scientific Research (NWO) under the Innovational Research Incentives Scheme Vidi (project.\ No.\ VI.Vidi.192.004). 

\bibliographystyle{alphaurl}
\bibliography{main}

\end{document}